\documentclass[11pt]{article}

\usepackage[a4paper,margin=1in]{geometry}
\usepackage{amsmath,amssymb,amsthm,mathtools}
\usepackage{bm}
\usepackage{graphicx}
\usepackage{booktabs}
\usepackage{microtype}
\usepackage{tikz}
\usetikzlibrary{arrows.meta,positioning,calc}
\usepackage[colorlinks=true,linkcolor=blue!55!black,citecolor=blue!55!black]{hyperref}
\usepackage{subcaption}
\usepackage{orcidlink}
\usepackage{authblk}

\DeclareMathOperator{\ri}{ri}
\DeclareMathOperator*{\argmax}{arg\,max}
\DeclareMathOperator*{\argmin}{arg\,min}

\newcommand{\R}{\mathbb{R}}
\newcommand{\E}{\mathbb{E}}
\newcommand{\F}{\mathcal{F}}
\newcommand{\one}{\mathbf{1}}
\newcommand{\Ni}{\mathcal{N}_i}
\newcommand{\Nip}{\mathcal{N}_i'}
\newcommand{\Dkl}[2]{D_{\mathrm{KL}}\!\left(#1\,\middle\|\,#2\right)}

\theoremstyle{plain}
\newtheorem{theorem}{Theorem}[section]
\newtheorem{proposition}[theorem]{Proposition}
\newtheorem{lemma}[theorem]{Lemma}
\newtheorem{corollary}[theorem]{Corollary}
\theoremstyle{definition}
\newtheorem{definition}[theorem]{Definition}
\newtheorem{assumption}{Assumption}
\theoremstyle{remark}
\newtheorem{remark}[theorem]{Remark}

\makeatletter
\renewenvironment{proof}[1][\proofname]{\par
    \pushQED{\qed}%
    \normalfont \topsep6\p@\@plus6\p@\relax
    \trivlist
    \item[\hskip\labelsep
        \bfseries    
    #1\@addpunct{.}]\ignorespaces
}{%
  \popQED\endtrivlist\@endpefalse
}
\makeatother

\title{\bfseries Contagious Misinterpretations: Social Learning with Shared Interpretations}

\author[1]{Pol Rodríguez Farrés\, \orcidlink{0009-0005-3877-9720}\thanks{Contact: \href{mailto:pol.rodriguez@iiia.csic.es}{\texttt{pol.rodriguez@iiia.csic.es}}}}
\author[2]{Jesus Cerquides}

\affil[1]{IIIA-CSIC, Bellaterra, Spain}
\affil[2]{CSIC, Madrid, Spain}

\date{\today}

\begin{document}

\maketitle

\begin{abstract}
When people discuss which of several hypotheses is true, they share not only their beliefs but also their interpretation of each one, that is, the distribution of signals each hypothesis is expected to produce. This interpretation is an opinion about what each hypothesis means and what evidence it predicts, and it is revised only through conversation, not by confronting it with the observed data. Standard non-Bayesian social learning captures only belief-sharing and keeps the likelihoods private. We study what happens when agents share both beliefs and likelihoods under static networks. We show that interpretations converge geometrically to a common reading shaped by the most influential agents. Then, beliefs concentrate, almost surely and at an exponential rate, on the hypothesis that, under this consensus interpretation, best explains the signals privately observed by all agents taken together.

As a result, sharing interpretations can lead a population to adopt, unanimously and confidently, a hypothesis it would have rejected if it had shared only beliefs. Whether this helps or hurts when a true hypothesis exists depends on how accurately agents initially read the data they observe. When every agent initially interprets the true hypothesis exactly as the data it would observe, sharing interpretations cannot strictly improve learning and can strictly harm it when agents observe different data distributions. When interpretations are logically coherent and some agents misread their own data, sharing can either rescue the population, leading it to learn the true hypothesis, if an accurate reading dominates the interpretation network, or corrupt it, leading it to a wrong hypothesis, if a misreading one does.
\end{abstract}

\textbf{Keywords:} Non-Bayesian social learning, misspecification, consensus, misinterpretation, opinions

\section{Introduction}\label{sec:intro}

Communication within real communities conveys more than conclusions or beliefs. Scientists and experts share not only their assessments of which hypotheses are correct, but also the methods, models, and interpretive frameworks through which evidence is understood. The same holds on social platforms. Users pass along not only endorsements of claims, through likes, shares, or expressions of agreement, but also interpretations of those claims: the commentary attached to a forwarded post, the explanation of why a given piece of evidence supports a conclusion, a community note, a quoted repost that reframes the material. Exposure to such communication can therefore affect an agent in two distinct ways. It can change the agent's assessment of which hypothesis is true, but it can also change the agent's interpretation of the evidence itself, that is, how likely a given observation is under each hypothesis.

Standard models of non-Bayesian social learning capture only the first of these. In them, agents exchange beliefs over a network and combine what they hear with their own observations, but the rule that turns an observation into support for a hypothesis stays private, i.e., no one communicates how they interpret the evidence, which is especially important when the population receives the same data. This paper asks what happens when interpretations are shared as well.

We study a population of agents learning among finitely many hypotheses on the basis of private, noisy signals. Each agent carries two things at once, a belief about which hypothesis is most likely to be true, and an interpretation of each hypothesis, that is, a reading of how probable each possible observation would be under each hypothesis. In every period, agents pool their neighbors' beliefs over one network and their neighbors' interpretations over a second, and each agent then performs a single Bayesian update using its freshly pooled interpretation. A person may adopt someone's framing while rejecting their conclusions, or the reverse. The two networks therefore need not coincide, and the influence a given agent exerts in one need not match the influence it exerts in the other. A key simplification is that interpretations are updated exclusively through conversation, never by confronting them with data. This isolates the effect of socially transmitted readings, but it also implies that, in our model, no amount of evidence can change the population's expectations about what each hypothesis predicts.

The analysis is organized around four findings. First, under our assumptions, we show that interpretations reach a deterministic consensus. Because the interpretation update step never consults a signal or a belief, its dynamics are decoupled from them, and it turns out to be exactly solvable. Moreover, every agent's interpretation model converges geometrically to a single shared reading of the evidence, which will be close to the initial readings of the most influential agents in that network. Second, we show that beliefs concentrate on the hypotheses that best explain the global data observed under this shared reading, where the global data weights each agent's observed data by its influence in the belief network. The main result shows that every agent eventually rules out every suboptimal hypothesis almost surely and at an exponential rate, even when no hypothesis reproduces anyone's signals exactly.

The remaining two findings concern how these pieces relate. Third, the two networks turn out to act on different objects. What the population comes to mean by each hypothesis is fixed entirely by the interpretation channel, i.e. the starting interpretations and who is influential in spreading them. Whose individual data are weighed against those meanings is fixed entirely by the belief side and the influential agents in that channel. The fourth finding shows that sharing interpretations can reverse the outcome of belief-only learning and can explain dynamics that cannot be explained with standard social learning models. There are populations that end up believing, unanimously and quickly, a hypothesis that the very same population, pooling beliefs over the very same network but keeping its interpretations private, would have unanimously rejected. Then, when people share their interpretations of the evidence and hypotheses, the resulting population dynamics of beliefs can differ substantially from those that arise when people share only their beliefs, either helping or harming agents’ learning of the truth.

These results let us take a position on the two questions behind the model: \emph{What does the group converge to, and what does it depend on?}
\emph{Does sharing interpretations help or hurt?}

We address these questions across settings with and without a true hypothesis, under homogeneous and heterogeneous signal distributions, and with either a correctly specified population or one in which, some or all agents, initially misinterpret their observations.

\paragraph{A running example.}
Apart from the theoretical results, we carry one concrete scenario through the whole paper. It is a deliberate simplification, meant only to make the model and its conclusions easier to follow. The example takes place during an advanced vaccination campaign, where a population tries to decide whether the vaccine is effective or not by observing whether severe hospitalizations occur among vaccinated or unvaccinated individuals. During the decision process, people keep observing hospitalizations and share with others two very different kinds of thing. On the one hand, they share their belief in the effectiveness of the vaccine, for example, through conversations with family, colleagues, and clinicians. On the other hand, they share their interpretation of each hypothesis, i.e., what they would expect hospitals to look like if the vaccine works, and what they would expect if it does not. 

Whether the vaccine is effective has an objective answer. Since vaccination coverage is known, an effective vaccine is one that leaves vaccinated people with fewer severe cases than their share of the population. But people do not hold this definition in mind when they look at the news, they hold a picture. For some, an effective vaccine means vaccinated patients almost never reach hospital, and anything less is a disappointment. For others, a clear reduction is enough. These pictures are personal standards, not facts, and none of them is the correct one. Yet they shape how evidence is interpreted: people favor the hypothesis that best matches their mental representation of the data. As a result, someone with a very demanding standard may look at objectively effective evidence, a vaccine that clearly protects but not perfectly, and conclude that it does not work. Because these standards are opinions, they also travel. People soften or harden them by listening to others, mostly through the media and social platforms. These interpretations may be communicated through the media and social platforms. The voice that shapes how you read the evidence need not be the voice whose verdict you repeat, and a framing that travels widely online can reach far more people than its author ever persuades in person. Each idea presented in the article is illustrated in this simple scenario, where the full setup is presented in Section~\ref{subsec:example}.

\section{Related work}

Our model sits at the meeting point of several literatures.

\paragraph{Social learning and distributed hypothesis testing.}

A large literature studies how agents connected through a network communicate and aggregate information, answering questions such as what the population ends up believing, whether it reaches consensus, and how fast. In economics this is studied under the name of social learning. The most basic model is DeGroot's \cite{degroot1974reaching}, in which agents update beliefs by iteratively averaging with their neighbors. The Friedkin--Johnsen model \cite{friedkin1990social} extends it by letting agents retain a fixed attachment to their initial opinion. Some of these models incorporate signals only through agents' initial beliefs \cite{golub2010naive,demarzo2003persuasion}, while others let agents receive signals continuously during the decision process. A parallel line of work studies perfectly rational agents, known as Bayesian social learning \cite{gale2003bayesian,acemoglu2011bayesian}, in which agents perform exact inference over both their own signals and what their neighbors' actions or messages reveal about the signals those neighbors saw. This is informationally demanding, analytically difficult, and not a realistic model of how people actually reason. Non-Bayesian social learning bridges the two approaches mentioned. For example, \cite{jadbabaie2012non} combined DeGroot-style belief aggregation with Bayesian updates on fresh private signals at each time period, and \cite{molavi2018theory} provided behavioral foundations for this class of rules. They postulate imperfect recall, under which agents treat their neighbors' current beliefs as sufficient statistics for the whole history of their observations, and show that, together with three further restrictions on how agents process their neighbors' information, the social learning rule must take a log-linear form. In other words, log-linear learning is the rule that departs from Bayesian rationality only through imperfect recall. This last work shapes the belief update in this article \eqref{eq:S2},\eqref{eq:S3}, where agents listen to their neighbors' last period beliefs and observe a private signal at each time period.

The same updates were developed in the engineering literature under the name distributed hypothesis testing, in which the question asked changes. Rather than asking what the population believes in the long run, this literature asks how fast beliefs concentrate and how reliably they concentrate correctly, and it proves sharp, non-asymptotic rates for both \cite{shahrampour2015distributed,lalitha2018social}. \cite{rahimian2017bayesian} study a related variant in which agents cannot recall past communication and must treat their current belief as a full summary of it. These works commonly treat each agent's likelihood (how they read evidence as support for one hypothesis or another) as private and well-specified, i.e. they expect to see a distribution of signals under the true hypothesis exactly as the distribution of data they will observe. Our model departs here by both considering misspecification relative to the true signal distribution and by treating the likelihood itself as an opinion, something agents can also communicate and revise, alongside their beliefs.

\paragraph{Misspecification with a fixed model.}

An agent is misspecified when it is simply wrong, i.e. none of the likelihoods associated with the hypotheses matches the true distribution of data observed. A line of work asks what misspecification does to beliefs when the wrong model is held fixed forever. \cite{nedic2017fast} prove, for time-varying undirected graphs with doubly-stochastic weights, that agents still concentrate on the hypotheses that best explain the observations of the whole network, at a geometric and non-asymptotic rate, even when no hypothesis matches any agent's data. Our convergence analysis follows the structure of theirs, adapted to static, possibly directed, row-stochastic networks, where each agent is weighted by its stationary influence as in \cite{lalitha2018social}. Berk-Nash equilibrium \cite{esponda2016berk} characterizes where individual learning settles under such a fixed, wrong model, and this kind of misspecification has been shown to produce persistent errors and self-confirming belief traps even in otherwise sufficiently informative environments \cite{bohren2016informational,heidhues2018unrealistic}. \cite{acemoglu2016fragility} show that even a vanishing amount of uncertainty about how evidence should be read can stop beliefs from ever fully agreeing. In a social setting, agents who each read the same signals through different fixed models can fail to converge to the truth at all \cite{frick2020misinterpreting,bohren2021learning}, and the same mechanism helps explain the spread of misinformation, where more sophisticated agents can be more, not less, vulnerable to systematically misreading evidence \cite{mostagir2022learning}. What all of this work has in common is that each agent's likelihood is static.

\paragraph{Uncertain models: likelihoods disciplined by data.}

A second line instead lets the model itself move. Agents estimate their likelihoods from finite data alongside the belief dynamics \cite{hare2020non,hare2021general}. Here misspecification is transient rather than permanent, since the model is disciplined by evidence and can in principle be corrected by it.

Again, our setting differs from both branches in the same way. Rather than a likelihood that is fixed forever or one that is learned from data, we let likelihoods be socially aggregated, exchanged between agents as opinions and interpretations of the world and never confronted with data directly. This separation from the data makes sense when agents know neither the true distribution of the signals nor the true hypothesis, and therefore cannot determine which hypothesis's reading should be drawn toward the observed data. With this convention, we isolate the pure effect of socially transmitted readings, a case that neither a fixed nor a data-driven model of the likelihood can capture, and we study it in both the well-specified and misspecified versions of the underlying environment.

\paragraph{Opinion pooling.}

Finally, we model the likelihood-aggregation step \eqref{eq:S1} as a logarithmic opinion pool, one of the two canonical rules for combining probability distributions, alongside the linear pool introduced by Stone \cite{stone1961opinion} and surveyed in \cite{dietrich2016probabilistic}. Its axiomatic and external-Bayesian properties were developed by \cite{genest1986combining, genest1986characterization}. What is new here is running two such log-linear aggregation processes at once, one on beliefs and the other on likelihoods, over two possibly distinct networks. The question then becomes how their separate influence structures jointly determine what the population ends up learning.

\paragraph{Roadmap.}
Section \ref{sec:model} presents mathematical notation and the model: the agents, signals, the two networks (beliefs and interpretations), and the dual-channel update rule. Sections~\ref{sec:likelihood}-\ref{sec:characterizing_population} solve the long-run dynamics of the update. We first show that interpretations reach a deterministic consensus for all agents (Section~\ref{sec:likelihood}), then show that beliefs concentrate almost surely and exponentially on the hypothesis that best explains the global and heterogeneous data under that consensus interpretation (Section~\ref{sec:beliefs}), and finally show that the outcome can be reduced to the inference of a single effective agent, allowing a geometric visualization of the learning (Section~\ref{sec:characterizing_population}). Section~\ref{sec:comparison} then asks whether sharing interpretations helps or hurts, comparing the model against the benchmark model where only beliefs are shared: Section~\ref{subsec:misspecified_benchmark} establishes that sharing interpretations can overturn belief-only learning when no true hypothesis exists; Section~\ref{subsec:wellspecified} shows that in the well-specified case, with a true hypothesis, sharing is weakly dominated and strictly harmful under heterogeneous observed data across agents; and Section~\ref{subsec:coherent_sharing_good_or_bad} considers the case where all agents have coherent interpretations with the definition of each hypothesis and the world, and a ground truth exists. In this scenario, when part or all the population misreads its own data, we show that sharing interpretations can either rescue the whole population to the correct hypothesis or corrupt it to a wrong one. Section~\ref{sec:discussion} discusses implications and future work, and Section~\ref{sec:conclusions} concludes.

\section{Model}\label{sec:model}

\subsection{Notation and Preliminaries}\label{subsec:notation}

We first define some notation and definitions used in the following sections. 

For a finite set $X$, let $\Delta(X)$ denote the probability simplex on $X$ and $\ri\Delta(X)$ its relative interior (the full-support distributions). For $p\in\Delta(X)$ and $q\in\ri\Delta(X)$, the Kullback-Leibler (KL) divergence is

\begin{equation}
    \Dkl{p}{q}=\sum_{x\in X}p(x)\log\frac{p(x)}{q(x)}\in[0,\infty),
\end{equation}

it is finite whenever $q$ has full support. Throughout, KL divergence plays the role of a (directed) measure of fit between a data-generating law and a model: $\Dkl{p}{q}$ is small precisely when $q$ is a good explanation of data drawn from $p$.

We write $\one\in\R^n$ for the all-ones vector and $\delta_{ij}$ for the Kronecker delta. For matrices $M,M'\in\R^{n\times n}$, $M\circ M'$ is the Hadamard (entrywise) product. 

We use the $\ell_\infty$ norm $\|v\|_\infty:=\max_i|v_i|$ for $v\in\R^n$ and $\|M\|_\infty:=\max_i\sum_j|M_{ij}|$ (the induced maximum absolute row-sum norm) for $M\in\R^{n\times n}$.

We use one standard consequence of the Perron-Frobenius theorem, proved in Appendix~\ref{app:pf}. If $M\in\R^{n\times n}$ is row-stochastic and primitive, it has a unique stationary distribution $\pi_M\in\ri\Delta(N)$ with $\pi_M^{\top}M=\pi_M^{\top}$, and $M^t\to\one\pi_M^{\top}$ geometrically, at a rate governed by the second-largest eigenvalue modulus $\rho_M<1$. Precisely, for a nonnegative sequence $(X_t)_{t\ge0}$ we write $X_t=O(\rho_M^{\,t})$ to mean that for every $\hat\rho\in(\rho_M,1)$ there is a constant $C_{\hat\rho}<\infty$ with $X_t\le C_{\hat\rho}\,\hat\rho^{\,t}$ for all $t\ge0$, and that $\hat\rho=\rho_M$ is also admissible whenever $M$ is diagonalizable. With this convention $\|M^t-\one\pi_M^{\top}\|_\infty=O(\rho_M^{\,t})$ (Lemma~\ref{lem:pf-rate}), and in particular $\sum_{t\ge0}\|M^t-\one\pi_M^{\top}\|_\infty<\infty$. Every geometric rate below is understood in this sense.

\subsection{Agents, hypotheses, and signals}

We consider a network of $n$ agents, with node set $N=\{1,\dots,n\}$, that learn a finite set of unknown hypotheses $\Theta=\{\theta_1,\dots,\theta_m\}$, $m\ge 2$. At each discrete time $t\ge 1$, agent $i$ privately observes a signal $\omega_{it}\in S$, where $S$ is a finite and common signal set for all agents. The signals are independent across agents and i.i.d. across time, and each agent $i$ draws them from an unknown signal-generating distribution $\omega_{it}\sim f_i$, where $f_i\in\Delta(S)$. We write $\F_t:=\sigma(\omega_{is}:i\in N,\ s\le t)$ for the natural filtration, with $\F_0$ trivial.

Each agent $i$ carries two objects at every time $t$:

\begin{itemize}
    \item a \emph{belief} $\mu_{it}\in\Delta(\Theta)$, its current assessment of the hypotheses;
    \item a \emph{likelihood model} (also referred to as its interpretation) $l_{it}(\cdot\mid\theta)\in\Delta(S)$ for each $\theta\in\Theta$, its assessment of how probable each signal would be if $\theta$ were true, the agent's interpretation of the hypothesis.
\end{itemize}

As a conditional law of the signal given the hypothesis, the likelihood model satisfies $\sum_{s\in S} l_{it}(s\mid\theta)=1$ for every $\theta$ and $t$. Similarly, beliefs are a probability distribution over $\Theta$, so they satisfy $\sum_{\theta\in\Theta}\mu_{it}(\theta)=1$.

Two features of this setup deserve emphasis. First, the data $f_i$ that agents observe may differ across agents, so the population can hold genuinely dispersed information rather than repeated draws from one common truth. Second, no hypothesis's likelihood need correspond to any $f_i$: the agent may be misspecified, and we do not require that any initial interpretation of $\theta\in\Theta$ reproduce any agent's data-generating law.

\subsection{Communication networks and standing assumptions}

Agents communicate over two (possibly distinct) weighted directed networks: a belief-aggregation network with row-stochastic adjacency matrix $A=[a_{ij}]\in\R^{n\times n}$, and a likelihood-aggregation network with row-stochastic adjacency matrix $A'=[a'_{ij}]\in\R^{n\times n}$. We write $\Ni:=\{j:a_{ij}>0\}$ and $\Nip:=\{j:a'_{ij}>0\}$ for the in-neighborhoods of agent $i$ (each including $i$ when the diagonal entry is positive). The two matrices need not share support. The agents $j$ whose beliefs agent $i$ listens to may differ from those whose interpretations it listens to, and even on a common support the weights may differ.

\begin{assumption}[Standing assumptions]\label{ass:standing}
Throughout the paper:

\begin{enumerate}
    \item[(A1)] $A$ and $A'$ are row-stochastic and primitive (irreducible (strongly connected) graph and aperiodic). We denote their unique stationary distributions by $\pi^A$ and $\pi^{A'}$ (with $\pi^A_i>0$, $\pi^{A'}_j>0$ for all $i,j$) and their second-largest eigenvalue moduli by $\rho_A,\rho_{A'}\in[0,1)$.
    
    \item[(A2)] Initial likelihood profiles have full support: $l_{j,0}(\cdot\mid\theta)\in\ri\Delta(S)$ for every $j\in N$ and $\theta\in\Theta$. Initial beliefs have full support: $\mu_{j,0}\in\ri\Delta(\Theta)$.
    
    \item[(A3)] Signal-generating distributions also have full support: $f_i\in\ri\Delta(S)$ for every $i\in N$.
\end{enumerate}
\end{assumption}

Assumption (A1) is the usual connectivity-and-aperiodicity requirement that makes each network mix to a well-defined stationary distribution. Assumption (A2) guarantees that every log-ratio and KL divergence below is finite and, since the updates preserve relative interiors, that $l_{it}(\cdot\mid\theta)\in\ri\Delta(S)$ and $\mu_{it}\in\ri\Delta(\Theta)$ for all $t$. In particular no hypothesis is ever assigned probability zero.

\subsection{The update rule}\label{sec:update}

At each discrete time $t\ge 1$, every agent performs three steps.

\paragraph{Step 1: likelihood pooling (updated model).}

For each hypothesis $\theta$, agent $i$ updates its likelihood model by the normalized weighted geometric mean, the logarithmic opinion pool on $\Delta(S)$, of its neighbors' previous-period models:

\begin{equation}
    l_{it}(s\mid\theta) = \frac{\displaystyle\prod_{j\in\Nip}l_{j,t-1}(s\mid\theta)^{a'_{ij}}} {Z^{\theta}_{i,t-1}}, \qquad Z^{\theta}_{i,t-1} :=\sum_{s'\in S}\prod_{j\in\Nip}l_{j,t-1}(s'\mid\theta)^{a'_{ij}} .
    \tag{S1}\label{eq:S1}
\end{equation}

The normalizer $Z^\theta_{i,t-1}$ enforces $\sum_s l_{it}(s\mid\theta)=1$. By construction, this update depends only on neighbors' previous models and never on any signal nor belief.

\paragraph{Step 2: belief pooling (interim belief).}

Agent $i$ forms an interim belief $\hat\mu_{it}\in\ri\Delta(\Theta)$ by a log-linear pool of its neighbors' previous public beliefs:

\begin{equation}
    \hat\mu_{it}(\theta)\propto \displaystyle\prod_{j\in\Ni}\mu_{j,t-1}(\theta)^{a_{ij}}.
    \tag{S2}\label{eq:S2}
\end{equation}

\paragraph{Step 3: Bayesian revision (public belief).}

Agent $i$ observes $\omega_{it}\sim f_i$ and updates the interim belief by Bayes' rule, under its freshly pooled updated model \eqref{eq:S1}:

\begin{equation}
    \mu_{it}(\theta)= \frac{l_{it}(\omega_{it}\mid\theta)\,\hat\mu_{it}(\theta)} {\displaystyle\sum_{k=1}^{m} l_{it}(\omega_{it}\mid\theta_k)\, \hat\mu_{it}(\theta_k)} .
    \tag{S3}\label{eq:S3}
\end{equation}

Substituting \eqref{eq:S1}-\eqref{eq:S2} into \eqref{eq:S3} yields the fully unrolled one-step update

\begin{equation}
    \mu_{it}(\theta) \propto \frac{\prod_{j\in\Nip}l_{j,t-1}(\omega_{it}\mid\theta)^{a'_{ij}}} {Z^{\theta}_{i,t-1}}\, \prod_{j\in\Ni}\mu_{j,t-1}(\theta)^{a_{ij}},
    \label{eq:unrolled-update}
\end{equation}

in which the interim-belief normalizer of \eqref{eq:S2} cancels between numerator and denominator. Figure~\ref{fig:dual} summarizes the resulting dual-channel structure, in which beliefs propagate over $A$, interpretations propagate over $A'$, and the two meet only in the Bayesian step, where the pooled model serves as the expectation of the signal under each hypothesis.

\begin{figure}[t]
    \centering
    \includegraphics[width=\textwidth]{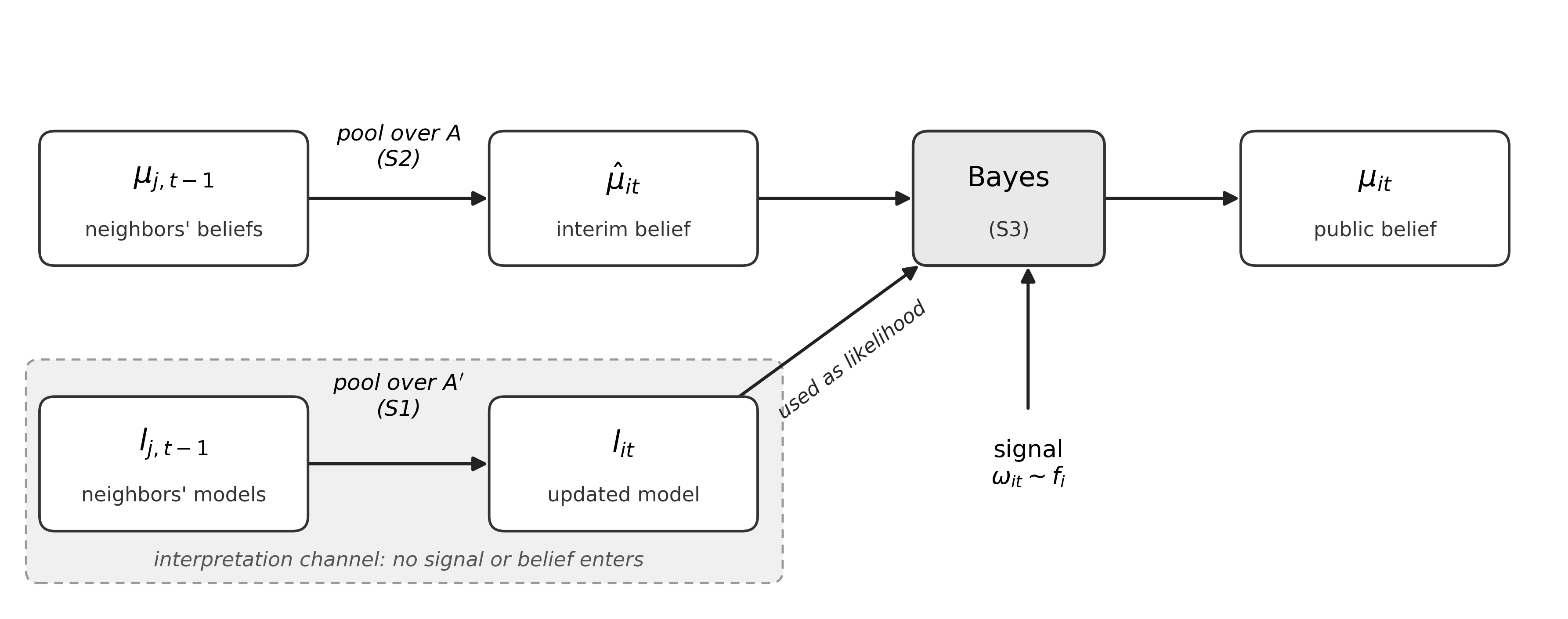}
    \caption{The dual-channel update at agent $i$. Beliefs are aggregated over the belief network $A$ (top, S2) and interpretations over the likelihood network $A'$ (bottom, S1). The two lanes stay independent until they meet at the Bayesian step (S3), where the pooled model serves as the likelihood and the private signal $\omega_{it}$ enters. The interpretation channel never consults a signal or belief, which is what makes its dynamics autonomous and exactly solvable.}
    \label{fig:dual}
\end{figure}

Two aspects of this rule are worth mentioning, because they shape everything that follows.

The interpretation channel uses neither data nor beliefs. The likelihood update \eqref{eq:S1} revises interpretations by social exchange alone, never by private evidence. This is a deliberate modeling choice, parallel to the imperfect-recall convention that is standard in the non-Bayesian learning literature. Interpretations are exchanged as opinions about the meaning of each hypothesis or what each agent expects to observe, and each agent treats its current model as a sufficient statistic for the history of such exchanges. The choice has a decisive structural consequence, established in Section~\ref{sec:likelihood}. The interpretation dynamics are autonomous and deterministic. Moreover, this decoupling is exactly what renders the joint dynamics solvable and isolates, in closed form, the effect of the interpretation channel on collective belief.

The classical model where only beliefs are shared is the special case with $A'=I$. In that case, $l_{it}=l_{i,0}$ for all $t$ and the dynamics reduce exactly to classical misspecified non-Bayesian learning with log-linear belief aggregation and heterogeneous, fixed likelihood models.

\subsection{A running example: reading the pandemic}\label{subsec:example}

We fix one scenario and return to it as each concept appears. It is small enough to check by hand, yet it reproduces a typical episode that the classical model, where agents exchange beliefs but not interpretations, cannot produce.

\paragraph{The scenario.}
During an advanced vaccination campaign where vaccination coverage is already high, a population decides whether the vaccine is effective or not. Two hypotheses are considered, $\Theta=\{\theta_E,\theta_{NE}\}$. The first, $\theta_E$, says the vaccine is effective, so vaccinated individuals face a lower risk of severe disease than unvaccinated ones (though breakthrough cases still occur). The second, $\theta_{NE}$, says it is not effective, or even harmful, i.e., vaccinated individuals face the same or a higher risk. These definitions are objective and shared by everyone.

The population consists of two different communities, possibly located in different regions, but both with vaccination coverage $\alpha=70\%$ of the community. Each community is represented as a single agent in the network, that communicates in two distinct ways. On the one hand, they share beliefs through family, colleagues, and clinicians (network $A$). On the other hand, they communicate their interpretation of each hypothesis, that is, their personal standard of what the evidence should look like if the vaccine works and if it does not. They share those opinions through media and social platforms (network $A'$). Throughout the example, we assume a common interpretation within each community.

\paragraph{Signals.}
The evidence that the population observes is the evidence shared by health institutions and hospitals, where we consider full disclosure and continuous data sharing with the population. We simplify the scenario to just two possible signals: whether a severe case (hospitalization or death) occurs in a vaccinated or an unvaccinated person.

\begin{equation*}
    S=\{\mathsf u,\ \mathsf v\},\qquad
    \mathsf u=\text{severe case, unvaccinated},\qquad
    \mathsf v=\text{severe case, vaccinated}.
\end{equation*}

Agent (community) $i$'s private evidence $\omega_{it}\sim f_i$ is the sequence of such statuses that it observes. We can think of the heterogeneous signal distributions across agents as arising from communities in different regions, where the impact of the disease can differ. 

To present an example with no trivial truth, we specify that each community observes a different distribution of signals. Community 1 observes $f_1=(0.3,0.7)$, i.e. they will observe that $30\%$ of severe cases occur among unvaccinated individuals. In contrast, Community 2 will observe that, of all severe cases, unvaccinated cases account for $70\%$ ($f_2=(0.7,0.3)$).

\paragraph{Interpretations.}

A likelihood model $l_{i,t}(\cdot\mid\theta)$ is community $i$'s interpretation (standard) at time $t$, i.e., the portion of severe cases it expects to see if $\theta$ holds. These standards are subjective. They are not estimates of the true data, and none is identified as correct. They express how much protection a community expects from a vaccine that works, and what it fears from one that does not. For the illustration, we consider that Community 1 is moderate in its requirements for considering a vaccine effective, whereas Community 2 is extreme: either it works almost perfectly or, in contrast, it is harmful. The interpretation values are presented in Table~\ref{tab:ex-frameworks}.

\begin{table}[h!]
  \centering
  \begin{tabular}{lcccc}
    \toprule
    & \multicolumn{2}{c}{Community $1$}
    & \multicolumn{2}{c}{Community $2$}\\
    \cmidrule(lr){2-3}\cmidrule(lr){4-5}
    & $l_{1,0}(\cdot\mid\theta_E)$ & $l_{1,0}(\cdot\mid\theta_{NE})$
    & $l_{2,0}(\cdot\mid\theta_E)$ & $l_{2,0}(\cdot\mid\theta_{NE})$\\
    \midrule
    $\mathsf u$ (severe: unvacc.) & $0.7$ & $0.3$ & $0.95$ & $0.2$\\
    $\mathsf v$ (severe: vacc.)   & $0.3$ & $0.7$ & $0.05$ & $0.8$\\
    \bottomrule
  \end{tabular}
  \caption{Initial likelihoods of each agent (representing the communities) over the observable signals under each hypothesis defined in the running example.}
  \label{tab:ex-frameworks}
\end{table}

Community 1 pictures an effective vaccine as one under which about $30\%$ of severe cases are vaccinated, which is equivalent to interpret a vaccine as effective if it prevents roughly $82\%$ of severe disease. In contrast, it interprets an ineffective one as leaving severe cases split exactly as the population is, $70\%$ cases as vaccinated ($0\%$ prevention). Community 2 is more drastic, and pictures an effective vaccine as nearly perfect, with only $5\%$ of severe cases vaccinated (they consider as effective when the vaccine presents about $98\%$ prevention), and an ineffective one as actively harmful (possibly due to secondary effects), with $80\%$ of severe cases vaccinated, more than the coverage. Community 1's standard for $\theta_{NE}$ matches the data it observes exactly, so it is well specified. Community 2's standards match neither hypothesis to its data, so it is misspecified.

Although standards are subjective, not every interpretation is logically acceptable: it must be coherent with the objective definition of the hypothesis it describes. Coherence is set by the coverage $\alpha$. If the vaccine had no effect on the population, severe cases would simply mirror the population, observing $\alpha\%$ of cases among the vaccinated. An effective vaccine must therefore produce a portion of vaccinated cases strictly below $\alpha$, and an ineffective or harmful one a portion at or above $\alpha$. Then, a coherent interpretation (subjective standard) respects these bounds: $l(\mathsf v\mid\theta_E)<\alpha\le l(\mathsf v\mid\theta_{NE})$. However, coherence does not make an interpretation accurate. In fact, within these bounds, a community whose standards of effectiveness are too high can identify an objectively low effective evidence as being more close to its expectation of what an ineffective one looks like. 

In the specific example, both communities hold coherent interpretations in Table~\ref{tab:ex-frameworks}. Also, note that Community 1 would actually observe $70\%$ of severe cases among the vaccinated, exactly the coverage, which by definition means the vaccine is not effective there. Community 2 instead observes $30\%$, well below the coverage, which objectively means it is effective. Therefore, there is no common ground truth, the two communities face objectively different situations, for instance because of different variants or risk profiles. The interesting question is what the population concludes when the two communities talk.

\section{Likelihood Dynamics: Consensus of Interpretations}
\label{sec:likelihood}

Because the likelihood update \eqref{eq:S1} never uses a signal nor a belief, $l_{it}(\cdot\mid\theta)$ is a deterministic function of the initial profiles $\{l_{j,0}\}_{j\in N}$ and the matrix $A'$, for every $i,\theta,t$. This section solves these dynamics and identifies the limit of \eqref{eq:S1}. Every object in this section is deterministic.

\subsection{Log-ratios undergo linear consensus}

The mechanism is cleanest at the level of log-likelihood ratios, where the nonlinear geometric pool becomes an ordinary linear averaging step.

\begin{lemma}[Log-ratio consensus]\label{lem:logratio}
For each $\theta\in\Theta$ and $s,s'\in S$, define $[\bm r_t(s,s'\mid\theta)]_i:=\log\dfrac{l_{i,t}(s\mid\theta)}{l_{i,t}(s'\mid\theta)}$. Then $\bm r_t(s,s'\mid\theta)=A'\,\bm r_{t-1}(s,s'\mid\theta)$ for all $t\ge 1$, and hence:

\begin{equation}
    \bm r_t=A'^{\,t}\bm r_0\to\one\,\pi^{A'\top}\bm r_0.
\end{equation}
\end{lemma}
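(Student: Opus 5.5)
The argument is a direct computation at the level of a single agent, followed by an appeal to the Perron--Frobenius consequence recorded in Section~\ref{subsec:notation}.

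Fix $\theta\in\Theta$ and $s,s'\in S$. By (A2), and because \eqref{eq:S1} preserves full support (a weighted geometric mean of positive numbers is positive and the normalizer is a finite positive sum), every $l_{i,t}(\cdot\mid\theta)$ lies in $\ri\Delta(S)$. Hence all log-ratios $[\bm r_t(s,s'\mid\theta)]_i$ are finite for every $t$. This should be noted first, by a one-line induction on $t$.

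Next, divide \eqref{eq:S1} evaluated at $s$ by \eqref{eq:S1} evaluated at $s'$ for the same agent $i$ and hypothesis $\theta$. The normalizer $Z^\theta_{i,t-1}$ depends on $i,\theta,t$ but not on the signal, so it cancels, leaving
\begin{equation*}
\frac{l_{i,t}(s\mid\theta)}{l_{i,t}(s'\mid\theta)}=\prod_{j\in\Nip}\left(\frac{l_{j,t-1}(s\mid\theta)}{l_{j,t-1}(s'\mid\theta)}\right)^{a'_{ij}} .
\end{equation*}
Taking logarithms gives $[\bm r_t]_i=\sum_{j\in\Nip}a'_{ij}[\bm r_{t-1}]_j=\sum_{j}a'_{ij}[\bm r_{t-1}]_j$, since $a'_{ij}=0$ off $\Nip$. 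In vector form this reads $\bm r_t=A'\bm r_{t-1}$, and induction yields $\bm r_t=A'^{\,t}\bm r_0$.

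For the limit, use (A1): $A'$ is row-stochastic and primitive. Lemma~\ref{lem:pf-rate} then gives $\|A'^{\,t}-\one\pi^{A'\top}\|_\infty=O(\rho_{A'}^{\,t})\to0$. Therefore
\begin{equation*}
\|\bm r_t-\one\pi^{A'\top}\bm r_0\|_\infty\le\|A'^{\,t}-\one\pi^{A'\top}\|_\infty\|\bm r_0\|_\infty\to0,
\end{equation*}
with $\|\bm r_0\|_\infty<\infty$ by (A2). The convergence is in fact geometric.

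There is no real obstacle here. The only points needing care are recognising that the normalizer cancels in the ratio, and checking that full support, and hence finiteness of the logarithms, holds for all $t$ so that the linear recursion is well defined.
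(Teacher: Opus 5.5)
Your proposal is correct and follows essentially the same route as the paper. The paper's proof also takes logarithms of \eqref{eq:S1} at $s$ and $s'$, cancels the signal-independent normalizer $Z^\theta_{i,t-1}$ to get $\bm r_t=A'\bm r_{t-1}$, and then invokes the Perron--Frobenius estimate of Lemma~\ref{lem:pf-rate}. Your explicit finiteness check and the norm bound $\|\bm r_t-\one\pi^{A'\top}\bm r_0\|_\infty\le\|A'^{\,t}-\one\pi^{A'\top}\|_\infty\|\bm r_0\|_\infty$ simply spell out steps the paper leaves implicit.
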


In other words, Lemma~\ref{lem:logratio} says that aggregating likelihoods geometrically is, in log-ratio coordinates, exactly a DeGroot (linear consensus) update. Every agent's vector of log-likelihood ratios is therefore driven to the common value $\pi^{A'\top}\bm r_0$, a $\pi^{A'}$-weighted average of the initial log-ratios. The normalization $\sum_s l_{it}(s\mid\theta)=1$ then pins down a unique limiting distribution. We make this explicit with a closed form.

\subsection{Closed form and the consensus model}

At finite time, we can express the interpretation an agent will hold in every hypothesis $\theta$ as follows.

\begin{proposition}[Closed form for the likelihood models]\label{prop:l-closed}
For every $i\in N$, $\theta\in\Theta$, $s\in S$, and $t\ge 0$,

\begin{equation}
    l_{i,t}(s\mid\theta) = \frac{\displaystyle\prod_{j=1}^{n}l_{j,0}(s\mid\theta)^{[A'^{\,t}]_{ij}}} {\Psi^{\theta}_{i,t}} \quad , \quad  \Psi^{\theta}_{i,t}= \sum_{s'\in S}\prod_{j=1}^{n}l_{j,0}(s'\mid\theta)^{[A'^{\,t}]_{ij}} .
  \label{eq:l-closed}
\end{equation}

Since $[A'^{0}]_{ij}=\delta_{ij}$ and $\sum_s l_{i,0}(s\mid\theta)=1$, one has $\Psi^{\theta}_{i,0}=1$.
\end{proposition}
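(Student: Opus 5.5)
The plan is to prove the closed form by induction on $t$, using that the logarithmic pool \eqref{eq:S1} is, up to normalization, linear in log-coordinates. Equivalently, we can lean on Lemma~\ref{lem:logratio}: a distribution on $S$ with full support is determined by its vector of log-ratios. So it suffices to show two things about the right-hand side of \eqref{eq:l-closed}, call it $\tilde l_{i,t}(\cdot\mid\theta)$. First, it is a full-support probability distribution on $S$. Second, its log-ratios are $[A'^{\,t}\bm r_0(s,s'\mid\theta)]_i$. The first is immediate, since each factor $l_{j,0}(s\mid\theta)>0$ by (A2), so the product is positive and dividing by $\Psi^\theta_{i,t}$ normalizes it. For the second, taking logs gives
\[
\log\frac{\tilde l_{i,t}(s\mid\theta)}{\tilde l_{i,t}(s'\mid\theta)}=\sum_{j}[A'^{\,t}]_{ij}\log\frac{l_{j,0}(s\mid\theta)}{l_{j,0}(s'\mid\theta)}=[A'^{\,t}\bm r_0]_i,
\]
because the normalizer cancels. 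By Lemma~\ref{lem:logratio} this equals $[\bm r_t]_i$, the log-ratio vector of $l_{i,t}$. Two full-support distributions with identical log-ratios for all pairs $(s,s')$ are proportional, and normalization then forces them to be equal.

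Alternatively, I could give a self-contained induction. The base case $t=0$ uses $A'^0=I$: the product reduces to $l_{i,0}(s\mid\theta)$ and $\Psi^\theta_{i,0}=\sum_s l_{i,0}(s\mid\theta)=1$. For the inductive step, I would substitute the hypothesis for $l_{j,t-1}$ into \eqref{eq:S1}. The numerator becomes
\[
\prod_{k}l_{k,0}(s\mid\theta)^{\sum_j a'_{ij}[A'^{\,t-1}]_{jk}}\cdot\prod_j(\Psi^\theta_{j,t-1})^{-a'_{ij}},
\]
and the exponent is exactly $[A'^{\,t}]_{ik}$. Here I use that $a'_{ij}=0$ for $j\notin\Nip$, so the product over $\Nip$ can be extended to all of $N$. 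The second factor does not depend on $s$, so it cancels against the same factor in $Z^\theta_{i,t-1}$. What remains is a quantity proportional to $\prod_k l_{k,0}(s\mid\theta)^{[A'^{\,t}]_{ik}}$, and normalizing gives \eqref{eq:l-closed} with the stated $\Psi^\theta_{i,t}$.

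The only delicate point is bookkeeping: checking that the $s$-independent normalizers factor out and cancel cleanly. This is also where full support (A2) matters, because it guarantees that all powers and logarithms are well-defined and every normalizer is strictly positive. There is no real obstacle beyond that.
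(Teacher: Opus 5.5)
Your proof is correct. Your second, self-contained route is essentially the paper's argument. The paper also inducts on $t$, extends the product over $\Nip$ to all of $N$, and combines the exponents into $[A'^{\,t}]_{ik}$. It then isolates the $s$-independent factor $\Gamma_i=\prod_j(\Psi^\theta_{j,t-1})^{a'_{ij}}$. The only cosmetic difference is how the normalizer is handled: the paper identifies $Z^\theta_{i,t-1}\Gamma_i=\Psi^\theta_{i,t}$ by summing the normalized expression over $s$, whereas you substitute the inductive hypothesis directly into $Z^\theta_{i,t-1}$.

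Your first route is genuinely different. It reduces the statement to Lemma~\ref{lem:logratio} plus the fact that a full-support distribution on $S$ is determined by its log-ratios, so it needs no normalizer bookkeeping at all. This is legitimate and not circular, because the lemma's recursion $\bm r_t=A'\bm r_{t-1}$ is derived from \eqref{eq:S1} alone, not from the closed form, and needs no primitivity. You should add one line noting that $l_{i,t}(\cdot\mid\theta)$ itself stays in $\ri\Delta(S)$, which follows inductively from (A2) and \eqref{eq:S1}; this is what makes $\bm r_t$ well defined.

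What the paper's route buys in exchange is the identity \eqref{eq:Z-Psi}, $Z^\theta_{i,t-1}=\Psi^\theta_{i,t}/\prod_j(\Psi^\theta_{j,t-1})^{a'_{ij}}$, obtained as a by-product. The paper reuses it later to bound the normalizer gap $\bm\delta_t$ in Corollary~\ref{cor:d-conv}. The log-ratio route does not produce this identity, though your second route does once you write the cancellation against $Z^\theta_{i,t-1}$ out explicitly.
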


The closed form says that agent $i$'s model at time $t$ is a geometric pool of all initial models weighted by the cumulative influence that agent $j$ has exerted on agent $i$ through time $t$, as captured by the entries $[A'^{\,t}]_{ij}$. Notice that the model does not depend only on $i$'s neighbors, but on all the agents in the network.

As $t$ grows, every row of $A'^{\,t}$ approaches the stationary distribution $\pi^{A'}$, and the weights therefore become common to all agents. As a consequence, each agent's interpretation of each hypothesis will converge to a common consensus interpretation for everyone.

\begin{definition}[Consensus interpretation]\label{def:l-consensus}
Under Assumption~\ref{ass:standing}, the consensus interpretation of each hypothesis $\theta\in\Theta$, $l^*(\cdot\mid\theta)$, is the logarithmic opinion pool of the agents' initial likelihood models, weighted by their influence $\pi^{A'}_j$:

\begin{equation}
    l^*(s\mid\theta) :=\frac{\displaystyle\prod_{j=1}^n l_{j,0}(s\mid\theta)^{\pi^{A'}_j}} {\Psi^{\theta}_{\infty}}\in\ri\Delta(S), \qquad \Psi^{\theta}_{\infty} :=\sum_{s'\in S}\prod_{j=1}^n l_{j,0}(s'\mid\theta)^{\pi^{A'}_j}.
    \label{eq:lstar}
\end{equation}
\end{definition}

\begin{theorem}[Consensus of interpretations]\label{thm:l-consensus}
Under Assumption~\ref{ass:standing}, every agent's interpretation of the hypotheses converges to the consensus interpretation $l^*(\cdot\mid\theta)$ from Definition~\ref{def:l-consensus}, and does so geometrically and uniformly across agents and signals, in the sense of Section~\ref{subsec:notation}:

\begin{equation}
    \max_{i\in N,\,s\in S}\bigl|l_{i,t}(s\mid\theta)-l^*(s\mid\theta)\bigr|
    =O\!\left(\rho_{A'}^{\,t}\right)
    \qquad\text{for every }\theta\in\Theta.
    \label{eq:l-conv}
\end{equation}
\end{theorem}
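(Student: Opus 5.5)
The plan is to work from the closed form of Proposition~\ref{prop:l-closed} and view $l_{i,t}(\cdot\mid\theta)$ as the image of the weight vector $w=(w_1,\dots,w_n)$, namely the $i$-th row of $A'^{\,t}$, under a fixed smooth map. Fix $\theta$ and put $c_j(s):=\log l_{j,0}(s\mid\theta)$; these are finite by (A2). Define $F_s:\R^n\to(0,1)$ by
\begin{equation*}
F_s(w)=\frac{\exp\bigl(\sum_j w_j c_j(s)\bigr)}{\sum_{s'\in S}\exp\bigl(\sum_j w_j c_j(s')\bigr)},
\end{equation*}
which is a softmax composed with a linear map. With this notation, $l_{i,t}(s\mid\theta)=F_s\bigl([A'^{\,t}]_{i\cdot}\bigr)$ and $l^*(s\mid\theta)=F_s(\pi^{A'})$. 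The difference we must control is therefore $F_s$ evaluated at two nearby points.

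Next, bound $F_s$ by a Lipschitz estimate. The gradient is
\begin{equation*}
\partial_{w_j}F_s=F_s\Bigl(c_j(s)-\sum_{s'}F_{s'}c_j(s')\Bigr),
\end{equation*}
and since $F\in\Delta(S)$, this satisfies $|\partial_{w_j}F_s|\le 2C$ everywhere, where $C:=\max_{j,s}|c_j(s)|<\infty$. The mean value theorem along the segment between the two points, valid on all of $\R^n$, then gives
\begin{equation*}
|F_s(w)-F_s(w')|\le 2C\sum_j|w_j-w'_j|.
\end{equation*}
Taking $w=[A'^{\,t}]_{i\cdot}$ and $w'=\pi^{A'}$, the sum $\sum_j|w_j-w'_j|$ is the $i$-th absolute row sum of $A'^{\,t}-\one\pi^{A'\top}$. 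So the maximum over $i$ and $s$ is at most $2C\|A'^{\,t}-\one\pi^{A'\top}\|_\infty$.

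Finally, apply Lemma~\ref{lem:pf-rate}, which gives $\|A'^{\,t}-\one\pi^{A'\top}\|_\infty=O(\rho_{A'}^{\,t})$ in exactly the sense of Section~\ref{subsec:notation}. Multiplying by the constant $2C$, which depends only on the initial profiles, preserves that sense, including the admissibility of $\hat\rho=\rho_{A'}$ in the diagonalizable case. Because $\Theta$ is finite, uniformity over $\theta$ is immaterial, and the statement holds for each $\theta$.

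The only step needing any care is the uniform gradient bound. It must hold on the whole segment, not just near $\pi^{A'}$, so we cannot rely on a local linearization. The softmax form handles this, because $F$ always stays in the simplex and its gradient bound therefore involves only $C$. An alternative route, if preferred, is to use Lemma~\ref{lem:logratio} to get geometric convergence of the log-ratios and then pass to probabilities by noting that $l(s)=1/\sum_{s'}e^{-r(s,s')}$ is Lipschitz in $r$ on bounded sets. The log-ratios remain bounded because they are convex combinations of their initial values.
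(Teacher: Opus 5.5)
Your proof is correct, but it is organized differently from the paper's. Both start from the closed form \eqref{eq:l-closed} and end with Lemma~\ref{lem:pf-rate}. In between, the paper passes from weights to probabilities in four elementary steps:
\begin{enumerate}
    \item it bounds the exponents $E_{i,t}(s)$;
    \item it uses that $\exp$ is $e^{L}$-Lipschitz on $[-L,L]$ to control the unnormalized numerators;
    \item it sums these to control the normalizers $\Psi^\theta_{i,t}$, together with the lower bound $\Psi^\theta_{i,t}\ge|S|e^{-L}$;
    \item it handles the quotient through $\frac{a_t}{b_t}-\frac{a}{b}=\frac{(a_t-a)b-a(b_t-b)}{b_tb}$.
\end{enumerate}
You collapse all of this into one global Lipschitz estimate for the softmax map. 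It works because $F$ always lies in the simplex, so $|\partial_{w_j}F_s|\le 2C$ on all of $\R^n$.

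Your route gives a much cleaner constant, $2C\,\|A'^{\,t}-\one\pi^{A'\top}\|_\infty$, which is linear in $C=\max_{j,s}|\log l_{j,0}(s\mid\theta)|$ (the paper's $L$). The paper's $C_3$, by contrast, grows exponentially in $L$. It also carries an extra factor $n$, because the paper bounds the entries of $A'^{\,t}-\one\pi^{A'\top}$ one at a time rather than using the row-sum norm.

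The paper's route, in exchange, needs no calculus. Its intermediate outputs are also reused later: the geometric rate for $\Psi^\theta_{i,t}\to\Psi^\theta_\infty$ and the uniform lower bound on the normalizers. The proof of Corollary~\ref{cor:d-conv} invokes ``Steps 1--3'' to get rates for $\log\Psi^\theta_{j,t}$, and hence for the normalizer gap $\bm\delta_t$. Your one-shot estimate would have to supply those separately. Your alternative route via Lemma~\ref{lem:logratio} is also sound.
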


In other words, although agents initially differ about the distributions of signals they expect under each hypothesis, this disagreement dies out at a geometric rate ($\rho_{A'}^t$). Eventually, the entire population will share one common way of reading the world. More specifically, agents who are influential in the interpretation network (large $\pi_j^{A'}$) will pull the shared meaning of each hypothesis toward their own initial reading.

\paragraph{In the running example.}
Suppose Community 2's demanding standard has the larger reach in the interpretation channel $A'$, i.e. the view that a vaccine should be considered as effective if it is almost perfect and that one that is not may be harmful, circulates more widely. Following this intuition, we can define the network $A'=\begin{pmatrix} \frac{1}{2} & \frac{1}{2} \\ \frac{1}{18} & \frac{17}{18} \end{pmatrix}$, where Community 1 puts the same weight on its own interpretation and on the one coming from Community 2, whereas the second community listens hardly at all to the Community 1 and strongly believes it has the correct interpretation of the world. This matrix has the stationary distribution $\pi^{A'}=(\frac{1}{10},\frac{9}{10})$, from which we can conclude that in the interpretation channel, Community 2 is the most influential. Therefore, when communicating their interpretations, Theorem~\ref{thm:l-consensus} implies that both communities will end up having the same interpretation of each hypothesis:

\begin{equation*}
    l^*(\cdot\mid\theta_E)\approx(0.94,0.061),\qquad
    l^*(\cdot\mid\theta_{NE})\approx(0.21,0.79).
\end{equation*}

Both pooled interpretations are dragged toward Community 2's readings, purely because that community is influential in $A'$. This is the sense in which influential interpreters fix what the population comes to mean by each hypothesis.

\section{Belief Dynamics and Consistency}\label{sec:beliefs}

Having solved the interpretation channel in Section~\ref{sec:likelihood}, we now turn to the belief dynamics. One question organizes the whole section: on which hypotheses does every agent's belief concentrate as $t\to\infty$? The answer is given by a single score per hypothesis called group confidence. Beliefs will then concentrate on the hypotheses with the highest score (Theorem \ref{thm:consistency}).

\begin{remark}
    The belief recursion inherits its skeleton from the distributed non-Bayesian learning of \cite{nedic2017fast}. The one difference lies in the object sitting in the likelihood slot of the Bayesian step \eqref{eq:S3}. In \cite{nedic2017fast} each agent's model is fixed. Here it is the evolving pool $l_{j,t}$ of Section~\ref{sec:likelihood}, so its limit must be carried through the Bayesian step. We state the results here and collect the derivation of the belief recursion and all proofs in Appendix~\ref{app:proofs_beliefs}.
\end{remark}

\subsection{Group confidence and the optimal hypothesis set}

\begin{definition}[Group confidence]\label{def:confidence}
The group confidence of $\theta\in\Theta$ is
 
\begin{equation}
    C(\theta):=-\sum_{i=1}^n\pi^A_i\,\Dkl{f_i}{l^*(\cdot\mid\theta)} .
    \label{eq:confidence}
\end{equation}
\end{definition}

$C(\theta)\leq 0$ scores how well the common consensus interpretation $l^{*}$ of Theorem~\ref{thm:l-consensus} reads each agent's observed signal distribution, weighted by their influence in the belief network. In other words, it scores each hypothesis by weighting each agent by their influence in the network and by how confident they are in that hypothesis (how close each agent's observed data is to the consensus interpretation).

A larger $C(\theta)$ means a better collective fit, and $C(\theta)=0$ would mean the consensus reads $\theta$ as a perfect explanation of every agent's data. Moreover, notice that the two networks ($A$ and $A'$) enter $C$ in separate roles: $\pi^{A}$ sets whose data count ($f_i$), while $\pi^{A'}$ determines what to expect under the hypothesis $\theta$ ($l^{*}$).

With this scoring, we can now define the optimal hypothesis set as the hypotheses in which the population is more confident in.

\begin{definition}[Optimal hypothesis set]\label{def:optimal}
\begin{equation}
    \Theta^*:=\argmax_{\theta\in\Theta}C(\theta),
    \qquad
    C^*:=\max_{\theta\in\Theta}C(\theta) .
    \label{eq:optimal-set}
\end{equation}
 
Since $\Theta$ is finite, $\Theta^*$ is nonempty.
\end{definition}

Hypotheses with equal group confidence are not separated at an exponential rate. The results below say nothing about how beliefs are shared among the elements of $\Theta^*$, and Lemma~\ref{lem:twins} gives a case in which they remain split forever. If all hypotheses tie, $\Theta^*=\Theta$, and the results below are vacuous. We therefore assume throughout that $\Theta^*\subsetneq\Theta$.

Therefore, the population’s beliefs in a non-optimal hypothesis decrease over time and concentrate on the optimal set.

\begin{theorem}[Consistency]\label{thm:consistency}
Under Assumption~\ref{ass:standing}, the update \eqref{eq:S1}-\eqref{eq:S3} satisfies that for every agent $i\in N$, their beliefs on every non-optimal hypothesis ($\theta\notin \Theta^*$) tend to zero as $t\to\infty$ ($\lim_{t\to\infty}\mu_{i,t}(\theta)=0$ a.s.). Moreover, for every $\theta_v\notin\Theta^*$, every $\theta_w\in\Theta^*$, and every $i\in N$,
 
\begin{equation}
    \frac1t\,\ \log \frac{\mu_{i,t}(\theta_v)}{\mu_{i,t}(\theta_w)} \xrightarrow[t\to\infty]{\mathrm{a.s.}}\ C(\theta_v)-C^*\ <\ 0 .
    \label{eq:phi-rate}
\end{equation}
\end{theorem}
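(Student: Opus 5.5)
We work with log-belief ratios. Fix $\theta_v\notin\Theta^*$ and $\theta_w\in\Theta^*$, and set $[\bm\phi_t]_i:=\log\bigl(\mu_{i,t}(\theta_v)/\mu_{i,t}(\theta_w)\bigr)$. Taking the ratio of \eqref{eq:unrolled-update} at $\theta_v$ and $\theta_w$ cancels the global normalizer. Using Step 1 as the definition of $l_{i,t}$ gives the linear recursion
\begin{equation*}
\bm\phi_t = A\,\bm\phi_{t-1} + \bm L_t,\qquad [\bm L_t]_i:=\log\frac{l_{i,t}(\omega_{it}\mid\theta_v)}{l_{i,t}(\omega_{it}\mid\theta_w)} .
\end{equation*}
Unrolling it yields $\bm\phi_t = A^t\bm\phi_0+\sum_{k=1}^{t}A^{t-k}\bm L_k$. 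The first term is bounded by $\|\bm\phi_0\|_\infty$, which is finite by (A2), so it vanishes after division by $t$.

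We split the sum as
\begin{equation*}
\sum_{k=1}^t A^{t-k}\bm L_k=\one\,\pi^{A\top}\sum_{k=1}^t\bm L_k+\sum_{k=1}^t\bigl(A^{t-k}-\one\pi^{A\top}\bigr)\bm L_k .
\end{equation*}
By Theorem~\ref{thm:l-consensus} and full support, every $l_{i,t}(s\mid\theta)$ is bounded below uniformly in $i,t,s$. The limit $l^*$ is in $\ri\Delta(S)$ and the convergence is uniform, so only finitely many early terms need checking, and each of these has full support by (A2). Hence $\|\bm L_k\|_\infty\le B$ deterministically. The second term is then bounded by $B\sum_{u\ge0}\|A^u-\one\pi^{A\top}\|_\infty<\infty$ by Lemma~\ref{lem:pf-rate}, and it also vanishes after division by $t$.

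It remains to show that $\frac1t\sum_{k\le t}\pi^{A\top}\bm L_k\to C(\theta_v)-C^*$ almost surely. We write $[\bm L_k]_i=\lambda^*_i(\omega_{ik})+\varepsilon_{ik}$, where
\begin{equation*}
\lambda^*_i(s):=\log\frac{l^*(s\mid\theta_v)}{l^*(s\mid\theta_w)} .
\end{equation*}
Because the log is Lipschitz on the region where all probabilities are bounded below, the deterministic errors satisfy $|\varepsilon_{ik}|\le c\max_s|l_{i,k}-l^*|=O(\rho_{A'}^{k})$. They are therefore summable, and their Cesàro average vanishes. The terms $\lambda^*_i(\omega_{ik})$ are i.i.d. in $k$ and bounded. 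By the strong law of large numbers their average converges a.s. to
\begin{equation*}
\E_{f_i}\lambda^*_i=\Dkl{f_i}{l^*(\cdot\mid\theta_w)}-\Dkl{f_i}{l^*(\cdot\mid\theta_v)} .
\end{equation*}
Weighting by $\pi^A_i$ and summing over the finitely many agents gives $C(\theta_v)-C(\theta_w)=C(\theta_v)-C^*<0$. This establishes \eqref{eq:phi-rate} for every $i$.

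For the first claim, \eqref{eq:phi-rate} gives $\mu_{i,t}(\theta_v)\le \mu_{i,t}(\theta_v)/\mu_{i,t}(\theta_w)\to0$ a.s., at an exponential rate. There is one more point to address. The paper announces Lemma~\ref{lem:twins} for the case where beliefs stay split within $\Theta^*$, so we cannot claim that $\mu_{i,t}(\theta_w)$ stays bounded away from zero. The bound $\mu_{i,t}(\theta_v)\le$ ratio is enough on its own, since $\mu_{i,t}(\theta_w)\le 1$. A union over the finitely many pairs $(\theta_v,\theta_w)$ keeps the statement almost sure.

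The main obstacle is the uniform boundedness of $\bm L_k$ and the control of $\varepsilon_{ik}$. Unlike in \cite{nedic2017fast}, the likelihood slot changes over time, so we must use the uniform geometric convergence of Theorem~\ref{thm:l-consensus} together with $l^*\in\ri\Delta(S)$ to keep the log-ratios bounded and the perturbation summable. Once that is in place, the rest is the standard consensus-plus-SLLN argument.
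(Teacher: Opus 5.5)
Your proof is correct and follows the paper's argument. Your $\bm L_t$ is exactly the paper's increment $\bm x_t$ of \eqref{eq:increment} once the geometric-pool normalizer is absorbed into $l_{i,t}$, and the uniform bound on the increments, the split $A^{t-k}=(A^{t-k}-\one\pi^{A\top})+\one\pi^{A\top}$ with the $B\,S_A/t$ estimate, and the final bound $\mu_{i,t}(\theta_v)\le e^{\phi_{i,t}}$ play the same roles as Lemmas~\ref{lem:unifbound}--\ref{lem:cesaro} and the closing step of the proof of Theorem~\ref{thm:consistency}. The only variation is in the strong-law step: you compare each increment with the i.i.d.\ increment $\log\bigl(l^*(\omega_{ik}\mid\theta_v)/l^*(\omega_{ik}\mid\theta_w)\bigr)$ under the limit model and use the classical i.i.d.\ SLLN, with the leftover perturbation random but surely dominated by a vanishing deterministic sequence, whereas the paper splits $\bm x_\tau$ into its deterministic mean, which converges to $\bm d^*$, and an independent, non-identically distributed fluctuation handled by Kolmogorov's criterion; both routes are valid.
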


The proof is in Appendix \ref{app:consistency-proof}. In words, almost surely, the log-belief ratio of a suboptimal hypothesis against an optimal one falls linearly in $t$, at the rate $C^*-C(\theta_v)>0$, which is the $\pi^A$-weighted gap in fit between the two hypotheses under the consensus model. The convergence of beliefs and its rate do not depend on the initial beliefs, the finite-time disagreement between interpretations (of order $\rho_{A'}^{\,t}$ by Theorem~\ref{thm:l-consensus}), and the speed at which the belief network mixes (through $\rho_A$). All of these wash out in the long run. What is left is a division of labor between the two networks. The interpretation network affects the outcome only through $\pi^{A'}$, which fixes what each hypothesis is taken to mean, through $l^{*}$. The belief network affects it only through $\pi^{A}$, which fixes whose data are weighed against that meaning, through the weights in $C$. The two channels that stayed separate in the update (Figure~\ref{fig:dual}) stay separate in the limit, each governing a different aspect of what the population learns.
 
This result accounts for many possible optimal hypotheses. However, when one hypothesis is strictly best, the beliefs do not just discard the others, they concentrate fully on it.
 
\begin{corollary}[Complete learning under a unique optimum]\label{cor:singleton} 
Suppose $\Theta^*=\{\theta^\star\}$, i.e.\ $C(\theta^\star)>C(\theta)$ for every $\theta\neq\theta^\star$. Then $\lim_{t\to\infty}\mu_{i,t}(\theta^\star)=1$ a.s. for every $i\in N$.
\end{corollary}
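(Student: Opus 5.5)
The plan is to derive the corollary directly from Theorem~\ref{thm:consistency}, using nothing more than the normalization $\sum_{\theta}\mu_{i,t}(\theta)=1$. Fix an agent $i\in N$. Because $\Theta^*=\{\theta^\star\}$, every $\theta\neq\theta^\star$ lies outside $\Theta^*$. Theorem~\ref{thm:consistency} therefore gives, for each such $\theta$, an event $\Omega_{i,\theta}$ of probability one on which $\mu_{i,t}(\theta)\to0$. We take the intersection $\Omega_i:=\bigcap_{\theta\neq\theta^\star}\Omega_{i,\theta}$. It is a finite intersection, with at most $m-1$ events, so it still has probability one.

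On $\Omega_i$ we write $\mu_{i,t}(\theta^\star)=1-\sum_{\theta\neq\theta^\star}\mu_{i,t}(\theta)$. This uses the fact that $\mu_{i,t}\in\ri\Delta(\Theta)$ for all $t$, which holds by Assumption~\ref{ass:standing} and the remark that the updates preserve relative interiors. The sum is finite and each term tends to zero, so $\mu_{i,t}(\theta^\star)\to1$ on $\Omega_i$. Finally, $\bigcap_{i\in N}\Omega_i$ is again a finite intersection of full-measure events. The convergence therefore holds simultaneously for all agents almost surely, which is slightly stronger than the statement requires.

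As a sanity check, one could also argue through the rate in \eqref{eq:phi-rate}. It gives $\mu_{i,t}(\theta)/\mu_{i,t}(\theta^\star)\le e^{-ct}$ eventually for some $c>0$. Since $\mu_{i,t}(\theta^\star)\le1$, this yields $\mu_{i,t}(\theta)\to0$ exponentially fast, so $1-\mu_{i,t}(\theta^\star)=O(e^{-ct})$ almost surely. This exponential version is not needed, but the same one-line argument delivers it.

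There is no real obstacle here. The only point needing care is that almost-sure statements are combined over finitely many hypotheses and agents, and this is immediate because $\Theta$ and $N$ are finite.
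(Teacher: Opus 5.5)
Your proposal is correct and matches the paper's own proof: both apply Theorem~\ref{thm:consistency} to each of the finitely many $\theta\neq\theta^\star$ and then use $\sum_\theta\mu_{i,t}(\theta)=1$. Your exponential-rate remark via \eqref{eq:phi-rate} is also valid. It is a harmless extra that the paper does not state.
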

 
The condition asks only that $\theta^\star$ be strictly the best collective explanation, i.e.\ that its $\pi^A$-weighted total divergence $\sum_i\pi^A_i\Dkl{f_i}{l^*(\cdot\mid\theta^\star)}$ be strictly smaller than that of every other hypothesis. It does not ask that any hypothesis's reading coincide exactly with anyone's data. In this sense the population can reach a firm, unanimous conclusion without any of its members ever holding a precise model.

When two hypotheses have the same initial interpretation for every agent, nothing in the dynamics can ever separate them.

\begin{lemma}[Twin hypotheses are never separated]\label{lem:twins}
Let $\theta\neq\theta'$ satisfy $l_{j,0}(\cdot\mid\theta)=l_{j,0}(\cdot\mid\theta')$ for every $j\in N$. Run \eqref{eq:S1}-\eqref{eq:S3} with $A$ satisfying (A1) and any row-stochastic $A'$. This includes $A'=I$, the belief-only benchmark, and every $A'$ satisfying (A1). Start from initial conditions satisfying (A2), and let $c:=\max_{j\in N}\bigl|\log\bigl(\mu_{j,0}(\theta)/\mu_{j,0}(\theta')\bigr)\bigr|$. Then, for every realization of the signals, every $i\in N$ and every $t\ge0$,

\begin{equation*}
    \Bigl|\log\frac{\mu_{i,t}(\theta)}{\mu_{i,t}(\theta')}\Bigr|\le c
    \qquad\text{and}\qquad
    \mu_{i,t}(\theta')\le\frac{1}{1+e^{-c}}<1 .
\end{equation*}

In particular, neither $\mu_{i,t}(\theta')$ nor, by symmetry, $\mu_{i,t}(\theta)$ converges to $1$ on any sample path.
\end{lemma}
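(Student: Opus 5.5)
The plan is to track the log-ratio $\phi_{i,t}:=\log\bigl(\mu_{i,t}(\theta)/\mu_{i,t}(\theta')\bigr)$ and show that it evolves by pure linear averaging over $A$, with no signal-driven term. The first step is to show that the twin property propagates through the interpretation channel. Since \eqref{eq:S1} acts separately on each hypothesis using the same weights $a'_{ij}$, induction on $t$ gives $l_{j,t}(\cdot\mid\theta)=l_{j,t}(\cdot\mid\theta')$ for every $j$ and $t$. The base case is the hypothesis of the lemma. In the inductive step, both the numerators and the normalizers $Z^\theta_{i,t-1}=Z^{\theta'}_{i,t-1}$ coincide. Equivalently, one can read this off the closed form of Proposition~\ref{prop:l-closed}, which holds for any row-stochastic $A'$. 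This step needs no primitivity of $A'$, so it covers $A'=I$.

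The second step takes the ratio of \eqref{eq:S3} at $\theta$ and $\theta'$. The Bayesian normalizer cancels, and the likelihood factors $l_{i,t}(\omega_{it}\mid\theta)=l_{i,t}(\omega_{it}\mid\theta')$ cancel by the first step. Taking logs of \eqref{eq:S2}, where the interim normalizer also cancels in the ratio, gives
\begin{equation*}
\phi_{i,t}=\sum_{j}a_{ij}\,\phi_{j,t-1},\qquad\text{i.e.}\qquad \bm\phi_t=A\bm\phi_{t-1}=A^t\bm\phi_0 ,
\end{equation*}
and this holds on every sample path. All logs are finite because (A2) and the preservation of relative interiors keep every quantity strictly positive.

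The third step uses row-stochasticity and nonnegativity of $A^t$: each $\phi_{i,t}$ is a convex combination of the entries of $\bm\phi_0$. Hence $|\phi_{i,t}|\le\|\bm\phi_0\|_\infty=c$, which is the first claim.

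For the bound on beliefs, I would write $\mu_{i,t}(\theta')\le\mu_{i,t}(\theta')+\mu_{i,t}(\theta)$. Then
\begin{equation*}
\mu_{i,t}(\theta')\le\frac{\mu_{i,t}(\theta')}{\mu_{i,t}(\theta')+\mu_{i,t}(\theta)}=\frac{1}{1+e^{\phi_{i,t}}}\le\frac{1}{1+e^{-c}}<1 .
\end{equation*}
The symmetric bound for $\theta$ follows by swapping roles. Since this bound is uniform in $t$ and in the realization, neither belief can converge to $1$ on any sample path.

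There is no real obstacle here. The only point needing care is verifying that the normalizers $Z^\theta$ and $\Psi^\theta$ agree for the twins, which the induction in the first step handles. The argument never uses (A1) for $A'$, and for $A$ it only uses row-stochasticity and nonnegativity, so the statement's generality is automatic.
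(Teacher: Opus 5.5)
Your proof is correct and follows the paper's argument step for step: induct to show the twin property survives \eqref{eq:S1}, cancel the likelihood and normalizer factors so that $\bm\phi_t=A^t\bm\phi_0$, bound $|\phi_{i,t}|$ by $c$ using row-stochasticity, and convert that into the belief bound. One small slip is in your final display: its first inequality follows from $\mu_{i,t}(\theta)+\mu_{i,t}(\theta')\le 1$, not from $\mu_{i,t}(\theta')\le\mu_{i,t}(\theta')+\mu_{i,t}(\theta)$ as you wrote.
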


\paragraph{In the running example.}

To complete the model we specify the belief network. As in the interpretation channel, Community~2 is the more influential one. Community~1 weights its own beliefs and those of Community~2 equally, while Community~2 puts more weight on its own beliefs, although it listens more to Community~1's beliefs than to its interpretations:

\begin{equation*}
    A=\begin{pmatrix} \tfrac12 & \tfrac12\\[2pt] \tfrac{3}{10} & \tfrac{7}{10}\end{pmatrix},
  \qquad
  \pi^{A}=\Bigl(\tfrac38,\tfrac58\Bigr).
\end{equation*}

Hospital data, unlike beliefs and interpretations, are never exchanged. Each community keeps observing its own cases, $f_1=(0.3,0.7)$ and $f_2=(0.7,0.3)$. By the time beliefs settle, both communities read these data through the same consensus standard of Section \ref{sec:likelihood}, which is close to Community~2's demanding one: an effective vaccine should leave about $6\%$ of severe cases among the vaccinated, and an ineffective one about $79\%$. The group confidence of Definition \ref{def:confidence} then works as follows. Each community compares the data it observes with this common standard, and the population's score for each hypothesis adds up these comparisons, each weighted by the community's influence in the belief network. Since everyone holds the same standard, the belief channel decides only whose comparison counts more, and here Community~2's counts $5/8$ against Community~1's $3/8$.

Influence is not the only thing that matters, however, because the two comparisons differ greatly in strength. Under the consensus standard, Community~1's data ($70\%$ vaccinated) are almost exactly what an ineffective vaccine is expected to produce and very far from what an effective one should produce. Community~2's data ($30\%$ vaccinated) fit the effective picture better than the ineffective one, but only moderately, since they fall well short of the near-perfect protection the standard demands. In fact, Community~1's data favor $\theta_{NE}$ by a margin about five times larger than the margin by which Community~2's data favor $\theta_E$, and a belief weight of $5/8$ against $3/8$ is not enough to compensate. The resulting scores are

\begin{equation*}
    C(\theta_E)\approx-0.68,\qquad C(\theta_{NE})\approx-0.36 ,
\end{equation*}

so $\Theta^*=\{\theta_{NE}\}$, and by Corollary~\ref{cor:singleton} the beliefs of both communities almost surely concentrate on the hypothesis that the vaccine is not effective. Theorem \ref{thm:consistency} adds that, in the long run and whatever the initial beliefs, the log-odds against $\theta_E$ grow on average by $C(\theta_{NE})-C(\theta_E)\approx0.33$ per period, so belief in effectiveness vanishes exponentially fast.

Without any communication, each community would have kept its own standard and reached its own conclusion. Community~1 would believe the vaccine is not effective, since its standard for $\theta_{NE}$ matches its data exactly. Community~2 would believe it is effective, since its data are closer to its picture of near-perfect protection than to its picture of a harmful vaccine. Once they communicate, the population adopts Community~1's belief, even though Community~2 is the more influential community in both channels. The reason lies in the interpretation channel: Community~2's demanding standard, once shared by everyone, makes Community~1's data strong evidence against effectiveness and its own data only weak evidence in favor. The following Section \ref{sec:characterizing_population} makes this trade-off visual, and Section \ref{subsec:misspecified_benchmark} compares this result with the one obtained if the Communities only shared their beliefs.

\section{Characterizing What the Population Learns}\label{sec:characterizing_population}

The dominance condition of Corollary~\ref{cor:singleton} looks like it involves all $n$ individual observed distributions at once, but in fact, it collapses to an inference problem for a single effective agent. The reason is that KL divergence splits into a hypothesis-independent entropy term and a cross-entropy term that is linear in each $f_i$. The linear part is all that matters for comparing hypotheses, and linearity lets us aggregate the heterogeneous truths and the belief weights into a single distribution.

\subsection{Reduction to an effective single agent}

\begin{proposition}[Effective single-agent reduction]\label{prop:effective_single_agent}
Define the pooled data

\begin{equation}
  \bar f:=\sum_{i=1}^n\pi^A_i\,f_i\ \in\ \ri\Delta(S) .
  \label{eq:pooled-truth}
\end{equation}

Then, for every $\theta_v,\theta_w\in\Theta$,

\begin{equation}
  C(\theta_w)-C(\theta_v) =\Dkl{\bar f}{l^*(\cdot\mid\theta_v)}-\Dkl{\bar f}{l^*(\cdot\mid\theta_w)},
  \label{eq:confidence-gap-pooled}
\end{equation}

and consequently

\begin{equation}
  \Theta^* =\argmax_{\theta\in\Theta}C(\theta) =\argmin_{\theta\in\Theta}\Dkl{\bar f}{l^*(\cdot\mid\theta)} .
  \label{eq:effective-argmin}
\end{equation}

Thus $\Theta^*$ depends on the individual data $\{f_i\}$ and the belief weights $\pi^A$ only through $\bar f$, and on the initial interpretations $\{l_{j,0}\}$ and the interpretation weights $\pi^{A'}$ only through $\{l^*(\cdot\mid\theta)\}_\theta$.
\end{proposition}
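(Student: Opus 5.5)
The plan is to decompose each KL divergence into an entropy term that does not depend on the hypothesis and a cross-entropy term that is linear in the data distribution. For each $i$ and $\theta$, since $f_i\in\ri\Delta(S)$ by (A3) and $l^*(\cdot\mid\theta)\in\ri\Delta(S)$ by Definition~\ref{def:l-consensus}, every logarithm is finite and we can write
\begin{equation*}
  \Dkl{f_i}{l^*(\cdot\mid\theta)}=\sum_{s\in S}f_i(s)\log f_i(s)-\sum_{s\in S}f_i(s)\log l^*(s\mid\theta).
\end{equation*}
Substituting this into Definition~\ref{def:confidence} gives $C(\theta)=-\sum_i\pi^A_i\sum_s f_i(s)\log f_i(s)+\sum_s\bar f(s)\log l^*(s\mid\theta)$. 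Here the second sum is obtained by exchanging the finite sums over $i$ and $s$ and using the definition \eqref{eq:pooled-truth} of $\bar f$. The first term, the $\pi^A$-weighted average entropy, is a constant $K$ that does not depend on $\theta$.

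Before using $\bar f$, we check that it is a full-support distribution. It is a convex combination with weights $\pi^A_i>0$ summing to one, by (A1). Each $f_i$ has full support by (A3). Hence $\bar f\in\ri\Delta(S)$, and $\Dkl{\bar f}{l^*(\cdot\mid\theta)}$ is finite.

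Next, we apply the same decomposition to $\bar f$: $\Dkl{\bar f}{l^*(\cdot\mid\theta)}=\sum_s\bar f(s)\log\bar f(s)-\sum_s\bar f(s)\log l^*(s\mid\theta)$. Together with the previous step this gives $C(\theta)=K'-\Dkl{\bar f}{l^*(\cdot\mid\theta)}$, where $K'=K+\sum_s\bar f(s)\log\bar f(s)$ also does not depend on $\theta$. Subtracting this identity for $\theta_v$ from the one for $\theta_w$ cancels $K'$ and yields \eqref{eq:confidence-gap-pooled}.

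Since $C$ and $-\Dkl{\bar f}{l^*(\cdot\mid\cdot)}$ differ by a constant on $\Theta$, their maximizers coincide, which is \eqref{eq:effective-argmin}. The final dependence statement follows by inspection. The right-hand side of \eqref{eq:effective-argmin} involves $\{f_i\}$ and $\pi^A$ only through $\bar f$. It involves $\{l_{j,0}\}$ and $\pi^{A'}$ only through $l^*$, by \eqref{eq:lstar}.

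No step is genuinely hard. The only point that needs care is justifying finiteness, so that constants can be split off and cancelled; this rests on the full-support facts above.
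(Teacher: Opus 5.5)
Your proposal is correct and follows essentially the same route as the paper's proof. Both split each divergence into a $\theta$-independent entropy term and a cross-entropy term that is linear in $f_i$, aggregate via $\bar f$ to get $C(\theta)=\sum_i\pi^A_iH(f_i)-H(\bar f)-\Dkl{\bar f}{l^*(\cdot\mid\theta)}$, and then difference or maximize, with finiteness coming from the full-support facts.
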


The reduction is genuinely useful: the collective outcome is exactly what a single Bayesian agent with data-generating law $\bar f$ and model family $\{l^*(\cdot\mid\theta)\}$ would learn. It also makes the roles of the two networks fully explicit. The belief side ($\{f_i\}$ and $\pi^A$) defines only the data distribution, while the interpretation side ($\{l_{j,0}\}$ and $\pi^{A'}$) defines how likely each signal is for each hypothesis. Two immediate consequences are worth stating.

\begin{corollary}[A single common truth]\label{cor:singletruth}
If there exists a single common truth, i.e., all signal distributions are the same $f_1=\dots=f_n=f$, then $\bar f=f$ regardless of $\pi^A$, and $\Theta^*=\argmin_\theta\Dkl{f}{l^*(\cdot\mid\theta)}$. In this case the belief influence ($\pi^A$) has no effect on which hypotheses are learned. The outcome is governed entirely by the common truth and the consensus interpretations.
\end{corollary}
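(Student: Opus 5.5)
The plan is to obtain Corollary~\ref{cor:singletruth} as a direct specialization of Proposition~\ref{prop:effective_single_agent}, with no new probabilistic work.

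First, compute the pooled data under the hypothesis $f_1=\dots=f_n=f$. By (A1), $\pi^A\in\ri\Delta(N)$, so its entries sum to one. Hence
\begin{equation*}
\bar f=\sum_{i=1}^n\pi^A_i f=\Bigl(\sum_{i=1}^n\pi^A_i\Bigr)f=f.
\end{equation*}
This identity holds for every probability vector, so it does not depend on which stationary distribution $A$ produces.

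Second, substitute $\bar f=f$ into \eqref{eq:effective-argmin}. This gives $\Theta^*=\argmin_{\theta}\Dkl{f}{l^*(\cdot\mid\theta)}$.

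Third, check where each ingredient of this characterization comes from.
\begin{itemize}
\item The data law $f$ is given.
\item By Definition~\ref{def:l-consensus}, $l^*$ depends only on the initial profiles $\{l_{j,0}\}$ and on $\pi^{A'}$.
\end{itemize}
So $\pi^A$ does not appear in the characterization of $\Theta^*$. Replacing $A$ by any other matrix satisfying (A1) therefore leaves $\Theta^*$ unchanged. By Theorem~\ref{thm:consistency} and Corollary~\ref{cor:singleton}, the set of hypotheses on which beliefs concentrate is also unchanged.

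Fourth, I would note the scope of the claim. The rates $C(\theta_v)-C^*$ in Theorem~\ref{thm:consistency} also become independent of $\pi^A$, because $C(\theta)=-\Dkl{f}{l^*(\cdot\mid\theta)}$ once the weights sum to one. However, the statement only asserts that $\pi^A$ does not affect which hypotheses are learned, so nothing more needs to be proved.

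There is no real obstacle here. The only point needing care is to state explicitly that $\pi^A$ is a probability vector; this is supplied by the Perron--Frobenius consequence recalled in Section~\ref{subsec:notation}.
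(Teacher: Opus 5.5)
Your proposal is correct and follows essentially the same route as the paper, which states this corollary as an immediate consequence of Proposition~\ref{prop:effective_single_agent}: $\bar f=\sum_i\pi^A_i f=f$ because $\pi^A$ is a probability vector, and substituting into \eqref{eq:effective-argmin} gives the characterization, in which $l^*$ depends only on $\{l_{j,0}\}$ and $\pi^{A'}$. Your extra remark that $C(\theta)=-\Dkl{f}{l^*(\cdot\mid\theta)}$ in this case, so that the rates in Theorem~\ref{thm:consistency} are also free of $\pi^A$, is correct as well.
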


\begin{corollary}[Concrete dominance conditions]\label{cor:concrete}
$\Theta^*=\{\theta^\star\}$ if and only if, for every $\theta\neq\theta^\star$, either of the following equivalent conditions holds:

\begin{align}
  &\text{\emph{(i)}}\quad
  \sum_{s\in S}\bar f(s)\,\log\frac{l^*(s\mid\theta^\star)}{l^*(s\mid\theta)}>0;
  \label{eq:concrete-llr}\\[4pt]
  &\text{\emph{(ii)}}\quad
  \sum_{j=1}^n\pi^{A'}_j\sum_{s\in S}\bar f(s)\,
  \log\frac{l_{j,0}(s\mid\theta^\star)}{l_{j,0}(s\mid\theta)}
  \;>\;\log\frac{\Psi^{\theta^\star}_{\infty}}{\Psi^{\theta}_{\infty}} .
  \label{eq:concrete-l0}
\end{align}
In that case the population fully learns $\theta^\star$, i.e.\ $\mu_{i,t}(\theta^\star)\to1$ a.s.\ for all $i$ (Corollary~\ref{cor:singleton}).
\end{corollary}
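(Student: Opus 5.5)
The plan is to reduce everything to Proposition~\ref{prop:effective_single_agent} and then expand the consensus interpretation through its closed form in Definition~\ref{def:l-consensus}. The proof has three parts: the characterization of a singleton $\Theta^*$, the equivalence of (i) and (ii), and the final learning claim.

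First, the characterization. By definition, $\Theta^*=\{\theta^\star\}$ holds exactly when $C(\theta^\star)>C(\theta)$ for every $\theta\neq\theta^\star$. By \eqref{eq:confidence-gap-pooled} with $\theta_w=\theta^\star$ and $\theta_v=\theta$, this is equivalent to $\Dkl{\bar f}{l^*(\cdot\mid\theta)}-\Dkl{\bar f}{l^*(\cdot\mid\theta^\star)}>0$ for all $\theta\neq\theta^\star$. Expanding both divergences, the entropy terms $\sum_s\bar f(s)\log\bar f(s)$ cancel, since they are finite because $\bar f\in\ri\Delta(S)$ by (A3). What remains is precisely $\sum_s\bar f(s)\log\bigl(l^*(s\mid\theta^\star)/l^*(s\mid\theta)\bigr)$. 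This gives the equivalence with (i).

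Second, (i) is equivalent to (ii). Substitute \eqref{eq:lstar} into the log-ratio:
\[
\log\frac{l^*(s\mid\theta^\star)}{l^*(s\mid\theta)}=\sum_{j}\pi^{A'}_j\log\frac{l_{j,0}(s\mid\theta^\star)}{l_{j,0}(s\mid\theta)}-\log\frac{\Psi^{\theta^\star}_\infty}{\Psi^{\theta}_\infty}.
\]
All logarithms are finite by (A2). Averaging against $\bar f$, and using $\sum_s\bar f(s)=1$ so that the normalizer term passes through unchanged, the left side of (i) becomes
\[
\sum_j\pi^{A'}_j\sum_s\bar f(s)\log\frac{l_{j,0}(s\mid\theta^\star)}{l_{j,0}(s\mid\theta)}-\log\frac{\Psi^{\theta^\star}_\infty}{\Psi^\theta_\infty}.
\]
Here the order of the finite sums has been exchanged. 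Positivity of this expression is exactly (ii).

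Third, once $\Theta^*=\{\theta^\star\}$, Corollary~\ref{cor:singleton} gives $\mu_{i,t}(\theta^\star)\to1$ almost surely for every $i$.

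No step is a real obstacle. The only care needed is bookkeeping: the normalizer term must survive averaging with coefficient one (this uses that $\bar f$ is a probability vector), and all divergences and logarithms must be finite so that the subtractions are legitimate.
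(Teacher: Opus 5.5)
Your proof is correct and follows essentially the same route as the paper. You reduce the singleton condition to strict KL inequalities at $\bar f$ via Proposition~\ref{prop:effective_single_agent}, cancel the entropy to obtain (i), substitute the closed form \eqref{eq:lstar} using $\sum_s\bar f(s)=1$ to obtain (ii), and conclude with Corollary~\ref{cor:singleton}.
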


Corollary~\ref{cor:concrete} characterizes when $\theta^\star$ is the unique optimum, which is sufficient for full learning. If some rival $\theta$ reverses the strict inequality in \eqref{eq:concrete-llr}, then $\theta^\star\notin\Theta^*$ and $\mu_{i,t}(\theta^\star)\to0$ a.s.\ by Theorem~\ref{thm:consistency}. In the remaining boundary case (equality in \eqref{eq:concrete-llr} for some rival and no reversal) our results do not determine the limit of beliefs (see Lemma~\ref{lem:twins}). Condition \eqref{eq:concrete-l0} rewrites the same test entirely in the primitives of the model, the initial interpretations, the interpretation weights $\pi^{A'}$, and the pooled data $\bar f$, and will be the workhorse of the comparisons that follow.

\subsection{Geometry: interpretations set the boundaries}

The reduction has a simple geometric picture on the simplex $\Delta(S)$, the set of all possible data distributions, where the pooled data $\bar f$ is a single point.

\begin{proposition}[Interpretations draw the boundaries, data place the point]\label{prop:geometry}
For each $\theta\in\Theta$, let the decision cell of $\theta$ be the set of data distributions that the consensus interpretation explains strictly better by $\theta$ than by any other hypothesis,

\begin{equation}
  \mathcal R(\theta) :=\Bigl\{p\in\Delta(S):\ \Dkl{p}{l^*(\cdot\mid\theta)}<\Dkl{p}{l^*(\cdot\mid\theta')}\ \ \forall\theta'\neq\theta\Bigr\}.
\end{equation}

By Proposition~\ref{prop:effective_single_agent} and Corollary~\ref{cor:singleton}, if $\bar f\in\mathcal R(\theta)$, every agent's belief in $\theta$ converges to one almost surely. Moreover:

\begin{enumerate}
    \item[\emph{(i)}] Each cell is cut out by linear inequalities in $p$,
    
    \begin{equation*}
    \mathcal R(\theta)=\Bigl\{p\in\Delta(S):\ \textstyle\sum_{s}p(s)\,\nu_{\theta,\theta'}(s)>0\ \ \forall\theta'\neq\theta\Bigr\},
    \qquad
    \nu_{\theta,\theta'}(s):=\log\frac{l^*(s\mid\theta)}{l^*(s\mid\theta')},
    \end{equation*}
    
    so it is convex. Its boundaries depend only on the initial interpretations $\{l_{j,0}\}$ and the interpretation influence $\pi^{A'}$, through \eqref{eq:lstar}.
    
    \item[\emph{(ii)}] The cells do not overlap, and every point outside the walls
    
    \begin{equation*}
    \mathcal W:=\bigcup_{\theta\neq\theta'}\Bigl\{p\in\Delta(S):\ \textstyle\sum_sp(s)\,\nu_{\theta,\theta'}(s)=0\Bigr\},
    \end{equation*}
    
    where two hypotheses fit $p$ equally well, belongs to exactly one cell. If the consensus interpretations of different hypotheses are distinct, the walls are pieces of hyperplanes and have zero $(|S|-1)$-dimensional Lebesgue measure.
    
    \item[\emph{(iii)}] The data $\{f_i\}$ and the belief influence $\pi^A$ do not affect the cells. They only determine where the point $\bar f=\sum_i\pi^A_i f_i$ falls.
\end{enumerate}
\end{proposition}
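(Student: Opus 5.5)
The whole argument rests on rewriting each pairwise KL comparison as a linear functional in $p$. Fix $p\in\Delta(S)$ and two hypotheses $\theta,\theta'$. Since $l^*(\cdot\mid\theta)\in\ri\Delta(S)$ by Definition~\ref{def:l-consensus}, both divergences are finite, and the entropy term $\sum_s p(s)\log p(s)$ cancels in the difference:
\begin{equation*}
\Dkl{p}{l^*(\cdot\mid\theta')}-\Dkl{p}{l^*(\cdot\mid\theta)}=\sum_{s}p(s)\log\frac{l^*(s\mid\theta)}{l^*(s\mid\theta')}=\sum_s p(s)\,\nu_{\theta,\theta'}(s).
\end{equation*}
This is the same cancellation used in Proposition~\ref{prop:effective_single_agent}, now for an arbitrary $p$, with the convention $0\log 0=0$ for boundary points of the simplex. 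Hence $\mathcal R(\theta)$ equals the set described in (i).

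For (i), each set $\{p:\sum_s p(s)\nu_{\theta,\theta'}(s)>0\}$ is an open half-space intersected with the convex simplex. A finite intersection of convex sets is convex. The vectors $\nu_{\theta,\theta'}$ are built from $l^*$ alone, so by \eqref{eq:lstar} they depend only on $\{l_{j,0}\}$ and $\pi^{A'}$.

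The preliminary claim follows directly. If $\bar f\in\mathcal R(\theta)$, then $\theta$ is the unique minimizer of $\Dkl{\bar f}{l^*(\cdot\mid\cdot)}$. Proposition~\ref{prop:effective_single_agent} then gives $\Theta^*=\{\theta\}$, and Corollary~\ref{cor:singleton} gives $\mu_{i,t}(\theta)\to1$ almost surely.

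For (ii), disjointness follows from antisymmetry, $\nu_{\theta',\theta}=-\nu_{\theta,\theta'}$. A point in both $\mathcal R(\theta)$ and $\mathcal R(\theta')$ would need $\sum_s p(s)\nu_{\theta,\theta'}(s)$ to be both positive and negative. For coverage, take $p\notin\mathcal W$. Then all the divergences $\Dkl{p}{l^*(\cdot\mid\theta)}$ are pairwise distinct, so the minimizer is unique and $p$ lies in exactly that minimizer's cell.

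For the measure claim, suppose $l^*(\cdot\mid\theta)\neq l^*(\cdot\mid\theta')$. Both are normalized distributions, so the log-ratio $\nu_{\theta,\theta'}$ cannot be constant in $s$: a constant $c$ would force $l^*(\cdot\mid\theta)=e^{c}l^*(\cdot\mid\theta')$, and summing over $s$ gives $c=0$. A non-constant $\nu$ is not proportional to $\one$, so the zero set of $p\mapsto\sum_s p(s)\nu(s)$ meets the affine hull $\{\sum_s p(s)=1\}$ in a proper affine hyperplane of that $(|S|-1)$-dimensional space. Such a hyperplane has measure zero, and a finite union of them still has measure zero. This is the only step needing real care: without the distinctness assumption, $\nu$ may be constant, which is the twin case of Lemma~\ref{lem:twins}.

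For (iii), $\{f_i\}$ and $\pi^A$ appear nowhere in the definitions of $\mathcal R(\theta)$ or $\mathcal W$. They enter the result only through $\bar f$, as in Proposition~\ref{prop:effective_single_agent}.
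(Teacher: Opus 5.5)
Your proof is correct and follows the paper's argument essentially step by step. The shared steps are: the entropy cancellation that turns each KL comparison into the linear functional $\sum_s p(s)\nu_{\theta,\theta'}(s)$; convexity as a finite intersection of strict superlevel sets; disjointness by antisymmetry; coverage off the walls by pairwise distinct divergences; and the measure-zero claim via non-constancy of $\nu_{\theta,\theta'}$ on the affine hull of $\Delta(S)$.

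One small precision: when $\nu_{\theta,\theta'}=0$, the set $\{p:\sum_s p(s)\nu_{\theta,\theta'}(s)>0\}$ is empty rather than an open half-space. Convexity is unaffected.
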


\begin{remark}[Shape of the cells]\label{rem:cells}
Each cell is relatively open in $\Delta(S)$. It is either empty or the intersection of $\Delta(S)$ with $|\Theta|-1$ open half-spaces, and it is empty whenever its hypothesis shares its consensus interpretation with another hypothesis. A nonempty cell is never closed, so it is not a polytope. Finally, a cell may contain points on the wall between two other hypotheses that both fit worse than its own.
\end{remark}

This geometry makes the division of labor between the two channels visible. Apart from the walls, the simplex is split into decision cells, one per hypothesis, and the population comes to believe the hypothesis whose cell contains the pooled data $\bar f$. The interpretation channel draws the cells. Their boundaries move only if the initial interpretations change or if influence in the interpretation network shifts toward different agents. The belief channel and the data do not touch the boundaries, they only decide where the point $\bar f$ falls. Within the model, then, an intervention on the interpretation channel reshapes the boundaries themselves, whereas an intervention on the belief channel, or on the data each agent observes, only relocates the point. Figure~\ref{fig:simplex} shows an example of such a partition with three signals.

When the pooled data lie on a wall, two hypotheses fit them equally well, $\Dkl{\bar f}{l^*(\cdot\mid\theta_v)}=\Dkl{\bar f}{l^*(\cdot\mid\theta_w)}$. If these are the best-fitting hypotheses, neither is discarded at an exponential rate, and our results do not determine how beliefs split between them. This is not observational equivalence in the usual sense, because the two hypotheses may still predict different signal distributions.

\begin{figure*}[!ht]
    \centering
    \includegraphics[width=0.9\textwidth]{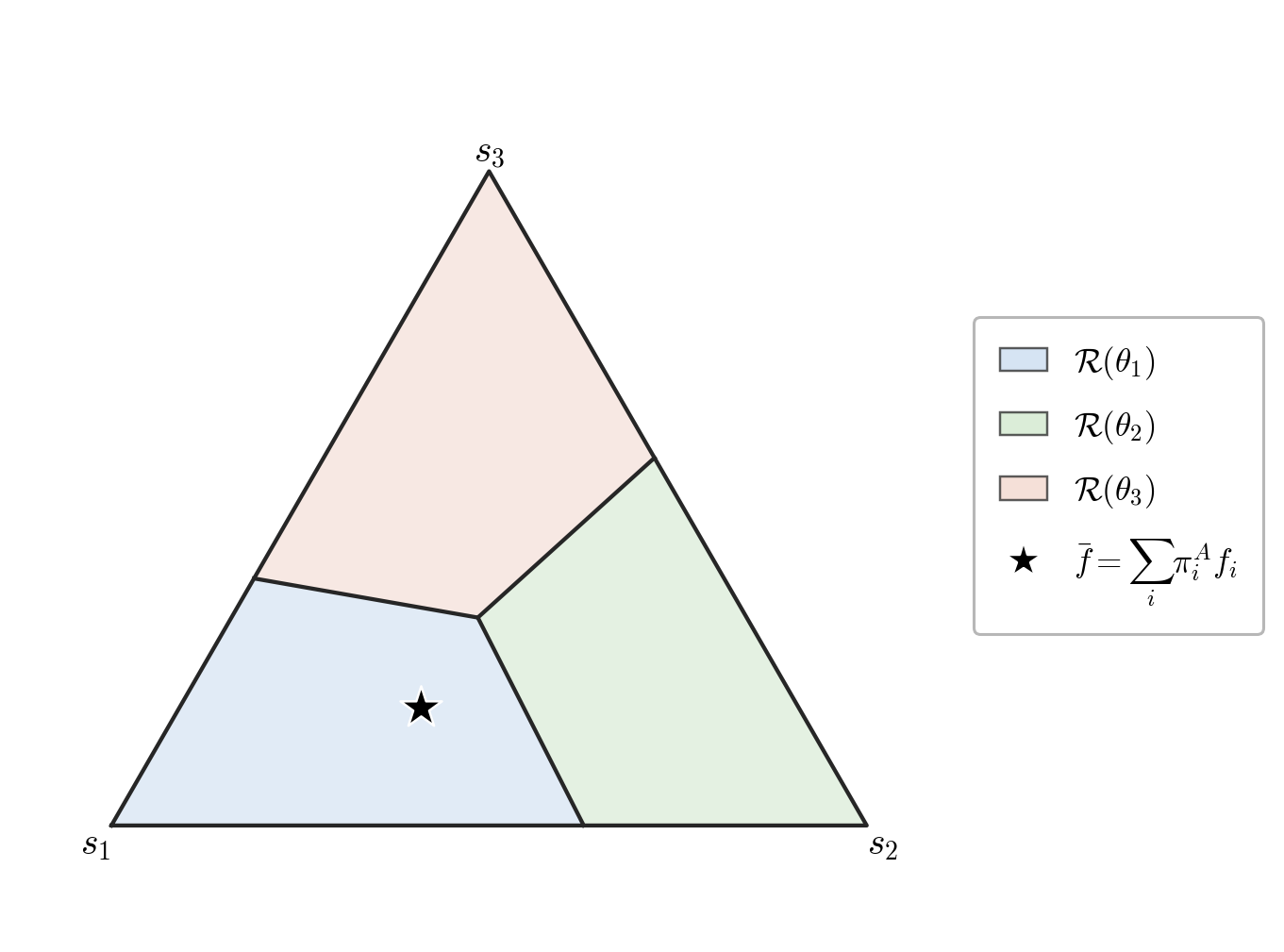}
    \caption{Decision geometry on $\Delta(S)$ with $|S|=3$. Apart from the walls, the simplex is divided into decision cells, and the population learns the hypothesis whose cell contains the pooled data $\bar f$. The cells and the walls between them, here the shaded cell $\mathcal R(\theta^\star)$ in which $\theta^\star$ is learned, are set entirely by the consensus interpretations $l^{*}$, hence by $\{l_{j,0}\}$ and $\pi^{A'}$. The heterogeneous data $\{f_i\}$ and belief influences $\pi^A$ enter only by placing the point $\bar f$ within this fixed geometry.\label{fig:simplex}}
\end{figure*}

\paragraph{In the running example.}
To represent our running example with the geometry presented before, we separate the interpretations from the pooled data. The first will give us the partition of the simplex into sections, each corresponding to one hypothesis. The second will simply put the pooled data in one of the sections, deciding which hypothesis the population ends up believing.

With only two signals, every possible data distribution is described by a single number, the fraction $p$ of severe cases that occur among the unvaccinated, so the simplex $\Delta(S)$ is just the segment $p\in[0,1]$. Any reading of the two hypotheses splits this segment at a single point, the wall $p^*$, where the data fit both hypotheses equally well. To the right of the wall ($p>p^*$) the reading favors $\theta_E$, and to the left ($p<p^*$) it favors $\theta_{NE}$. The position of the wall therefore measures how demanding a standard is: it is the portion of severe cases among the unvaccinated that is needed before the vaccine is judged effective. It is obtained by setting the averaged log-likelihood ratio in \eqref{eq:concrete-llr} to zero, and it depends only on the reading, not on any data.

Figure~\ref{fig:simplex_running_example}(a) applies this to each community in isolation and to the population as a whole. Community~1's moderate standard places its wall at $p^*=0.5$. Its data, with $30\%$ of severe cases among the unvaccinated, fall well to the left, so on its own it would believe the vaccine is not effective. Community~2's demanding standard pushes its wall to $p^*\approx0.64$, but its data ($70\%$) still fall to the right, so on its own it would believe the vaccine is effective. When the communities communicate, the two channels play separate roles. The interpretation channel sets the wall: since Community~2 dominates it, the consensus wall moves close to Community~2's, at $p^*\approx0.63$. The belief channel sets the point being classified, the pooled data $\bar f$, a weighted average of the two communities' observations with weights $\pi^A=(3/8,5/8)$, which gives $\bar f(\mathsf u)=0.55$. No community observes this mixture. It is the data that the population, as a whole, effectively weighs. Since $0.55<0.63$, the point falls on the $\theta_{NE}$ side, and the population comes to believe the vaccine is not effective, as found in Section~\ref{sec:beliefs}. Community~2's influence in the belief channel pulls the point toward its own data, but its influence in the interpretation channel pulls the wall even further to the right.

Figure~\ref{fig:simplex_running_example}(b) repeats this classification for every possible distribution of influence, keeping the initial interpretations and the data fixed. Since the outcome depends on the networks only through their stationary distributions, and with two agents each distribution is fixed by Community~2's weight, every pair of networks $(A,A')$ corresponds to a point in the square: Community~2's influence over interpretations, $\pi^{A'}_2$, on the horizontal axis, and its influence over beliefs, $\pi^A_2$, on the vertical one. Moving right makes the shared standard more demanding, which shifts the wall from $0.5$ toward $0.64$. Moving up gives more weight to Community~2's data, which shifts the pooled data from $0.3$ toward $0.7$. The population believes in $\theta_E$ above the black curve, where the pooled data lie to the right of the wall, and in $\theta_{NE}$ below it. On the curve itself the two hypotheses fit the pooled data equally well, i.e., neither is then discarded at an exponential rate. The running example sits at $(0.9,0.625)$, below the curve. With the same interpretation network as the example studied, the population would need $\pi^A_2>0.826$ to believe the vaccine is effective. More generally, because the wall never falls below $0.5$ nor rises above about $0.64$, the population believes in $\theta_{NE}$ whenever Community 2's belief influence is $\pi^A_2<0.5$, and in $\theta_E$ whenever $\pi^A_2>0.851$, independently of who dominates the interpretation channel. Only between these two values does the interpretation channel decide the outcome.

\begin{figure*}[th!]
    \centering
    \includegraphics[width=0.9\textwidth]{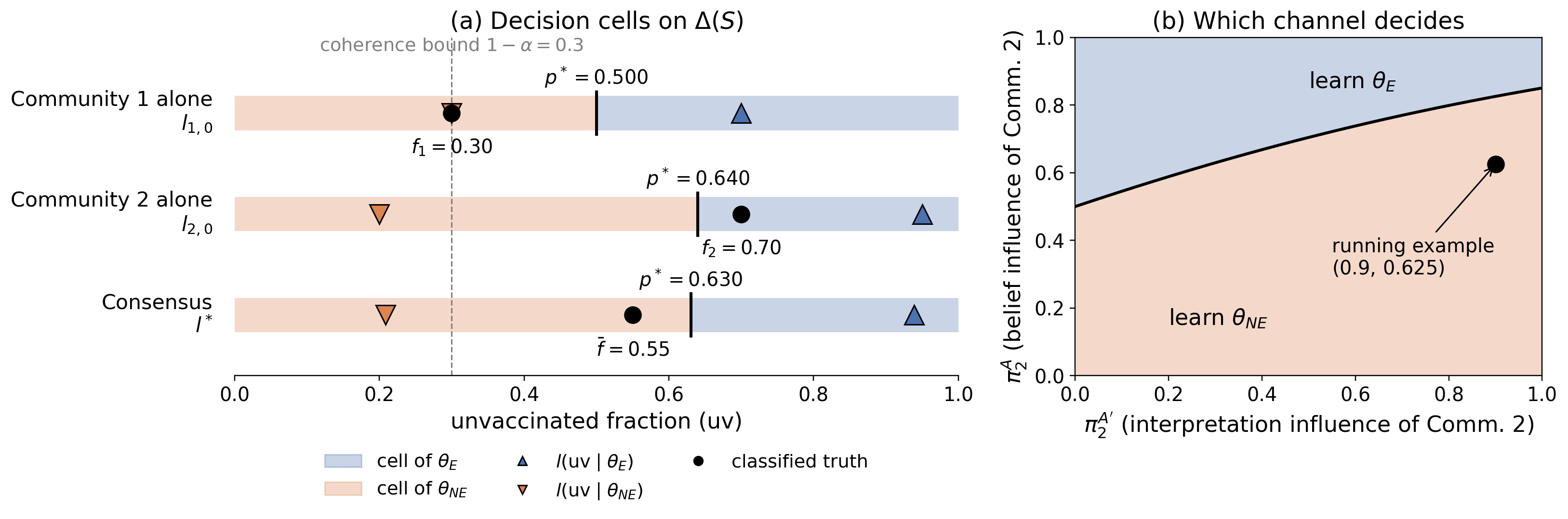}
    \caption{\textbf{(a)} Decision cells on $\Delta(S)$, parametrized by the fraction $p$ of severe cases occurring among unvaccinated individuals. Each row shows the partition induced by a likelihood model: Community~1's initial model $l_{1,0}$, Community~2's initial model $l_{2,0}$, and the consensus model $l^*$ obtained with $\pi^{A'}=(0.1,0.9)$. The shaded intervals are the cells in which $\theta_E$ (blue) or $\theta_{NE}$ (orange) is learned, and the vertical bar marks the wall $p^*$, where both hypotheses fit the data equally well in KL divergence. Triangles mark each model's prediction $l(\mathsf{u}\mid\theta)$ for each hypothesis, and the black dot is the data being classified: $f_1$ and $f_2$ for isolated communities, and the pooled data $\bar f=\sum_i\pi^A_i f_i$, with $\pi^A=(3/8,5/8)$, for the interacting population. The dashed line is the coherence bound $1-\alpha=0.3$ implied by the vaccination coverage $\alpha=0.7$. In isolation, Community~1 learns $\theta_{NE}$ and Community~2 learns $\theta_E$. Because Community~2 is influential in the interpretation channel, the consensus wall moves to $p^*\approx0.630$, and $\bar f=0.55$ falls in the cell of $\theta_{NE}$. 
    \textbf{(b)} Collective outcome as a function of Community~2's interpretation influence $\pi^{A'}_2$ and belief influence $\pi^A_2$. The black curve, $\pi^A_2=(p^*(\pi^{A'}_2)-0.3)/0.4$, is the set of weights for which $\bar f$ lies exactly on the wall. It rises from $0.5$ at $\pi^{A'}_2=0$ to $\approx0.851$ at $\pi^{A'}_2=1$. The interpretation channel moves the wall and the belief channel moves the classified point, so the two weights enter the outcome in separate roles (Proposition~\ref{prop:geometry}). The black dot marks the running example, $(\pi^{A'}_2,\pi^A_2)=(0.9,0.625)$.\label{fig:simplex_running_example}}
\end{figure*}

\section{Sharing Interpretations Versus Sharing Beliefs Only}
\label{sec:comparison}

We now confront the model with the classical benchmark in which agents pool beliefs over the same matrix $A$ but never exchange interpretations (equivalently $A'=I$). Each agent then keeps its own fixed model $l_{i,0}$ over time. We now present the consequences of sharing interpretations in three different scenarios. The first is the misspecified scenario, the most general case analyzed so far, in which there exists no true hypothesis and agents' likelihoods need not match the signal distributions. The second is the well-specified scenario where we specify a true hypothesis, unknown to the population, whose initial interpretations of that true hypothesis actually match the signal distribution. Finally, we keep a ground truth but relax well-specification, so that each interpretation needs only to be coherent with its hypothesis, and some agents may misread their own data even though those data are generated under the ground truth.

\subsection{The Misspecified Benchmark}
\label{subsec:misspecified_benchmark}

In the misspecified scenario, the one studied before, we assume that there is no true hypothesis. Therefore, we ask which hypotheses the network will end up believing. Moreover, the signal distribution need not match any agent's likelihood for any hypothesis ($f_i$ may differ from $l_{i,0}(\cdot\mid\theta)$, $\forall\theta\in\Theta$). 

In the classical model where no interpretations are shared, the theory of \cite{nedic2017fast} defines the group confidence and the optimal hypotheses on which the network concentrates (under Assumption~\ref{ass:standing}) as follows.

\begin{definition}[No-sharing group confidence]\label{def:CNS}
\begin{equation}
    C^{NS}(\theta):=-\sum_{i=1}^n\pi^A_i\,\Dkl{f_i}{l_{i,0}(\cdot\mid\theta)},
    \qquad
    \Theta^{NS}:=\argmax_{\theta\in\Theta}C^{NS}(\theta).
    \label{eq:CNS}
\end{equation}
\end{definition}

\begin{remark}[Fixed likelihood profiles and the $A'=I$ benchmark consistency]\label{rem:fixed-profile}
When $A'=I$ each agent retains its own model, $l_{i,t}=l_{i,0}$ for all $t$, so $A'$ is not primitive and Theorem~\ref{thm:l-consensus} does not apply. The belief-consistency analysis of Section~\ref{sec:beliefs} nonetheless carries over to fixed likelihoods (Appendix~\ref{app:fixed-profile}). Hence every $\mu_{i,t}$ concentrates on $\Theta^{NS}$ exactly as in Theorem~\ref{thm:consistency}, with $C^{NS}$ in place of $C$. For a doubly-stochastic $A$, where $\pi^A=\one/n$, this reduces to the consistency guarantee of \cite{nedic2017fast} on a static network.
\end{remark}

Comparing this definition with the one in \ref{eq:confidence} and \ref{eq:optimal-set} we see that the two functionals differ in a single yet consequential way. $C^{NS}$ evaluates the individual confidence of each agent's data $f_i$ against agent $i$'s own interpretation on a given hypothesis, weighting each agent's confidence. $C$ instead defines the confidence on a given hypothesis as how well the consensus interpretation explains each agent's data. Then, individual confidence may vary when the data is compared to a different interpretation, and with this, the collective optimal hypotheses set.

\begin{theorem}[Likelihood sharing can overturn belief-only learning]
\label{thm:reversal}
There exist $(A,A')$ satisfying (A1), data $\{f_i\}$ satisfying (A3), full-support initial profiles $\{l_{i,0}\}$ and initial beliefs satisfying (A2), and hypotheses $\theta_1\neq\theta_2$, such that

\begin{equation}
    C^{NS}(\theta_1)>C^{NS}(\theta_2) \qquad\text{while}\qquad C(\theta_1)<C(\theta_2).
    \label{eq:reversal}
\end{equation}

Moreover, for any configuration satisfying Assumption~\ref{ass:standing}, writing $\mathfrak D_{ij}(\theta):=\Dkl{f_i}{l_{j,0}(\cdot\mid\theta)}\in[0,\infty)$, the reversal \eqref{eq:reversal} holds if and only if
\begin{equation}
  \underbrace{\sum_{i=1}^n \pi^A_i\bigl(\mathfrak D_{ii}(\theta_2)-\mathfrak D_{ii}(\theta_1)\bigr)}_{\text{matched (belief-only) ranking}}
  \ >\ 0\ >\
  \underbrace{\pi^{A\top}\bigl(\mathfrak{D}(\theta_2)-\mathfrak{D}(\theta_1)\bigr)\pi^{A'}
  + \bigl(\log \Psi^{\theta_2}_\infty-\log \Psi^{\theta_1}_\infty\bigr)}_{\text{pooled (interpretation-sharing) ranking}} .
  \label{eq:reversal-box}
\end{equation}

Consequently, when $\Theta^{NS}=\{\theta_1\}$ and $\Theta^{*}=\{\theta_2\}$, Corollary~\ref{cor:singleton} gives $\mu_{i,t}(\theta_2)\to1$ a.s.\ for every $i$ under likelihood sharing, whereas the benchmark dynamics over the same belief network $A$ concentrate on $\theta_1$ ($\mu_{i,t}(\theta_1)\to1$).
\end{theorem}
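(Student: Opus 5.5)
The plan has three parts: an algebraic expansion of $C$ that yields the characterization \eqref{eq:reversal-box}, an explicit instance for the existence claim, and a transfer of the asymptotic statements from Corollary~\ref{cor:singleton} and Remark~\ref{rem:fixed-profile}.

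\textbf{Step 1: the characterization.} The belief-only side is immediate from Definition~\ref{def:CNS}:
\[
C^{NS}(\theta_1)-C^{NS}(\theta_2)=\sum_i\pi^A_i\bigl(\mathfrak D_{ii}(\theta_2)-\mathfrak D_{ii}(\theta_1)\bigr).
\]
So $C^{NS}(\theta_1)>C^{NS}(\theta_2)$ is exactly the left inequality in \eqref{eq:reversal-box}.

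For the sharing side, take $\log$ of the consensus interpretation in \eqref{eq:lstar}:
\[
\log l^*(s\mid\theta)=\sum_j\pi^{A'}_j\log l_{j,0}(s\mid\theta)-\log\Psi^\theta_\infty.
\]
Substitute this into $\Dkl{f_i}{l^*(\cdot\mid\theta)}=\sum_s f_i(s)\log f_i(s)-\sum_s f_i(s)\log l^*(s\mid\theta)$. Since $\sum_j\pi^{A'}_j=1$, the entropy term $\sum_s f_i\log f_i$ can be redistributed across $j$. This gives
\[
\Dkl{f_i}{l^*(\cdot\mid\theta)}=\sum_j\pi^{A'}_j\,\mathfrak D_{ij}(\theta)+\log\Psi^\theta_\infty.
\]
Weighting by $\pi^A_i$ and summing over $i$ (using $\sum_i\pi^A_i=1$ for the $\Psi$ term) yields
\[
-C(\theta)=\pi^{A\top}\mathfrak D(\theta)\pi^{A'}+\log\Psi^\theta_\infty.
\]
Hence $C(\theta_1)<C(\theta_2)$ is equivalent to $\pi^{A\top}\bigl(\mathfrak D(\theta_2)-\mathfrak D(\theta_1)\bigr)\pi^{A'}+\log\Psi^{\theta_2}_\infty-\log\Psi^{\theta_1}_\infty<0$, which is the right inequality. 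All quantities are finite by (A2)–(A3).

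\textbf{Step 2: existence.} I will use the running example with $\theta_1=\theta_E$ and $\theta_2=\theta_{NE}$. Both $A$ and $A'$ are $2\times2$ with all entries positive, so they are primitive, and all profiles have full support, so Assumption~\ref{ass:standing} holds.

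The paper already computed $C(\theta_E)\approx-0.68<C(\theta_{NE})\approx-0.36$. It remains to verify $C^{NS}(\theta_E)>C^{NS}(\theta_{NE})$ with $\pi^A=(3/8,5/8)$. This does not hold automatically, because Community 1 alone favors $\theta_{NE}$, so I compute it directly.

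Community 1 has $\mathfrak D_{11}(\theta_{NE})=0$, and
\[
\mathfrak D_{11}(\theta_E)=0.3\log\tfrac{0.3}{0.7}+0.7\log\tfrac{0.7}{0.3}=0.4\log\tfrac73\approx0.339.
\]
Community 2 has
\[
\mathfrak D_{22}(\theta_E)=0.7\log\tfrac{0.7}{0.95}+0.3\log\tfrac{0.3}{0.05}\approx-0.214+0.538=0.324,
\]
\[
\mathfrak D_{22}(\theta_{NE})=0.7\log 3.5+0.3\log 0.375\approx0.877-0.294=0.583.
\]
The left side of \eqref{eq:reversal-box} is then
\[
\tfrac38(0-0.339)+\tfrac58(0.583-0.324)\approx-0.127+0.162=0.035>0,
\]
so the reversal holds.

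If this margin proved fragile under exact computation, the fallback is to raise $\pi^A_2$ via a different $A$, keeping $A'$ fixed. By Figure~\ref{fig:simplex_running_example}(b), $C$ still favors $\theta_{NE}$ as long as $\pi^A_2<0.826$, which leaves room to adjust. The initial beliefs are arbitrary full-support ones, say uniform.

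\textbf{Step 3: the consequence.} Under sharing, $\Theta^*=\{\theta_2\}$, so Corollary~\ref{cor:singleton} gives $\mu_{i,t}(\theta_2)\to1$ a.s. For the benchmark, Remark~\ref{rem:fixed-profile} says Theorem~\ref{thm:consistency} holds for fixed profiles with $C^{NS}$ in place of $C$. With $\Theta^{NS}=\{\theta_1\}$, every other hypothesis then has belief tending to zero a.s., and since $\Theta$ is finite, $\mu_{i,t}(\theta_1)\to1$.

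The main obstacle is Step 2: the numerical check must produce a strict inequality with exact rather than rounded values. I will therefore give explicit logarithm bounds for each term. As a backstop, I can build a cleaner two-signal instance: a highly influential interpreter whose readings of $\theta_1$ and $\theta_2$ swap relative to the data-holding agents' own readings, which makes both inequalities hold by a wide margin.
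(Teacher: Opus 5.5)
Your proof is correct, and Steps 1 and 3 are the paper's own argument. Substituting $\log l^*$ into the divergence gives $-C(\theta)=\pi^{A\top}\mathfrak D(\theta)\pi^{A'}+\log\Psi^\theta_\infty$. The a.s.\ consequences follow from Corollary~\ref{cor:singleton}, and for the benchmark from Remark~\ref{rem:fixed-profile}, which you cite correctly.

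Where you differ is the witness for existence. The paper builds its own instance:
\begin{itemize}
\item two agents with a common truth $f=(0.9,0.1)$;
\item rank-one networks with $\pi^A=(3/4,1/4)$ and $\pi^{A'}=(1/4,3/4)$;
\item mutually swapped models $(0.9,0.1)$ and $(0.1,0.9)$.
\end{itemize}
Then $l^*(\cdot\mid\theta_1)=(1/4,3/4)$ exactly, and both gaps equal $0.8\log 3$ in closed form, so no numerical estimates are needed. That example also shows a reversal needs no data heterogeneity, since under sharing $\bar f=f$ whatever $\pi^A$ is. Your backstop construction is essentially this one.

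Your primary witness, the running example, is also valid. Indeed
\[
C^{NS}(\theta_E)-C^{NS}(\theta_{NE})=\tfrac58\bigl(0.7\log 4.75-1.2\log 2\bigr)-\tfrac{3}{20}\log\tfrac73\approx 0.0347>0 .
\]
Its merit is that the paper's motivating scenario, with heterogeneous data, becomes itself an instance of the theorem.

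Its cost is twofold. The belief-only margin is thin, so a rigorous write-up needs explicit logarithm bounds. You also quote the running example's rounded values of $C$ rather than deriving them. The easiest fix is Proposition~\ref{prop:effective_single_agent} with $\bar f=(0.55,0.45)$. There the margin $C(\theta_{NE})-C(\theta_E)\approx 0.33$ is wide enough that crude bounds on $l^*$ suffice.

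Your fallback also has room. The belief-only ranking favors $\theta_E$ exactly when $\pi^A_2$ exceeds roughly $0.567$. With $\pi^{A'}$ fixed, the sharing ranking favors $\theta_{NE}$ when $\pi^A_2<0.826$.
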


The two sides of \eqref{eq:reversal-box} are exactly the two group-confidence comparisons. The left expression is the diagonal (matched) contrast: it sums each agent's own-model KL advantage of $\theta_1$ over $\theta_2$, weighted by $\pi^A$, and its sign is the belief-only ranking. The right expression is the full bilinear form in $\pi^A$ and $\pi^{A'}$, plus the diversity-bonus gap ($\log \Psi^{\theta_2}_\infty-\log \Psi^{\theta_1}_\infty$) between the two hypotheses, and its sign is the pooled ranking. A reversal is precisely a sign disagreement between them.

\paragraph{In the running example.}

The running example is an instance of the reversal in Theorem~\ref{thm:reversal}. To see this, consider the same two communities and the same belief network $A$, but suppose they share only their beliefs. Each community then keeps its own standard and compares its own data with it, and the no-sharing group confidence of Definition~\ref{def:CNS} weights these comparisons by belief influence, as before. Community~1 reads its data ($30\%$ of severe cases among the unvaccinated) with its moderate standard, which favors $\theta_{NE}$, but only by a moderate margin. Community~2 reads its data ($70\%$) with its demanding standard, which favors $\theta_E$ by a slightly smaller margin. Since Community~2's comparison counts $5/8$ against $3/8$, the scores are

\begin{equation*}
    C^{NS}(\theta_E)\approx-0.33,\qquad C^{NS}(\theta_{NE})\approx-0.36,
\end{equation*}

so $\Theta^{NS}=\{\theta_E\}$ and the population comes to believe that the vaccine is effective. The two scores are close, so by Theorem~\ref{thm:consistency} the log-odds against $\theta_{NE}$ grow on average by only about $0.03$ per period, roughly ten times more slowly than the population discards $\theta_E$ when interpretations are shared.

Sharing interpretations therefore reverses the outcome: the same population, over the same belief network and with the same data, believes that the vaccine is effective if it shares only beliefs, and that it is not if it also shares interpretations. The reversal comes almost entirely from how Community~1 reads its own data. Community~2's data favor $\theta_E$ by a similar margin whether they are read with its own standard or with the consensus, which is close to it. Community~1's data, however, favor $\theta_{NE}$ by a margin about four times larger once they are read with the demanding consensus standard instead of Community~1's moderate one. Community~1 does not change its belief because it trusts Community~2 more, it changes because it has adopted a standard under which its own data look much more clearly like an ineffective vaccine. In the terms of Theorem~\ref{thm:reversal}, the belief-only ranking and the ranking under sharing have opposite signs.

Figure~\ref{fig:reversal} shows one simulated run of each scenario. Beliefs fluctuate with the random signals, but in each run they eventually settle on a different hypothesis, as the theory predicts.

\begin{figure*}[t]
    \centering
    \includegraphics[width=0.95\textwidth]{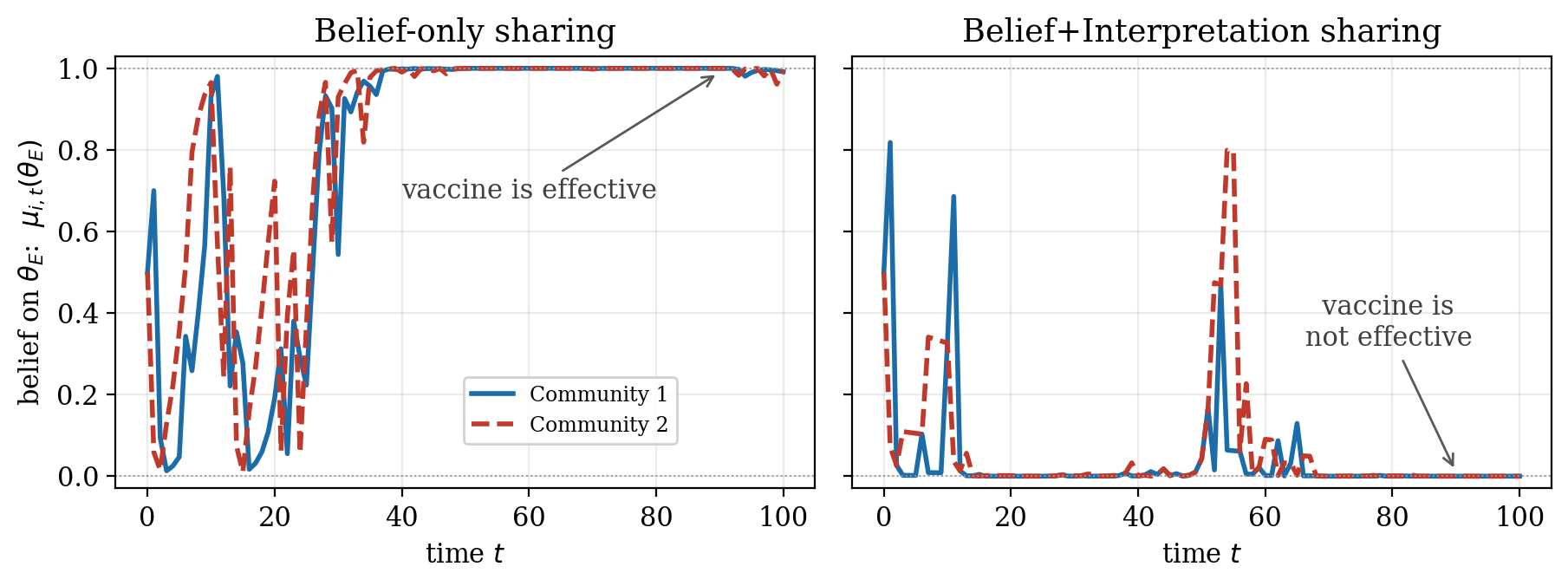}
    \caption{Simulation of the belief dynamics in the running example. Both panels show a run with $\pi^A=(\frac{3}{8}, \frac{5}{8})$, $\pi^{A'}=(\frac{1}{10}, \frac{9}{10})$, the truths $f_1=(0.3,0.7)$ and $f_2=(0.7,0.3)$, and the initial interpretations as in Table~\ref{tab:ex-frameworks}. The two figures differ in whether interpretations are shared. Each curve is a single representative run.}
    \label{fig:reversal}
\end{figure*}

\subsection{The Well-Specified Benchmark}
\label{subsec:wellspecified}

We now analyze the dynamics in the model that the social-learning literature usually adopts. We assume a common true hypothesis $\theta^\star$ under which every agent is initially correctly specified, i.e. would read the data exactly:

\begin{equation}
  f_i=l_{i,0}(\cdot\mid\theta^\star)\qquad\text{for every }i\in N.
  \label{eq:wellspec}
\end{equation}

Under \eqref{eq:wellspec} the classical model ($A'=I$) learns $\theta^\star$ whenever the network is collectively informative, that is, whenever:

\begin{equation}
  \text{for every }\theta\neq\theta^\star\text{ there is }i\in N\text{ with }
  l_{i,0}(\cdot\mid\theta)\neq f_i .
  \label{eq:collid}
\end{equation}

This means that no single agent needs to distinguish all hypotheses on its own, but the agents' interpretations, taken together, must be able to. Condition \eqref{eq:collid} is an assumption on the initial interpretations and it does not follow from Assumption~\ref{ass:standing}. Strong connectivity is what allows every agent's evidence to influence every other agent's beliefs.

\subsubsection{The no-sharing benchmark recovers the truth}

\begin{proposition}[Belief-only sharing learns $\theta^\star$]\label{prop:ws-benchmark}
Under \eqref{eq:wellspec}, \eqref{eq:collid}, and $A'=I$,

\begin{equation*}
    C^{NS}(\theta^\star)=0=\max_{\theta\in\Theta}C^{NS}(\theta), \qquad \Theta^{NS}=\{\theta^\star\},
\end{equation*}

and hence $\mu_{i,t}(\theta^\star)\to1$ a.s.\ for every $i\in N$.
\end{proposition}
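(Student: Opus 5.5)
The plan is to compute $C^{NS}$ term by term from Definition~\ref{def:CNS}, use nonnegativity and the equality case of the KL divergence, and then invoke the fixed-profile version of consistency from Remark~\ref{rem:fixed-profile}.

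First, for $\theta^\star$: by \eqref{eq:wellspec}, each summand is $\Dkl{f_i}{l_{i,0}(\cdot\mid\theta^\star)}=\Dkl{f_i}{f_i}=0$, so $C^{NS}(\theta^\star)=0$. Since every KL divergence is nonnegative and $\pi^A_i>0$, we have $C^{NS}(\theta)\le 0$ for every $\theta$. Hence $0$ is the maximum and $\theta^\star\in\Theta^{NS}$.

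Next, for uniqueness, fix $\theta\neq\theta^\star$. Condition \eqref{eq:collid} supplies some $i$ with $l_{i,0}(\cdot\mid\theta)\neq f_i$. The KL divergence vanishes only when its two arguments coincide, and it is finite here by (A2)--(A3). So $\Dkl{f_i}{l_{i,0}(\cdot\mid\theta)}>0$. By (A1) that agent's weight satisfies $\pi^A_i>0$, which gives $C^{NS}(\theta)\le -\pi^A_i\,\Dkl{f_i}{l_{i,0}(\cdot\mid\theta)}<0$. Therefore $\Theta^{NS}=\{\theta^\star\}$.

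Finally, for convergence: with $A'=I$ the likelihoods stay fixed at $l_{i,0}$, so Theorem~\ref{thm:l-consensus} does not apply. Remark~\ref{rem:fixed-profile} (Appendix~\ref{app:fixed-profile}) instead gives Theorem~\ref{thm:consistency} with $C^{NS}$ in place of $C$. Every $\theta\neq\theta^\star$ has $C^{NS}(\theta)-0<0$, so the log-ratio $\log\bigl(\mu_{i,t}(\theta)/\mu_{i,t}(\theta^\star)\bigr)\to-\infty$ a.s., and $\mu_{i,t}(\theta)\to0$ a.s. Since $\Theta$ is finite, $\mu_{i,t}(\theta^\star)=1-\sum_{\theta\neq\theta^\star}\mu_{i,t}(\theta)\to1$ a.s., exactly as in Corollary~\ref{cor:singleton}.

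The only step needing care is this last one: I must confirm that the fixed-profile extension really covers the benchmark, i.e. that the analysis only needs (A1) for $A$ and not primitivity of $A'$. I am taking that from the Remark as stated. Everything else is elementary.
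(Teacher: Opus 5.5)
Your proposal is correct and follows the paper's proof step for step. You get $C^{NS}(\theta^\star)=0$ from \eqref{eq:wellspec}, nonpositivity of $C^{NS}$ from KL nonnegativity, strict negativity for each rival via \eqref{eq:collid} and $\pi^A_i>0$, and then the almost-sure limit from Remark~\ref{rem:fixed-profile}. The point you flag is indeed covered there: the appendix argument for that remark needs only row-stochasticity of $A'$ (for Lemma~\ref{lem:unifbound}) and primitivity of $A$.
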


The benchmark achieves the best possible score $C^{NS}(\theta^\star)=0$ because each agent's own model fits its own data perfectly. Keeping the models private fully exploits every agent's locally correct interpretation exactly.

\subsubsection{The self-inflicted misspecification created by sharing}

Under \eqref{eq:wellspec}, when agents share their readings, the consensus interpretation of the true hypothesis is the $\pi^{A'}$-logarithmic pool of the truths themselves,

\begin{equation}
    l^*(\cdot\mid\theta^\star) =\frac{\prod_{j}f_j(\cdot)^{\pi^{A'}_j}}{\Psi^{\theta^\star}_\infty}
    \label{eq:lstar-truth}
\end{equation}

In contrast, the pooled truth is the pooled data defined in Proposition~\ref{prop:effective_single_agent}. The two objects (interpretation and pooled truth) are two different aggregates of the same family $\{f_j\}$. They differ both in the averaging rule (geometric versus arithmetic) and in the weights ($\pi^{A'}$ versus $\pi^A$). Consequently the divergence

\begin{equation}
  \varepsilon^\star:=\Dkl{\bar f}{l^*(\cdot\mid\theta^\star)}\ \ge\ 0
  \label{eq:eps-star}
\end{equation}

is in general strictly positive (see the proof of Proposition~\ref{prop:ws-hurt} for an explicit instance). For example, it vanishes under homogeneity $f_1=\dots=f_n$. The quantity $\varepsilon^\star$ is a misspecification error that sharing interpretations introduces at the true hypothesis. 

\begin{proposition}[Exact survival condition under sharing]\label{prop:ws-survival}
Under \eqref{eq:wellspec}, sharing burdens the true hypothesis with the self-inflicted misspecification $\varepsilon^\star$ of \eqref{eq:eps-star}. The truth survives exactly when no rival hypothesis fits the pooled truth better than this misspecification error:

\begin{equation}
    \theta^\star\in\Theta^* \iff \varepsilon^\star\le\Dkl{\bar f}{l^*(\cdot\mid\theta)} \quad\text{for every }\theta\in\Theta,
    \label{eq:ws-survival-eps}
\end{equation}

and the population learns it uniquely, $\Theta^*=\{\theta^\star\}$, iff the inequality is strict for every rival $\theta\neq\theta^\star$.
\end{proposition}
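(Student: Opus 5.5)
The plan is to read the survival condition directly off the effective single-agent reduction, Proposition~\ref{prop:effective_single_agent}. That proposition gives
\[
\Theta^*=\argmin_{\theta\in\Theta}\Dkl{\bar f}{l^*(\cdot\mid\theta)}.
\]
The well-specification assumption \eqref{eq:wellspec} enters only to identify $l^*(\cdot\mid\theta^\star)$. By Definition~\ref{def:l-consensus}, substituting $l_{j,0}(\cdot\mid\theta^\star)=f_j$ produces \eqref{eq:lstar-truth}, which is well defined and has full support by (A2)--(A3). So the quantity $\Dkl{\bar f}{l^*(\cdot\mid\theta^\star)}$ appearing in the argmin is exactly $\varepsilon^\star$ of \eqref{eq:eps-star}.

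The steps, in order, are as follows. First, I recall that $\bar f\in\ri\Delta(S)$ and that every $l^*(\cdot\mid\theta)\in\ri\Delta(S)$. Hence all divergences $D_\theta:=\Dkl{\bar f}{l^*(\cdot\mid\theta)}$ are finite real numbers, and $C(\theta)=-D_\theta+\text{const}$ by \eqref{eq:confidence-gap-pooled}, where the constant does not depend on $\theta$.

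Second, for the membership claim I use the definition of the argmin over the finite set $\Theta$. We have $\theta^\star\in\Theta^*$ iff $D_{\theta^\star}\le D_\theta$ for all $\theta$, that is, iff $\varepsilon^\star\le D_\theta$ for every $\theta\in\Theta$. The case $\theta=\theta^\star$ is trivial, so quantifying over all of $\Theta$ or only over the rivals is equivalent. This gives \eqref{eq:ws-survival-eps}.

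Third, for uniqueness, $\Theta^*=\{\theta^\star\}$ holds iff $D_{\theta^\star}<D_\theta$ for every $\theta\neq\theta^\star$, which is the strict version of the inequality. In that case Corollary~\ref{cor:singleton} also yields $\mu_{i,t}(\theta^\star)\to1$ a.s.

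No step is a genuine obstacle, since everything reduces to bookkeeping with earlier results. The one point needing care is that \eqref{eq:confidence-gap-pooled} is an identity between differences of $C$, not a formula for $C$ itself. I will therefore phrase the argmax/argmin equivalence through these differences, using $C(\theta^\star)-C(\theta)=D_\theta-D_{\theta^\star}$, rather than claiming $C(\theta)=-D_\theta$. I will also remark that the phrase ``burdens the true hypothesis with $\varepsilon^\star$'' is interpretive: it is the observation that $\varepsilon^\star\ge0$ replaces the value $0$ attained in the no-sharing benchmark (Proposition~\ref{prop:ws-benchmark}).
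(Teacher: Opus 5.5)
Your proposal is correct and follows essentially the same route as the paper: the paper also reads both equivalences directly off \eqref{eq:effective-argmin} and identifies $\Dkl{\bar f}{l^*(\cdot\mid\theta^\star)}$ with $\varepsilon^\star$ through \eqref{eq:eps-star} and \eqref{eq:lstar-truth}. Your extra care about phrasing things through differences of $C$ is harmless but unnecessary, since the proof of Proposition~\ref{prop:effective_single_agent} already shows $C(\theta)=K-\Dkl{\bar f}{l^*(\cdot\mid\theta)}$ with $K$ independent of $\theta$.
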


\subsubsection{Two regimes: safe under agreement, harmful under heterogeneity}

The behavior of the interpretation channel now splits according to whether the agents' signal distributions agree.

\begin{proposition}[Homogeneous truths: sharing is safe]\label{prop:ws-homog}
Under \eqref{eq:wellspec}, if $f_1=\dots=f_n=f$ then $l^*(\cdot\mid\theta^\star)=f=\bar f$, so $\varepsilon^\star=0$ and $\theta^\star\in\Theta^*$. Generically $\Theta^*=\{\theta^\star\}$ and sharing learns $\theta^\star$. Uniqueness can fail only on the measure-zero set of configurations for which some $\theta\neq\theta^\star$ satisfies $l^*(\cdot\mid\theta)=f$.
\end{proposition}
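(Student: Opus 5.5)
The argument is a direct computation followed by an appeal to Proposition~\ref{prop:ws-survival}, Proposition~\ref{prop:effective_single_agent} and Corollary~\ref{cor:singleton}. The plan is to verify $\varepsilon^\star=0$, deduce that $\theta^\star$ survives, and then show that strict survival fails only on a thin set of configurations.

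First I will compute the consensus interpretation of the truth. Under \eqref{eq:wellspec} with $f_j=f$ for all $j$, formula \eqref{eq:lstar-truth} gives
\[
l^*(s\mid\theta^\star)=\frac{\prod_j f(s)^{\pi^{A'}_j}}{\Psi^{\theta^\star}_\infty}=\frac{f(s)}{\sum_{s'}f(s')}=f(s),
\]
because $\sum_j\pi^{A'}_j=1$ and $\sum_{s'}f(s')=1$. Likewise $\bar f=\sum_i\pi^A_i f=f$, since $\pi^A$ is a probability vector (this is Corollary~\ref{cor:singletruth}). Hence $\varepsilon^\star=\Dkl{f}{f}=0$.

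Next, nonnegativity of KL gives $\varepsilon^\star=0\le\Dkl{\bar f}{l^*(\cdot\mid\theta)}$ for every $\theta$. By \eqref{eq:ws-survival-eps}, $\theta^\star\in\Theta^*$. For uniqueness, Proposition~\ref{prop:ws-survival} requires $\Dkl{f}{l^*(\cdot\mid\theta)}>0$ for all $\theta\neq\theta^\star$. Since $\Dkl{p}{q}=0$ iff $p=q$, this holds exactly when no rival $\theta\neq\theta^\star$ has $l^*(\cdot\mid\theta)=f$. In that case $\Theta^*=\{\theta^\star\}$, and Corollary~\ref{cor:singleton} gives $\mu_{i,t}(\theta^\star)\to1$ a.s. So uniqueness fails precisely on the stated exceptional set.

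The main obstacle is giving a precise meaning to ``measure-zero''. I will parametrize configurations by the initial rival profiles $(l_{j,0}(\cdot\mid\theta))_{j,\theta\neq\theta^\star}\in(\ri\Delta(S))^{n(m-1)}$ with Lebesgue measure, holding $f$ and $\pi^{A'}$ fixed. Fix a rival $\theta$. In log-ratio coordinates $r_j(s)=\log\bigl(l_{j,0}(s\mid\theta)/l_{j,0}(s_0\mid\theta)\bigr)$ for a reference signal $s_0$, the condition $l^*(\cdot\mid\theta)=f$ reads
\[
\sum_j\pi^{A'}_j r_j(s)=\log\frac{f(s)}{f(s_0)}\qquad\text{for all }s\neq s_0,
\]
by Lemma~\ref{lem:logratio} and the construction of $l^*$. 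This is a system of $|S|-1\ge1$ affine equations, each with a nonzero coefficient since $\pi^{A'}_j>0$. Its solution set is therefore a proper affine subspace of the log-ratio coordinates. The coordinate map from $\ri\Delta(S)$ to $\R^{|S|-1}$ is a diffeomorphism, so the preimage has Lebesgue measure zero. A finite union over the rivals $\theta$ is still null, which establishes genericity.
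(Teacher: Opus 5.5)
Your proof is correct and follows the paper's argument. You compute $l^*(\cdot\mid\theta^\star)=f=\bar f$, use Gibbs' inequality to get survival and the uniqueness criterion, and show the tie set $\{l^*(\cdot\mid\theta)=f\}$ is null through the log-ratio identity in which $\Psi^\theta_\infty$ cancels. The only difference is cosmetic: the paper fixes agents $j\ge2$, solves uniquely for $l_{1,0}(\cdot\mid\theta)$ and applies Fubini, whereas you treat the tie set directly as a proper affine subspace in joint log-ratio coordinates, which is equally valid.
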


\begin{proposition}[Sharing never rescues in the well-specified case]\label{prop:ws-norescue}
Assume \eqref{eq:wellspec}, let $A$ and $A'$ satisfy (A1), and fix initial beliefs satisfying (A2). If interpretation sharing makes the population learn the true state ($\mu_{i,t}(\theta^\star)\to1$ a.s.\ for every $i\in N$), then belief-only learning, the same dynamics with $A'$ replaced by $I$, also learns the truth ($\mu^{NS}_{i,t}(\theta^\star)\to1$ a.s.\ for every $i\in N$). Equivalently, no configuration exists in which sharing leads every agent to learn $\theta^\star$ while belief-only learning does not. Sharing can only preserve or destroy the benchmark's learning of $\theta^\star$, never create it.
\end{proposition}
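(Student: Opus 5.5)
The plan is to argue by contraposition, reducing everything to two results already in hand: Proposition~\ref{prop:ws-benchmark} for the belief-only side and Lemma~\ref{lem:twins} for the sharing side. By Proposition~\ref{prop:ws-benchmark}, under \eqref{eq:wellspec} the benchmark with $A'=I$ learns $\theta^\star$ as soon as the collective informativeness condition \eqref{eq:collid} holds. So it suffices to show the following: if \eqref{eq:collid} fails, then interpretation sharing cannot make any agent's belief in $\theta^\star$ converge to $1$, on any sample path and for any primitive $A'$.

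Suppose \eqref{eq:collid} fails. Then there is some $\theta\neq\theta^\star$ such that $l_{i,0}(\cdot\mid\theta)=f_i$ for every $i\in N$. Combined with \eqref{eq:wellspec}, this gives
\begin{equation*}
l_{i,0}(\cdot\mid\theta)=f_i=l_{i,0}(\cdot\mid\theta^\star)\qquad\text{for every } i\in N.
\end{equation*}
Hence $\theta$ and $\theta^\star$ are twin hypotheses in the sense of Lemma~\ref{lem:twins}. That lemma applies to any row-stochastic $A'$, in particular to every $A'$ satisfying (A1), and to any initial beliefs satisfying (A2). Its conclusion (with the roles of $\theta$ and $\theta'$ exchanged, which is the ``by symmetry'' clause) is that
\begin{equation*}
\mu_{i,t}(\theta^\star)\le\frac{1}{1+e^{-c}}<1\qquad\text{for all } t,\ i, \text{ and all signal realizations},
\end{equation*}
where $c$ is finite by (A2). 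Therefore $\mu_{i,t}(\theta^\star)\to1$ fails surely, and a fortiori almost surely. This contradicts the hypothesis that sharing learns $\theta^\star$. So \eqref{eq:collid} must hold, and Proposition~\ref{prop:ws-benchmark} then gives $\mu^{NS}_{i,t}(\theta^\star)\to1$ a.s.\ for every $i$.

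Note that the argument never uses the specific value of $\varepsilon^\star$ or Proposition~\ref{prop:ws-survival}. The only failure mode of the benchmark under well-specification is the presence of a twin of $\theta^\star$, and twins are invariant under any pooling. The one point I would check carefully is that Lemma~\ref{lem:twins} is stated for the full dynamics \eqref{eq:S1}--\eqref{eq:S3} with a general $A'$, not only for $A'=I$; the statement explicitly covers this case. With that confirmed, no further estimates are needed. The ``equivalently'' sentence in the statement is just the contrapositive restated.
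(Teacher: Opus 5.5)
Your proof is correct and follows the paper's argument: the failure of \eqref{eq:collid} under \eqref{eq:wellspec} produces a twin of $\theta^\star$, and Lemma~\ref{lem:twins} then keeps $\mu_{i,t}(\theta^\star)$ bounded away from $1$ on every sample path. The only difference is the ordering: you derive \eqref{eq:collid} from the success of sharing and then apply Proposition~\ref{prop:ws-benchmark}, whereas the paper starts from the benchmark's failure and concludes that sharing fails too.
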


\begin{proposition}[Heterogeneous truths: sharing can strictly hurt, robustly]
\label{prop:ws-hurt}
There is a nonempty open set of configurations satisfying \eqref{eq:wellspec} with heterogeneous truths such that $\Theta^{NS}=\{\theta^\star\}$ while $\theta^\star\notin\Theta^*$. In other words, belief-only sharing learns the truth ($\mu_{i,t}(\theta^\star)\to1$ a.s.), whereas likelihood sharing drives every agent toward false hypotheses.
\end{proposition}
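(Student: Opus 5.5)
The plan is to exhibit one explicit configuration with two agents, two signals and two hypotheses $\Theta=\{\theta^\star,\theta'\}$ in which both strict inequalities $C^{NS}(\theta^\star)>C^{NS}(\theta')$ and $C(\theta')>C(\theta^\star)$ hold. Openness will then follow from continuity. Throughout I parametrize a configuration satisfying \eqref{eq:wellspec} by $(A,A',\{f_i\},\{l_{j,0}(\cdot\mid\theta')\})$. The profile $l_{j,0}(\cdot\mid\theta^\star)$ is not a free parameter, because \eqref{eq:wellspec} forces $l_{j,0}(\cdot\mid\theta^\star)=f_j$.

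\emph{Construction.} Take $S=\{s_1,s_2\}$ and $f_1=(0.9,0.1)$, $f_2=(0.1,0.9)$. Take a doubly stochastic primitive $A$, for instance $A=\tfrac12\one\one^\top$, so that $\pi^A=(\tfrac12,\tfrac12)$ and $\bar f=(\tfrac12,\tfrac12)$. Take a primitive $A'$ with $\pi^{A'}=(0.9,0.1)$, for instance with both rows equal to $(0.9,0.1)$. Set $l_{1,0}(\cdot\mid\theta')=l_{2,0}(\cdot\mid\theta')=\bar f$.

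\emph{Benchmark side.} Condition \eqref{eq:collid} holds because $\bar f\neq f_1$. Proposition~\ref{prop:ws-benchmark} then gives $\Theta^{NS}=\{\theta^\star\}$, that is, $C^{NS}(\theta^\star)=0>C^{NS}(\theta')$.

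\emph{Sharing side.} Two facts settle this side.
\begin{itemize}
\item Since the pool of identical distributions is that distribution, Definition~\ref{def:l-consensus} gives $l^*(\cdot\mid\theta')=\bar f$, hence $\Dkl{\bar f}{l^*(\cdot\mid\theta')}=0$.
\item By \eqref{eq:lstar-truth}, $l^*(\cdot\mid\theta^\star)\propto(0.9^{0.9}0.1^{0.1},\,0.1^{0.9}0.9^{0.1})$. Its first coordinate is strictly larger than its second, so it is not uniform. Hence $\varepsilon^\star=\Dkl{\bar f}{l^*(\cdot\mid\theta^\star)}>0$.
\end{itemize}
Proposition~\ref{prop:ws-survival}, or directly Proposition~\ref{prop:effective_single_agent}, then yields $C(\theta')-C(\theta^\star)=\varepsilon^\star>0$. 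So $\Theta^*=\{\theta'\}$ and $\theta^\star\notin\Theta^*$.

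\emph{Conclusions on beliefs.} Theorem~\ref{thm:consistency} gives $\mu_{i,t}(\theta^\star)\to0$ a.s. under sharing. Remark~\ref{rem:fixed-profile} together with Corollary~\ref{cor:singleton} (applied with $C^{NS}$) gives $\mu_{i,t}(\theta^\star)\to1$ a.s. in the benchmark.

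\emph{Openness.} I regard configurations as points of a finite-dimensional parameter space. The conditions (A1)–(A3) and heterogeneity $f_1\neq f_2$ are open conditions:
\begin{itemize}
\item Primitivity is preserved under small perturbations that keep the support, since positivity of a fixed power $M^k$ is open. If arbitrary small perturbations are allowed, I would restrict to perturbations keeping all entries positive near my all-positive example, which is itself open.
\item Full support of $f_i$, $l_{j,0}$ and $\mu_{j,0}$ is open.
\end{itemize}
Next, on the set of primitive stochastic matrices the stationary distribution is a continuous function of the matrix. This holds because $1$ is a simple eigenvalue, so $\pi_M$ is the normalized solution of a linear system whose kernel is one-dimensional, and that solution varies continuously. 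Consequently $l^*$ is continuous in its parameters, being a normalized product of positive powers. Finally, KL divergence with full-support second argument is continuous. So $C^{NS}(\theta^\star)-C^{NS}(\theta')$ and $C(\theta')-C(\theta^\star)$ are continuous functions of the parameters, both strictly positive at the example, hence positive on a neighborhood. Intersecting this neighborhood with the open (A1)–(A3) and heterogeneity conditions gives the required nonempty open set.

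The main obstacle is formalizing ``open set of configurations'' cleanly. The well-specification constraint ties $l_{j,0}(\cdot\mid\theta^\star)$ to $f_j$, so openness must be understood in the reduced parameter space described above. The continuity of $\pi^A$ and $\pi^{A'}$ is the step needing an explicit justification, via the simple Perron eigenvalue.
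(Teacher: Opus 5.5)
Your proposal is correct and follows essentially the paper's own proof. Both use an explicit two-agent, two-signal example in which every agent reads the rival hypothesis as the pooled truth $\bar f$, so $l^*(\cdot\mid\theta')=\bar f$ beats $l^*(\cdot\mid\theta^\star)\neq\bar f$ under sharing while well-specification gives $C^{NS}(\theta^\star)=0>C^{NS}(\theta')$, and both obtain openness from continuity of the two gaps. The differences are cosmetic: the paper takes equal weights $\pi^A=\pi^{A'}=(\tfrac12,\tfrac12)$ with asymmetric truths $f_2=(0.6,0.4)$ and uses the explicit $2\times 2$ formula for $\pi^A$, whereas your symmetric truths force unequal interpretation weights (with $\pi^{A'}=(\tfrac12,\tfrac12)$ your $l^*(\cdot\mid\theta^\star)$ would be uniform, i.e.\ equal to $\bar f$) and you justify continuity of the stationary distribution through the simple Perron eigenvalue.
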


\paragraph{In the running example.}

We now assume a ground truth: the vaccine is effective, and each community's standard for an effective vaccine matches exactly what it observes, $f_i=l_{i,0}(\cdot\mid\theta_E)$ with the values of Table \ref{tab:ex-frameworks}. Community~1 then observes $30\%$ of severe cases among the vaccinated, which corresponds to about $82\%$ prevention, and Community~2 observes $5\%$, about $98\%$ prevention. The vaccine works in both communities, though with different strength, for instance because of different variants or risk profiles. The standards for an ineffective vaccine and both networks are those used so far.

If the communities share only their beliefs, each one compares its data with a standard of effectiveness that fits them perfectly, so the true hypothesis gets the best possible score, $C^{NS}(\theta_E)=0$, against $C^{NS}(\theta_{NE})\approx-0.97$, and the population believes the vaccine is effective (Proposition \ref{prop:ws-benchmark}). If they also share interpretations, both communities end up with the consensus standard, close to Community~2's: an effective vaccine should leave about $6\%$ of severe cases among the vaccinated. This standard still fits Community~2's data almost perfectly, but no longer fits Community~1's $30\%$. The truth now carries a cost that sharing itself introduced, the misspecification $\varepsilon^\star$ of \eqref{eq:eps-star}, and its score drops to $C(\theta_E)\approx-0.10$. The cost is not large enough to change the outcome, since Community~1's data are far from the demanding picture of an effective vaccine, but they are much further from the consensus picture of an ineffective one ($79\%$ of severe cases vaccinated), so $C(\theta_{NE})\approx-1.02$ and the population still believes the vaccine is effective. Sharing has weakened the support for the truth without overturning it.

However, the cost can overturn it. For example, suppose instead that Community~1 holds a more moderate standard, under which a vaccine that leaves half of the severe cases among the vaccinated already counts as effective, and that this is exactly what it observes, $f_1=l_{1,0}(\cdot\mid\theta_E)=(0.5,0.5)$. With a coverage of $70\%$, this is still an effective vaccine, preventing about $57\%$ of severe disease, so $\theta_E$ remains the truth in both communities. Keep Community 2 dominant in the interpretation channel, $\pi^{A'}=(0.1,0.9)$, but now let Community~1 dominate the belief channel, $\pi^A=(0.9,0.1)$. With belief-only sharing, each community still reads its own data perfectly, and the population believes the vaccine is effective. With interpretation sharing, observe Figure \ref{fig:well_specified}, the consensus standard remains demanding, so the wall stays near Community~2's, at $p^*\approx0.62$. The pooled data, however, are now dominated by Community~1's observations, with only $54.5\%$ of severe cases among the unvaccinated. They fall on the $\theta_{NE}$ side of the wall, and the population comes to believe the vaccine is not effective. Under the shared standard, the moderate protection that Community~1 observes looks more like a failure than like the near-perfect protection the standard expects. Each community reads its own data correctly before communicating, and it is the consensus between them that leads the population to discard the truth.

\begin{figure*}[th!]
    \centering
    \includegraphics[width=0.8\textwidth]{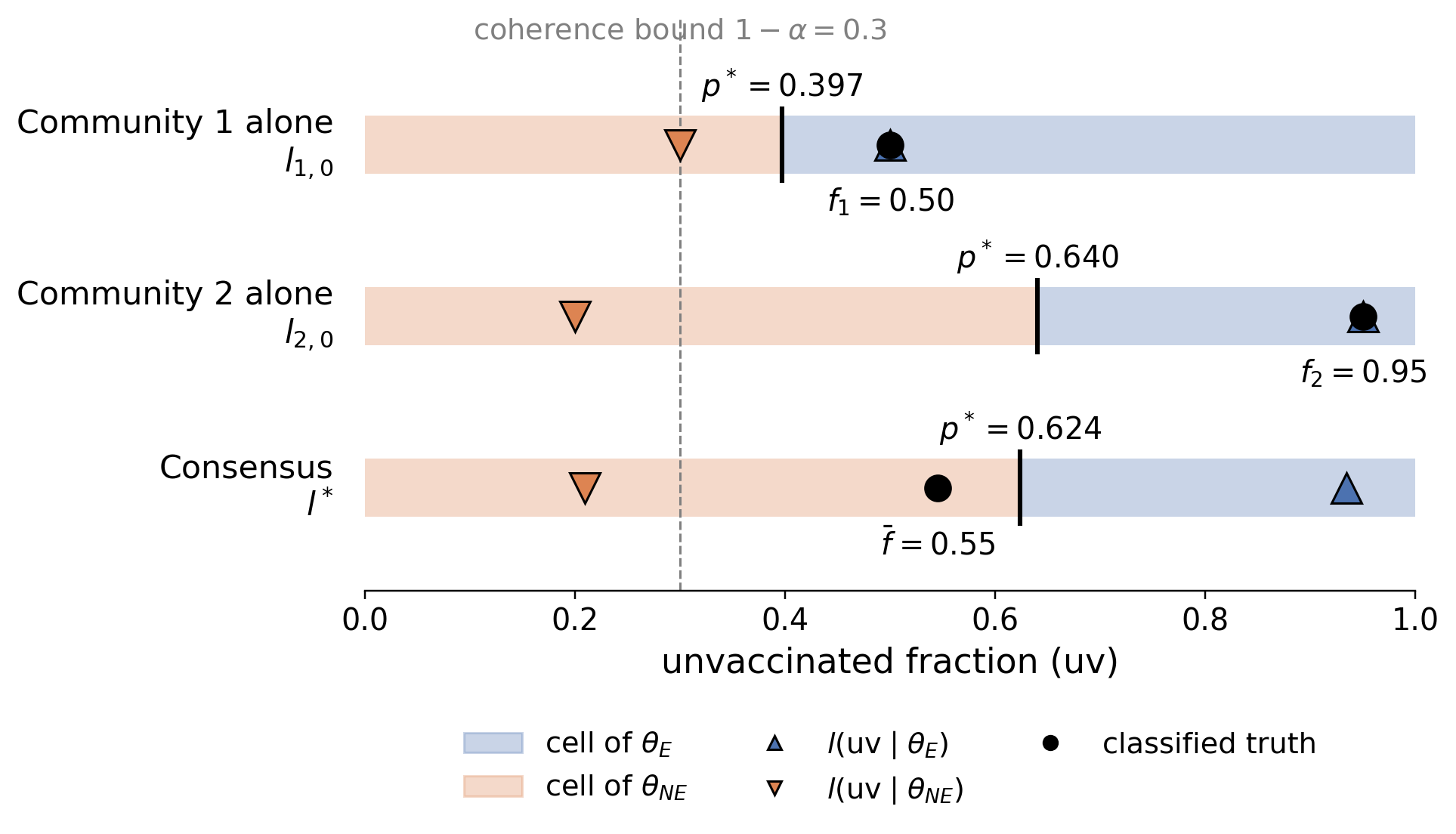}
    \caption{Decision cells on $\Delta(S)$, parametrized by the fraction $p$ of severe cases occurring among unvaccinated individuals. Each row shows the partition induced by a likelihood model: Community~1's initial model $l_{1,0}$, Community~2's initial model $l_{2,0}$, and the consensus model $l^*$ obtained with $\pi^{A'}=(0.1,0.9)$. The shaded intervals are the cells in which $\theta_E$ (blue) or $\theta_{NE}$ (orange) is learned, and the vertical bar marks the wall $p^*$, where both hypotheses fit the data equally well in KL divergence. Triangles mark each model's prediction $l(\mathsf{u}\mid\theta)$ for each hypothesis, and the black dot is the truth being classified: $f_1$ and $f_2$ for isolated communities, and the pooled truth $\bar f=\sum_i\pi^A_i f_i$, with $\pi^A=(0.9,0.1)$, for the interacting population. The dashed line is the coherence bound $1-\alpha=0.3$ implied by the vaccination coverage $\alpha=0.7$.\label{fig:well_specified}}
\end{figure*}

Agreeing on a single common standard, such as one national reading of hospital data, therefore has a price when communities observe different data, even if every community starts with an accurate reading. The common standard cannot match every community's data at once, and the resulting error can be enough to make the whole population abandon the truth. In this well-specified setting, sharing interpretations cannot help, since belief-only sharing already learns the truth (Proposition \ref{prop:ws-norescue}), but it can hurt (Proposition \ref{prop:ws-hurt}). Note that no data are corrupted, the error comes from spreading a reading of the evidence, not false evidence.

\subsection{Coherent Misreading Interpretations: Sharing Can Rescue or Corrupt}
\label{subsec:coherent_sharing_good_or_bad}

In Section~\ref{subsec:wellspecified}, every agent's initial interpretation of the true hypothesis matches the agent's data. We now relax this and simply require that each agent interpret each hypothesis coherently and that a ground truth exist. Coherence is determined by the specific case under study and the definition of the hypotheses. Each hypothesis constrains a set of logically compatible signal distributions. We then assume that every agent holds a coherent interpretation and that all agents' data distributions fall within the coherent set of exactly one hypothesis.

\begin{definition}[Coherence and ground truth]\label{def:coherent}
For each $\theta\in\Theta$, let $\mathcal T(\theta)\subseteq\Delta(S)$ be the set of signal distributions that are logically compatible with $\theta$ (its coherence region). The regions may overlap.

\begin{enumerate}
    \item[(a)] Interpretations are coherent if $l_{j,0}(\cdot\mid\theta)\in\mathcal T(\theta)$ for all $j\in N$ and $\theta\in\Theta$.
    \item[(b)] $\theta^\star\in\Theta$ is the ground truth if $f_i\in\mathcal T(\theta^\star)\setminus\bigcup_{\theta\neq\theta^\star}\mathcal T(\theta)$ for all $i\in N$. That is, every agent's data are compatible with $\theta^\star$ and with no other hypothesis.
\end{enumerate}
\end{definition}

\begin{lemma}[Uniqueness of the ground truth]\label{lem:gt-unique}
If a ground truth exists, it is unique.
\end{lemma}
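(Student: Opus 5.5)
The plan is a direct argument by contradiction that uses only Definition~\ref{def:coherent}(b) together with the fact that $N$ is nonempty. Suppose $\theta^\star$ and $\theta^{\star\star}$ are both ground truths with $\theta^\star\neq\theta^{\star\star}$. Since $n\ge 1$, we may fix any agent, say $i=1$.

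First apply the ground-truth property of $\theta^\star$ to agent $1$:
\[
f_1\in\mathcal T(\theta^\star)\setminus\bigcup_{\theta\neq\theta^\star}\mathcal T(\theta).
\]
Because $\theta^{\star\star}\neq\theta^\star$, the set $\mathcal T(\theta^{\star\star})$ is one of the sets in this union, so $f_1\notin\mathcal T(\theta^{\star\star})$.

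Next apply the ground-truth property of $\theta^{\star\star}$ to the same agent. This gives $f_1\in\mathcal T(\theta^{\star\star})$, which contradicts the previous step. Hence $\theta^\star=\theta^{\star\star}$.

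Nothing here is a real obstacle. The only point to state explicitly is that uniqueness needs at least one agent to whom the definition applies, and this holds because $N=\{1,\dots,n\}$ with $n\ge1$. The argument does not require the coherence regions to be disjoint, since the definition already excludes overlaps at each $f_i$. It also does not use coherence of the interpretations, part (a) of the definition, or any of the dynamics.
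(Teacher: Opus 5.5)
Your proof is correct and matches the paper's argument: fix one agent, use the first ground truth to exclude $f_i$ from $\mathcal T(\theta^{\star\star})$, and use the second ground truth to include it. Your explicit remark that $N$ is nonempty is a harmless detail that the paper leaves implicit.
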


A ground truth does not require $f_i=l_{i,0}(\cdot\mid\theta^\star)$, so agents may be misspecified. Moreover, coherence constrains each model $l_{j,0}(\cdot\mid\theta)$ separately, whereas learning depends on how the models of different hypotheses compare on the data. A coherent agent can therefore still read incorrectly her data.

\begin{definition}[Accurate and misreading agents]\label{def:readers}
For $j\in N$, $p\in\Delta(S)$ and $\theta\neq\theta^\star$, let

\begin{equation*}
    \Delta_j(\theta;p):=\sum_{s\in S}p(s)\log\frac{l_{j,0}(s\mid\theta^\star)}{l_{j,0}(s\mid\theta)} =\Dkl{p}{l_{j,0}(\cdot\mid\theta)}-\Dkl{p}{l_{j,0}(\cdot\mid\theta^\star)},
\end{equation*}

whose properties are given in Lemma~\ref{lem:Delta}. Then, we define:

\begin{itemize}
    \item Agent $j$ reads $p$ correctly if $\Delta_j(\theta;p)>0$ for every $\theta\neq\theta^\star$.
    \item Agent $j$ misreads $p$ if $\Delta_j(\theta;p)<0$ for some $\theta\neq\theta^\star$.
    \item Agent $i$ is accurate if it reads its own data $f_i$ correctly, and misreading if it misreads $f_i$.
\end{itemize}
\end{definition}

Thus $\Delta_j(\theta;p)>0$ means that agent $j$'s interpretations rank $\theta^\star$ above $\theta$ on data $p$. The borderline case ($\Delta_i(\theta;f_i)\ge0$ for all $\theta\neq\theta^\star$, with equality for some $\theta$) is neither accurate nor misreading and plays no role below.

These notions describe what an agent learns in isolation. For a single agent, Proposition~\ref{prop:effective_single_agent} gives $C(\theta^\star)-C(\theta)=\Delta_i(\theta;f_i)$. An accurate agent therefore learns $\theta^\star$, and a misreading agent discards it. Under \eqref{eq:wellspec} no agent misreads. An agent's interpretations can also be applied to another agent's data, and the main result asks whether agent $j$ reads $f_i$ correctly.

We write $\mu^{NS}_{k,t}$ and $\mu_{k,t}$ for agent $k$'s beliefs under belief-only learning and under interpretation sharing, respectively.

\begin{theorem}[Sharing interpretations can rescue or corrupt]\label{thm:rescue-corrupt}
Let interpretations be coherent, with ground truth $\theta^\star$, and fix the data $\{f_k\}_{k\in N}$ and the initial interpretations $\{l_{k,0}\}_{k\in N}$.

\begin{enumerate}
    \item[\emph{(i)}] \emph{Rescue:} Suppose some agent $i$ misreads its own data $f_i$ but some agent $j\neq i$ would read $f_i$ correctly. Then there exist networks $A,A'$ satisfying (A1) such that, for every agent $k\in N$,
    
    \begin{equation*}
      \mu^{NS}_{k,t}(\theta^\star)\to0\qquad\text{but}\qquad\mu_{k,t}(\theta^\star)\to1\qquad\text{a.s.}
    \end{equation*}
    
    \item[\emph{(ii)}] \emph{Corruption:} Suppose some agent $i$ is accurate and some agent $j\neq i$ would misread $f_i$. Then there exist networks $A,A'$ satisfying (A1) such that, for every agent $k\in N$,
    
    \begin{equation*}
        \mu^{NS}_{k,t}(\theta^\star)\to1\qquad\text{but}\qquad\mu_{k,t}(\theta^\star)\to0\qquad\text{a.s.}
    \end{equation*}
\end{enumerate}
\end{theorem}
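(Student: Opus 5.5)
The plan is to exploit the fact that, by Theorem~\ref{thm:consistency}, Proposition~\ref{prop:effective_single_agent} and Remark~\ref{rem:fixed-profile}, both outcomes depend on the networks only through $\pi^A$ and $\pi^{A'}$. I would therefore choose these stationary distributions nearly concentrated on single agents. Concretely, for $\eta\in(0,1/n)$ and an agent $k_0$, let $\pi^{(k_0,\eta)}$ put mass $1-(n-1)\eta$ on $k_0$ and $\eta$ on every other agent. Take the rank-one matrix $A=\one(\pi^{(i,\eta)})^{\top}$ and set $A'=\one(\pi^{(j,\eta)})^{\top}$. Both are row-stochastic with strictly positive entries, hence primitive, so (A1) holds. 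Their stationary distributions are exactly $\pi^A=\pi^{(i,\eta)}$ and $\pi^{A'}=\pi^{(j,\eta)}$. The belief channel thus weights essentially agent $i$'s data, and the interpretation channel adopts essentially agent $j$'s reading.

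First I would compute the two benchmarks as $\eta\to0$. For belief-only learning, the definition of $C^{NS}$ gives
\[
C^{NS}(\theta^\star)-C^{NS}(\theta)=\sum_k\pi^A_k\,\Delta_k(\theta;f_k),
\]
which converges to $\Delta_i(\theta;f_i)$. For sharing, Proposition~\ref{prop:effective_single_agent} together with Corollary~\ref{cor:concrete}(i) gives
\[
C(\theta^\star)-C(\theta)=\sum_s\bar f(s)\log\frac{l^*(s\mid\theta^\star)}{l^*(s\mid\theta)}.
\]
Here $\bar f=\sum_k\pi^A_kf_k\to f_i$. By \eqref{eq:lstar}, $l^*(\cdot\mid\theta)$ is a continuous function of $\pi^{A'}$, because all $l_{k,0}$ have full support by (A2), so each $\Psi^\theta_\infty>0$ depends continuously on the weights. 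At the vertex $e_j$ we have $\Psi=1$ and $l^*=l_{j,0}$. Hence the sharing gap converges to $\Delta_j(\theta;f_i)$. All quantities are finite by (A2)–(A3), and $\Theta$ is finite. Therefore any strict sign of these limits, for each of the finitely many rivals $\theta\neq\theta^\star$, persists for all sufficiently small $\eta>0$.

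For part (i), agent $i$ misreads $f_i$, so $\Delta_i(\theta_1;f_i)<0$ for some $\theta_1\neq\theta^\star$. For small $\eta$ this gives $C^{NS}(\theta_1)>C^{NS}(\theta^\star)$, so $\theta^\star\notin\Theta^{NS}$. The fixed-profile version of Theorem~\ref{thm:consistency} in Remark~\ref{rem:fixed-profile} then yields $\mu^{NS}_{k,t}(\theta^\star)\to0$ a.s. for every $k$. Agent $j$ reads $f_i$ correctly, so $\Delta_j(\theta;f_i)>0$ for every $\theta\neq\theta^\star$. For small $\eta$ this gives $\Theta^*=\{\theta^\star\}$, and Corollary~\ref{cor:singleton} gives $\mu_{k,t}(\theta^\star)\to1$ a.s.

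Part (ii) is the mirror image. Agent $i$ is accurate, so all $\Delta_i(\theta;f_i)>0$, which gives $\Theta^{NS}=\{\theta^\star\}$ and hence $\mu^{NS}_{k,t}(\theta^\star)\to1$ by Remark~\ref{rem:fixed-profile}. Agent $j$ misreads $f_i$, so some $\Delta_j(\theta_1;f_i)<0$. Then $C(\theta_1)>C(\theta^\star)$, so $\theta^\star\notin\Theta^*$ and $\mu_{k,t}(\theta^\star)\to0$ by Theorem~\ref{thm:consistency}.

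The coherence and ground-truth hypotheses serve only to make $\theta^\star$ and the accurate/misreading notions of Definition~\ref{def:readers} well defined; the argument itself does not use them. The step needing the most care is the joint limit: $\bar f$ and $l^*$ move simultaneously, and the normalizers $\Psi$ must not spoil convergence. I would handle this directly. The map $(\pi^A,\pi^{A'})\mapsto\sum_s\bar f(s)\log\bigl(l^*(s\mid\theta^\star)/l^*(s\mid\theta)\bigr)$ is a finite sum of continuous functions on the closed simplices, since every log argument is bounded away from $0$. Its value at $(e_i,e_j)$ is exactly $\Delta_j(\theta;f_i)$. A single $\eta$ can then be chosen as the minimum over the finitely many rivals of the thresholds obtained from continuity.
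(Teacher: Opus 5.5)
Your proposal is correct and follows the paper's proof essentially step for step: push $\pi^A$ toward $e_i$ and $\pi^{A'}$ toward $e_j$, show that the belief-only and sharing gaps tend to $\Delta_i(\theta;f_i)$ and $\Delta_j(\theta;f_i)$ (with $\Psi^\theta_\infty\to1$), choose $\eta$ below the finitely many thresholds, and conclude via Theorem~\ref{thm:consistency}, Corollary~\ref{cor:singleton} and Remark~\ref{rem:fixed-profile}. The differences are cosmetic: you use rank-one networks $\one\pi^{\top}$ (as the paper itself does in the proof of Theorem~\ref{thm:reversal}) instead of the lazy $\tfrac12 I+\tfrac12\one\pi^{\top}$, and you argue through joint continuity of the pooled-data form of the gap rather than through the bilinear expression of Lemma~\ref{lem:Delta}(iii).
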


\begin{corollary}[Common data]\label{cor:rc-common}
Let interpretations be coherent, with ground truth $\theta^\star$, and let $f_1=\dots=f_n$. If the population contains at least one accurate and at least one misreading agent, then, depending on the networks, sharing interpretations can either rescue the entire population or corrupt it, in the sense of Theorem~\ref{thm:rescue-corrupt}.
\end{corollary}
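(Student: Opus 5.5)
The plan is to deduce the corollary directly from Theorem~\ref{thm:rescue-corrupt}, by checking that with common data $f_1=\dots=f_n=f$ the hypotheses of both parts~(i) and~(ii) hold automatically. The key observation is that every agent's own data and every other agent's data coincide. So ``agent $j$ reads $f_i$ correctly'' is the same as ``agent $j$ reads $f$ correctly'', which is the same as ``agent $j$ is accurate''. Likewise, ``agent $j$ misreads $f_i$'' is the same as ``agent $j$ is misreading''.

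\emph{Rescue.} Let $i$ be a misreading agent and $j$ an accurate agent, which exist by assumption. Then $j\neq i$, since by Definition~\ref{def:readers} no agent can be both accurate and misreading: one requires $\Delta_j(\theta;f)>0$ for every $\theta\neq\theta^\star$, the other $\Delta_j(\theta;f)<0$ for some $\theta$. Agent $i$ misreads its own data $f_i=f$. Agent $j$ reads $f_i=f$ correctly, since it reads its own $f_j=f$ correctly. These are exactly the hypotheses of part~(i) of Theorem~\ref{thm:rescue-corrupt}. It therefore supplies networks $A,A'$ satisfying (A1) with $\mu^{NS}_{k,t}(\theta^\star)\to0$ and $\mu_{k,t}(\theta^\star)\to1$ almost surely for all $k$.

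\emph{Corruption.} Swap the roles: let $i$ be the accurate agent and $j$ the misreading one. Again $j\neq i$. Agent $i$ reads $f_i$ correctly, and $j$ misreads $f_i=f_j$. Part~(ii) of the theorem then gives networks under which belief-only learning reaches $\theta^\star$ while sharing drives $\mu_{k,t}(\theta^\star)\to0$ for all $k$.

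There is no real obstacle. The only point needing care is making explicit that accurate and misreading are mutually exclusive, which guarantees $i\neq j$ as the theorem requires. It is also worth noting that coherence and the ground truth $\theta^\star$ are inherited unchanged from the corollary's hypotheses, so the theorem applies as stated. The phrase ``depending on the networks'' then simply reflects that the two parts produce different pairs $(A,A')$ for the same fixed data and initial interpretations.
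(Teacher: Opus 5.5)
Your proposal is correct and follows essentially the same route as the paper's proof. Both note that accurate and misreading are mutually exclusive, which forces the two agents to be distinct, and then use $f_a=f_b=f$ to apply Theorem~\ref{thm:rescue-corrupt}(i) with the misreading agent as $i$ and the accurate agent as $j$, and part~(ii) with the roles swapped.
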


The implications concern how the whole population learns when part of it misreads her own data. Under belief-only learning, the population learns the true state if $G^{NS}_\theta=\sum_k\pi^A_k\Delta_k(\theta;f_k)>0$ for every $\theta\neq\theta^\star$, and fails to learn it if this quantity is negative for some $\theta$ (Lemma~\ref{lem:Delta} and Remark~\ref{rem:fixed-profile}). Accurate agents must therefore carry enough belief influence, weighted by how sharply each agent's reading separates the hypotheses on its own data. Otherwise, even a single influential misreading agent can lead the entire population to the wrong state. Sharing interpretations can prevent this. An agent who misreads her own data can read another’s data accurately. When such interpretations shape the consensus, the resulting interpretation can read the pooled data correctly, and this can rescue the whole population, including the misreading agents. The opposite can also happen. A misreading interpretation may shape the consensus and corrupt agents who were accurate. If those agents are influential in the belief network, they then lead the population to the wrong state, whereas without sharing interpretations they would have led them to the truth. Therefore, we reframe misinformation as misinterpretation contagion. Note that agents never observe each other's data. Instead, they influence their neighbors to change how they interpret their own data, for better or worse. What the population will learn depends on who is influential in each channel.

\paragraph{In the running example.}

Coherence here comes from the vaccination coverage $\alpha=0.7$, as explained in Section~\ref{subsec:example}. A standard for an effective vaccine is coherent if it expects fewer than $70\%$ of severe cases among the vaccinated, $\mathcal T(\theta_E)=\{p:p(\mathsf v)<0.7\}$, and a standard for an ineffective one is coherent if it expects at least $70\%$, $\mathcal T(\theta_{NE})=\{p:p(\mathsf v)\ge0.7\}$. Both communities' standards in Table~\ref{tab:ex-frameworks} meet these bounds. Suppose now that both communities observe the same data, $f=(0.6,0.4)$, i.e., $40\%$ of severe cases are vaccinated. This is below the coverage, so the data are compatible only with an effective vaccine reading and $\theta_E$ is the ground truth (Definition~\ref{def:coherent}). Objectively, the vaccine prevents about $71\%$ of severe disease.

Coherent standards can still misread these data. Community~1's moderate standard ranks $\theta_E$ above $\theta_{NE}$ on them, $\Delta_1(\theta_{NE};f)\approx0.17>0$, so Community~1 is accurate. Community~2's standard ranks them the other way, $\Delta_2(\theta_{NE};f)\approx-0.17<0$, so it misreads. The reason is how strongly each kind of case counts under its standard. If an effective vaccine should leave only $5\%$ of severe cases among the vaccinated, each vaccinated case is about $16$ times more likely under $\theta_{NE}$ than under $\theta_E$, while each unvaccinated case is only about $5$ times more likely under $\theta_E$. Observing $40\%$ of cases vaccinated, the evidence against effectiveness outweighs the evidence for it, and Community~2 reads a vaccine that prevents most severe disease as a failure. Geometrically, the common data ($60\%$ of severe cases among the unvaccinated) lie between the two communities' walls, $0.5$ and about $0.64$, on the $\theta_E$ side of Community~1's and the $\theta_{NE}$ side of Community~2's.

Since the population contains one accurate and one misreading community, Corollary~\ref{cor:rc-common} says that, depending on the networks, sharing interpretations can either rescue or corrupt it. The two channels act differently here. With belief-only sharing, each community keeps its own standard, and the population follows the community that dominates the belief network, weighted by how strongly its standard separates the two hypotheses on the data. With interpretation sharing, every community reads the same data through the same consensus standard, so the pooled data equal $f$ whatever the belief network, and the outcome depends only on who dominates the interpretation channel (Corollary~\ref{cor:singletruth}), setting the walls in the geometric representation.

\emph{Rescue.} Let Community~2 dominate beliefs and Community~1 dominate interpretations, $\pi^A=(0.1,0.9)$ and $\pi^{A'}=(0.9,0.1)$. With belief-only sharing, the misreading community leads the population, $C^{NS}(\theta_E)-C^{NS}(\theta_{NE})\approx-0.14$, and everyone comes to believe the vaccine is not effective. With interpretation sharing, the consensus standard is close to Community~1's, its wall sits at about $0.52$, and the data fall on the $\theta_E$ side: $C(\theta_E)-C(\theta_{NE})\approx0.16$. Community~2 still leads beliefs, but it now reads its data through a standard under which they support effectiveness, so the whole population, Community~2 included, comes to believe the vaccine is effective.

\emph{Corruption.} Swap the roles, $\pi^A=(0.9,0.1)$ and $\pi^{A'}=(0.1,0.9)$. With belief-only sharing, the accurate community leads the population to the truth, $C^{NS}(\theta_E)-C^{NS}(\theta_{NE})\approx0.14$. With interpretation sharing, the consensus standard is close to Community~2's, its wall moves to about $0.63$, and the data fall on the $\theta_{NE}$ side: $C(\theta_E)-C(\theta_{NE})\approx-0.12$. Community~1 still leads beliefs, but it has adopted a standard under which its own data look like a failure, and it leads the population away from the truth.

Figure~\ref{fig:rescue_corrupt} shows both cases, (a) corruption and (b) rescue. In both, the data and the initial standards are the same, and no community ever sees the other's data. What moves the outcome is which reading of those data spreads through the interpretation channel, and whether the community that leads beliefs has adopted it.

\begin{figure*}[th!]
    \centering
    \begin{subfigure}[t]{0.49\textwidth}
        \centering
        \includegraphics[width=\textwidth]{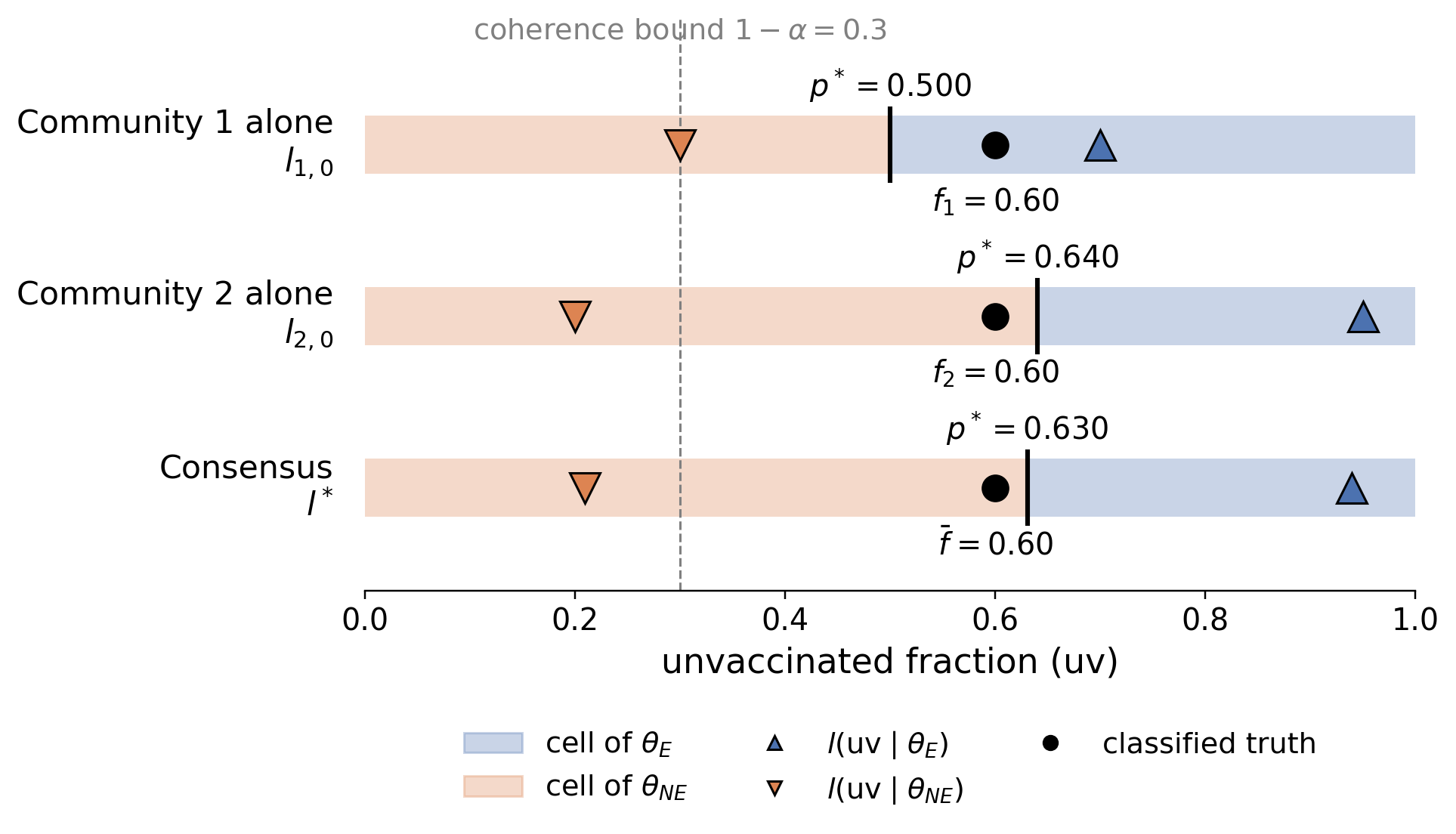}
        \caption{}
        \label{fig:corruption}
    \end{subfigure}
    \hfill
    \begin{subfigure}[t]{0.49\textwidth}
        \centering
        \includegraphics[width=\textwidth]{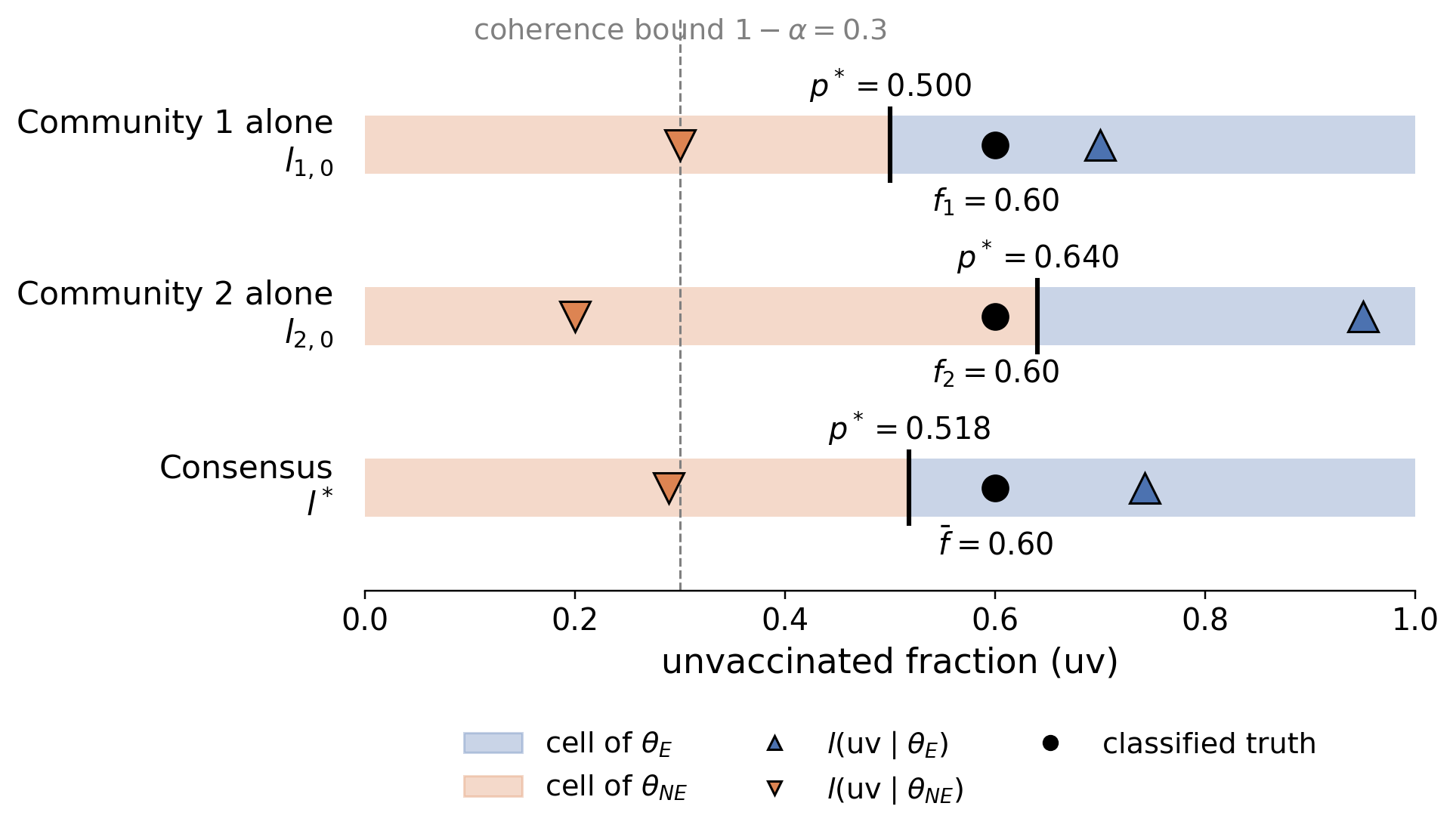}
        \caption{}
        \label{fig:rescue}
    \end{subfigure}

    \caption{Decision cells on $\Delta(S)$, parametrized by the fraction $p$ of severe cases occurring among unvaccinated individuals. Each panel shows the partition induced by a likelihood model: Community~1's initial model $l_{1,0}$, Community~2's initial model $l_{2,0}$, and the consensus model $l^*$ obtained with the corresponding $\pi^{A'}$. The shaded intervals are the cells in which $\theta_E$ (blue) or $\theta_{NE}$ (orange) is learned, and the vertical bar marks the wall $p^*$, where both hypotheses fit the data equally well in KL divergence. Triangles mark each model's prediction $l(\mathsf{u}\mid\theta)$ for each hypothesis, where the coherence bound is the separation between the possible coherent likelihoods in each hypothesis (given by the vaccination coverage $\alpha\%$). The black dot is the truth being classified: $f_1$ and $f_2$ for isolated communities, and the pooled truth $\bar f=\sum_i\pi^A_i f_i$ for the interacting population, with the corresponding $\pi^A$.
    \textbf{(a)} $\pi^A=(0.9,0.1)$ and $\pi^{A'}=(0.1,0.9)$ (corruption); 
    \textbf{(b)} $\pi^A=(0.1,0.9)$ and $\pi^{A'}=(0.9,0.1)$ (rescue).}
    \label{fig:rescue_corrupt}
\end{figure*}

\section{Discussion}
\label{sec:discussion}

The results rest on the fact that, in the long run, the interpretation channel decides what each hypothesis is taken to predict, and the belief channel decides whose data are held against those predictions. We discuss what this division implies, why it makes the effect of sharing interpretations depend on the setting, and which of its features follow from our modeling choices.

\paragraph{Two channels}

In the geometric visualization of Section~\ref{sec:characterizing_population}, the two channels act on different objects. The changes in the data agents observe, or how influential are their beliefs when sharing them, moves the pooled data $\bar f$, whereas anything that changes which readings circulate, or how influential is a given reading, moves the decision boundaries. Moreover, providing better evidence cannot change what the population counts as support for each hypothesis, while amplifying a different reading of the same evidence can change the conclusion without anyone observing new data. In the running example, the population that concludes the vaccine is not effective could be brought to $\theta_E$ either by giving Community 2 enough weight in the belief network ($\pi^A_2>0.826$) or by sufficiently reducing the reach of its demanding standard, which moves the wall below the pooled data. Mechanisms such as platform amplification, community notes, or the reach of a confident voice act on the second lever. This clean separation, however, holds only because interpretations do not respond to data, a point we return to below.

\paragraph{Why sharing helps or hurts}

Replacing private readings with a single consensus reading has two effects, and the comparisons of Section~\ref{sec:comparison} differ only in which one prevails. First, sharing breaks the match between each agent's reading and its own data. When every agent starts with a reading that matches its data, that match is what makes belief-only learning succeed, so can only affect negatively to the learning, since one common reading cannot fit several different data distributions at once, and the resulting misspecification can be large enough to discard the truth. Second, sharing overrides individual misreadings. When some agents misread their own data, the consensus can replace a misreading with an accurate reading, or carry a misreading to agents who were reading their data correctly. Which happens depends on how influence is distributed across the two channels, not on either network alone. The case of common data makes this especially sharp, in which under sharing, the belief network then plays no role in what is learned, and the outcome is decided by who dominates the interpretation channel. Without sharing, it is decided by who dominates the belief channel, weighted by how sharply each agent's reading separates the hypotheses. In both cases, errors spread without any false data, and instead arise from incorrect way of reading the private evidence.

\paragraph{Consensus is not evidence of correctness}

Whenever a single hypothesis best fits the pooled data under the consensus reading, the population ends up unanimous and its confidence grows exponentially. The rate depends only on how much better that hypothesis fits than its rivals, not on whether it is true. In the misspecified running example, sharing interpretations leads the population to reject the effective hypothesis about ten times faster than belief-only sharing leads it to accept it. A confident, unanimous population therefore says nothing, on its own, about whether the shared reading helped or harmed it.

\paragraph{Limitations and future work}

The main limitation is that interpretations are revised only through conversation and never confronted with data. This isolates the effect of socially transmitted readings, and it is reasonable when agents cannot tell which hypothesis's reading should be pulled toward what they observe. Still, people who repeatedly see data that contradict their standards may revise them. Several of our conclusions depend on this assumption, notably the deterministic consensus $l^*$ and the strict separation between moving the boundaries and moving the point. If interpretations also responded to data, changing what agents observe would move the boundaries as well, and some misreadings might correct themselves, as in models where likelihoods are learned from data \cite{hare2020non,hare2021general}. Letting interpretations respond to signals or beliefs is the natural next step. Our results are also asymptotic, the networks are static, and agents transmit beliefs and interpretations without strategic motives. Finite-time behavior, time-varying networks and strategic transmission remain open. Finally, the coherence regions are taken as given, and the running example is a stylized two-community illustration, not a calibrated model of vaccine beliefs.

\section{Conclusions}\label{sec:conclusions}

We proposed a model of non-Bayesian social learning in which agents share both their beliefs about which hypothesis is most likely true, and their interpretations of the hypotheses, meaning how likely each observation is under each hypothesis. These two objects are shared over possibly two different networks. We can recover the classical model as the special case with the interpretation adjacency matrix being $A'=I$.

Under standard connectivity assumptions, interpretations converge geometrically to a consensus that is the geometric pool of the initial readings weighted by each agent's influence in the interpretation network. Beliefs then concentrate almost surely and exponentially fast on the hypotheses that, under the consensus interpretation, best fit the population's pooled data, weighting each agent's data by her influence in the belief network. We can understand this result as the inference of one effective agent who would observe evidence coming from these pooled data and hold an interpretation of the hypotheses given by the consensus likelihoods. This gives a clear division of labor between the two networks and allows for an easy geometric visualization of both effects and the final learning outcome. The interpretation side, which includes the initial likelihoods and the interpretation network, fixes what each hypothesis is taken to mean and divides the geometric space into sections, each one representing the learned hypothesis outcome if the pooled data were placed in that section. The belief side, given by the individual and possibly heterogeneous signal distributions and the belief network, only determines which data become relevant to the learning process and places the point inside the geometry, determining which hypothesis will be learned in the long run.

Compared with sharing beliefs alone, sharing interpretations is neither always helpful nor always harmful. When every agent initially reads the true hypothesis exactly as her own data, sharing cannot help, and can hurt if agents observe different data. When interpretations are only required to be coherent and part or all of the population misreads its own data, the effect becomes two-sided. Sharing can rescue a population whose belief network is dominated by misreading agents, if correct readers dominate the interpretation network. It can also corrupt a population whose belief network is dominated by accurate agents, when misreading agents dominate interpretations. In neither case are the agents misinformed (i.e., they do not observe corrupted data), and what spreads is a reading of the evidence, not the evidence itself. In every one of these cases, when there exists a unique hypothesis that best explains the global data under the common reading, the population ends up confident and unanimous, so agreement on its own is never a sign that the interpretation channel did no harm. What an agent concludes therefore depends not only on what she observes and whose beliefs she listens to, but also on the different interpretations of the evidence her neighbors have.

\clearpage

\appendix

\section{Perron-Frobenius Estimate}\label{app:pf}

Let $M\in\R^{n\times n}$ be row-stochastic and primitive (irreducible and aperiodic). By the Perron-Frobenius theorem, $1$ is a simple eigenvalue of $M$, every other eigenvalue $\lambda$ satisfies $|\lambda|<1$, and $M$ possesses a unique stationary distribution $\pi_M\in\ri\Delta(N)$ with $\pi_M^{\top}M=\pi_M^{\top}$ and $\pi_M^{\top}\one=1$. Set
\begin{equation*}
    \rho_M:=\max\{|\lambda|:\ \lambda\text{ an eigenvalue of }M,\ \lambda\neq1\}<1,
\end{equation*}
with $\rho_M:=0$ if $1$ is the only eigenvalue.

\begin{lemma}[Geometric mixing]\label{lem:pf-rate}
For every $\hat\rho\in(\rho_M,1)$ there is a constant $C_{M,\hat\rho}<\infty$ such that

\begin{equation}
    \bigl\|M^t-\one\pi_M^{\top}\bigr\|_\infty\le C_{M,\hat\rho}\,\hat\rho^{\,t}\qquad\text{for all }t\ge0 .
    \label{eq:PF-mix}
\end{equation}

If $M$ is diagonalizable, \eqref{eq:PF-mix} also holds with $\hat\rho=\rho_M$ (with the convention $0^0=1$). Consequently $\|M^t-\one\pi_M^{\top}\|_\infty=O(\rho_M^{\,t})$ in the sense of Section~\ref{subsec:notation}, and $\sum_{t\ge0}\|M^t-\one\pi_M^{\top}\|_\infty<\infty$.
\end{lemma}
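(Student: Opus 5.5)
We will prove the geometric mixing bound from the Jordan decomposition of $M$, after first reducing to the restriction of $M$ to a complement of the Perron eigenspace. Set $P:=\one\pi_M^{\top}$. Because $M\one=\one$ and $\pi_M^{\top}M=\pi_M^{\top}$ with $\pi_M^{\top}\one=1$, we have $P^2=P$ and $MP=PM=P$. Induction on $t$ then gives, for $t\ge1$,
\begin{equation*}
  M^t-P=(M-P)^t ,
\end{equation*}
since $(M-P)^t=M^t-P$ implies $(M-P)^{t+1}=M^{t+1}-M^tP-PM+P^2=M^{t+1}-P$. For $t=0$ the left side is $I-P$. So it suffices to bound the powers of the single matrix $B:=M-P$.

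Next we identify the spectrum of $B$. Since $P$ is the spectral projector onto the simple eigenvalue $1$, the space $\R^n$ (or $\mathbb C^n$) decomposes into $\ker P=\{x:\pi_M^{\top}x=0\}$ and $\operatorname{span}\{\one\}$, and both subspaces are $M$-invariant. On $\operatorname{span}\{\one\}$, $B$ acts as $1-1=0$. On $\ker P$, $B$ coincides with $M$, whose eigenvalues there are exactly the non-unit eigenvalues of $M$, by simplicity of the Perron root. Hence the spectral radius of $B$ equals $\rho_M<1$.

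For every $\hat\rho\in(\rho_M,1)$, Gelfand's formula $\lim_t\|B^t\|^{1/t}=\rho_M$ gives some $t_0$ with $\|B^t\|_\infty\le\hat\rho^{\,t}$ for all $t\ge t_0$. Setting
\begin{equation*}
  C_{M,\hat\rho}:=\max\Bigl(1,\ \max_{0\le t<t_0}\|M^t-P\|_\infty\,\hat\rho^{-t}\Bigr)
\end{equation*}
then gives \eqref{eq:PF-mix} for all $t\ge0$. Alternatively, the Jordan form $B=SJS^{-1}$ yields $\|J^t\|\le c\,t^{k-1}\rho_M^{t}$, where $k$ is the largest Jordan block size, and $t^{k-1}\rho_M^t\le c'\hat\rho^{\,t}$. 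This route is constructive and makes the polynomial factor visible; that factor is exactly what forces $\hat\rho>\rho_M$ in general.

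In the diagonalizable case we write $M=\sum_\lambda\lambda E_\lambda$ with spectral projectors $E_\lambda$, where $E_1=P$. Then
\begin{equation*}
  M^t-P=\sum_{\lambda\neq1}\lambda^tE_\lambda ,
\end{equation*}
so $\|M^t-P\|_\infty\le\bigl(\sum_{\lambda\ne1}\|E_\lambda\|_\infty\bigr)\rho_M^{\,t}$ for all $t\ge0$, with the convention $0^0=1$ covering $\rho_M=0$ (where at $t=0$ the sum is $I-P$). Summability follows from the geometric series $\sum_t C\hat\rho^{\,t}<\infty$.

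The only step needing care is the spectral claim that $B$ has spectral radius exactly $\rho_M$. This is where simplicity of the eigenvalue $1$ from Perron--Frobenius and the identity $P=\one\pi_M^{\top}$ as the corresponding projector are both used. Everything else is routine.
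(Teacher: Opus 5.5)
Your proof is correct and follows essentially the paper's route. You use the same identity $M^t-\one\pi_M^{\top}=(M-\one\pi_M^{\top})^t$ for $t\ge1$ (by induction, where the paper uses a binomial expansion with $PR=RP=0$), the same spectral-radius identification via the invariant splitting $\operatorname{span}(\one)\oplus\ker\pi_M^{\top}$, and the same Gelfand-formula constant. Your diagonalizable case uses spectral projectors, which differs only in notation from the paper's $R=V\operatorname{diag}(0,\lambda_2,\dots,\lambda_n)V^{-1}$.

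One detail, equally implicit in the paper, should be made explicit: choose $t_0\ge1$ in the Gelfand step. The reason is that $M^0-P=I-P\neq B^0$, and $\|I-P\|_\infty$ can exceed $1$.
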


\begin{proof}
Let $P:=\one\pi_M^{\top}$ and $R:=M-P$. From $M\one=\one$, $\pi_M^\top M=\pi_M^\top$ and $\pi_M^\top\one=1$ we get $P^2=P$ and $MP=PM=P$, hence $PR=RP=0$. Since $P$ and $R$ commute, the binomial expansion gives $M^t=(P+R)^t=P+R^t$ for every $t\ge1$, that is, $M^t-P=R^t$ for $t\ge1$.

The decomposition $\mathbb C^n=\operatorname{span}(\one)\oplus\ker\pi_M^\top$ is invariant under $M$: indeed $M\one=\one$, and $\pi_M^\top w=0$ implies $\pi_M^\top Mw=\pi_M^\top w=0$. On $\operatorname{span}(\one)$, $R$ acts as $0$. On $\ker\pi_M^\top$, $P$ vanishes, so $R$ acts as $M$. Hence the spectrum of $R$ is $\{0\}$ together with the spectrum of $M|_{\ker\pi_M^\top}$. The latter is the spectrum of $M$ with the simple eigenvalue $1$ removed, so the spectral radius of $R$ is $\rho_M$.

By Gelfand's formula $\|R^t\|_\infty^{1/t}\to\rho_M$. Given $\hat\rho\in(\rho_M,1)$, there is therefore $t_0$ with $\|R^t\|_\infty\le\hat\rho^{\,t}$ for all $t\ge t_0$, and \eqref{eq:PF-mix} holds with $C_{M,\hat\rho}:=\max\bigl\{1,\max_{0\le t<t_0}\|M^t-P\|_\infty\,\hat\rho^{-t}\bigr\}$.

If $M$ is diagonalizable, so is its restriction to the invariant subspace $\ker\pi_M^\top$, and therefore $R=V\operatorname{diag}(0,\lambda_2,\dots,\lambda_n)V^{-1}$ for some invertible $V$, where $\lambda_2,\dots,\lambda_n$ are the eigenvalues of $M|_{\ker\pi_M^\top}$. Then $\|R^t\|_\infty\le\|V\|_\infty\|V^{-1}\|_\infty\,\rho_M^{\,t}$ for $t\ge1$, and $t=0$ is absorbed in the constant since $\|I-P\|_\infty\le2$.

Summability follows from $\sum_{t\ge0}C_{M,\hat\rho}\hat\rho^{\,t}=C_{M,\hat\rho}/(1-\hat\rho)<\infty$.
\end{proof}

If $M$ is not diagonalizable, the bound with $\hat\rho=\rho_M$ itself may fail. The sharp bound then carries a polynomial factor $t^{d-1}$, where $d$ is the largest Jordan-block size among the eigenvalues of modulus $\rho_M$. This is why the rates in the paper are stated with the convention of Section~\ref{subsec:notation}.

\section{Finite-Time Expressions for the Interpretation Channel}\label{app:rates}

This appendix records finite-time expressions for the KL functionals of the interpretation dynamics. They are not needed for the proof of Theorem~\ref{thm:consistency}, which uses only the limits $l_{j,t}\to l^*$ and $Z^\theta_{i,t}\to1$. They are included for readers interested in non-asymptotic bounds.

\begin{definition}[KL-gap matrix]\label{def:D}
For an ordered pair $(\theta_v,\theta_w)$ and $t\ge0$, the matrix $D_t(\theta_v,\theta_w)\in\R^{n\times n}$ has entries

\begin{equation}
    [D_t(\theta_v,\theta_w)]_{ij}
    :=\Dkl{f_i}{l_{j,t}(\cdot\mid\theta_w)}-\Dkl{f_i}{l_{j,t}(\cdot\mid\theta_v)} .
    \label{eq:Ddef}
\end{equation}

The row $i$ designates the agent whose truth $f_i$ is the reference, the column $j$, the agent whose model is evaluated.
\end{definition}

\begin{proposition}[Closed form for the KL-gap matrix]\label{thm:d-closed}
For every pair $(\theta_v,\theta_w)$ and every $t\ge0$,

\begin{equation}
    D_t(\theta_v,\theta_w) = D_0(\theta_v,\theta_w)\,\bigl(A'^{\,t}\bigr)^{\!\top} +\one\Bigl(\log\tfrac{\bm\Psi^{\theta_w}_t}{\bm\Psi^{\theta_v}_t}\Bigr)^{\!\top},
    \label{eq:d-closed}
\end{equation}

where $\bigl[\log\tfrac{\bm\Psi^{\theta_w}_t}{\bm\Psi^{\theta_v}_t}\bigr]_j :=\log\Psi^{\theta_w}_{j,t}-\log\Psi^{\theta_v}_{j,t}$.
\end{proposition}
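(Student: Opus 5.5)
The plan is to compute each entry of $D_t$ directly from the closed form of Proposition~\ref{prop:l-closed}. For any full-support $q$, the entropy term of $\Dkl{f_i}{q}$ does not depend on $q$. Hence, for each pair $(i,j)$,
\begin{equation*}
[D_t(\theta_v,\theta_w)]_{ij}=\sum_{s\in S} f_i(s)\log\frac{l_{j,t}(s\mid\theta_v)}{l_{j,t}(s\mid\theta_w)} .
\end{equation*}
So the first step rewrites the KL gap as an expected log-likelihood ratio under $f_i$. This expression is finite by (A2), since relative interiors are preserved.

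The second step substitutes \eqref{eq:l-closed}. Taking logarithms gives
\begin{equation*}
\log l_{j,t}(s\mid\theta)=\sum_{k=1}^n [A'^{\,t}]_{jk}\log l_{k,0}(s\mid\theta)-\log\Psi^{\theta}_{j,t}.
\end{equation*}
Take the difference between $\theta_v$ and $\theta_w$ and average over $s$ with weights $f_i(s)$. The normalizer terms do not depend on $s$, so they contribute exactly $\log\Psi^{\theta_w}_{j,t}-\log\Psi^{\theta_v}_{j,t}$, because $\sum_s f_i(s)=1$. The remaining part is
\begin{equation*}
\sum_k [A'^{\,t}]_{jk}\sum_s f_i(s)\log\frac{l_{k,0}(s\mid\theta_v)}{l_{k,0}(s\mid\theta_w)}=\sum_k [D_0]_{ik}[A'^{\,t}]_{jk}.
\end{equation*}
Here the first step has been reapplied at $t=0$ with the column index $k$. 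This sum is the $(i,j)$ entry of $D_0\,(A'^{\,t})^{\top}$.

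The third step assembles the entries into matrix form. The normalizer contribution depends only on the column $j$ and not on the row $i$. Therefore it equals the $(i,j)$ entry of $\one\bigl(\log\frac{\bm\Psi^{\theta_w}_t}{\bm\Psi^{\theta_v}_t}\bigr)^{\top}$, which gives \eqref{eq:d-closed}. As a consistency check, at $t=0$ we have $\Psi^\theta_{j,0}=1$ and $A'^{0}=I$, so the formula reduces to $D_0$.

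There is no real obstacle in this argument. The only care needed is bookkeeping. The transpose arises because the pooling weights index the model's agent $j$, which is the column of $D$. The sign convention must be kept straight: $D$ measures $\theta_w$-divergence minus $\theta_v$-divergence, which makes the $\Psi$ term appear as $\log\Psi^{\theta_w}-\log\Psi^{\theta_v}$.
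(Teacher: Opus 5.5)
Your proposal is correct and follows the paper's proof essentially step for step. Both arguments cancel the entropy term to write each entry as an $f_i$-expected log-likelihood ratio, substitute the logarithm of the closed form \eqref{eq:l-closed}, identify the pooled part with $\sum_k[A'^{\,t}]_{jk}[D_0]_{ik}$, and identify the $s$-independent normalizer part with the rank-one term $\one\bigl(\log\tfrac{\bm\Psi^{\theta_w}_t}{\bm\Psi^{\theta_v}_t}\bigr)^{\!\top}$.
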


\begin{corollary}[Limits and rates]\label{cor:d-conv}
Define $\bm d^*(\theta_v,\theta_w)\in\R^n$ by $d^*_i:=\Dkl{f_i}{l^*(\cdot\mid\theta_w)}-\Dkl{f_i}{l^*(\cdot\mid\theta_v)}$. Then $D_t(\theta_v,\theta_w)\to\bm d^*\one^{\top}$ with $\max_{i,j}|[D_t]_{ij}-d^*_i|=O(\rho_{A'}^{\,t})$. The drift vector $\bm e_t:=(A'\circ D_t)\one$ satisfies $\|\bm e_t-\bm d^*\|_\infty=O(\rho_{A'}^{\,t})$, and the normalizer-gap vector with entries $[\bm\delta_t]_i:=\log(Z^{\theta_w}_{i,t}/Z^{\theta_v}_{i,t})$ satisfies $\|\bm\delta_t\|_\infty=O(\rho_{A'}^{\,t})$.
\end{corollary}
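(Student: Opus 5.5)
The plan is to work from the closed forms of Proposition~\ref{prop:l-closed} and Proposition~\ref{thm:d-closed} and to pass to the limit using Lemma~\ref{lem:pf-rate}.

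\emph{Step 1: the normalizers.} Write $\bm\lambda_t(s\mid\theta)\in\R^n$ for the vector with entries $\log l_{j,t}(s\mid\theta)$. By Proposition~\ref{prop:l-closed},
\[
\log l_{i,t}(s\mid\theta)=[A'^{\,t}\bm\lambda_0(s\mid\theta)]_i-\log\Psi^\theta_{i,t}.
\]
By Lemma~\ref{lem:pf-rate}, the rows of $A'^{\,t}$ converge to $\pi^{A'\top}$ at rate $O(\rho_{A'}^{\,t})$. The exponent $\sum_j[A'^{\,t}]_{ij}\log l_{j,0}(s\mid\theta)$ therefore converges uniformly in $i$ and $s$ to $\pi^{A'\top}\bm\lambda_0(s\mid\theta)$, with error at most $\|A'^{\,t}-\one\pi^{A'\top}\|_\infty\max_{j,s}|\log l_{j,0}(s\mid\theta)|$. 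This maximum is finite by (A2). Since $\exp$ and $\log$ are Lipschitz on the bounded range involved, $\log\Psi^\theta_{i,t}\to\log\Psi^\theta_\infty$ at the same rate. This is essentially the argument behind Theorem~\ref{thm:l-consensus}.

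\emph{Step 2: the KL-gap matrix.} Take the limit in \eqref{eq:d-closed}. The first term $D_0(A'^{\,t})^\top$ converges to $D_0\pi^{A'}\one^\top$, with error bounded by $\|D_0\|_\infty\|A'^{\,t}-\one\pi^{A'\top}\|_\infty$. By Step 1, the second term converges to $\one\cdot\log(\Psi^{\theta_w}_\infty/\Psi^{\theta_v}_\infty)\one^\top$. It remains to identify the limit with $\bm d^*\one^\top$. Expanding the KL divergences gives
\[
d^*_i=\sum_s f_i(s)\log\frac{l^*(s\mid\theta_v)}{l^*(s\mid\theta_w)}.
\]
Substituting \eqref{eq:lstar} turns this into $\sum_j\pi^{A'}_j[D_0]_{ij}+\log\Psi^{\theta_w}_\infty-\log\Psi^{\theta_v}_\infty$, which matches. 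Alternatively, we can note directly that $D_t$ is continuous in $l_{j,t}$ and invoke Theorem~\ref{thm:l-consensus}; the log-Lipschitz bound with $l_{j,t}$ bounded away from $0$ gives the rate.

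\emph{Step 3: the drift and the normalizer gap.} For the drift we have $[\bm e_t]_i=\sum_j a'_{ij}[D_t]_{ij}$. Since $A'$ is row-stochastic,
\[
|[\bm e_t]_i-d^*_i|\le\max_j|[D_t]_{ij}-d^*_i|=O(\rho_{A'}^{\,t}).
\]
For the normalizer gap, $Z^\theta_{i,t}=\sum_s\prod_j l_{j,t}(s\mid\theta)^{a'_{ij}}$. Since every $l_{j,t}(\cdot\mid\theta)$ is within $O(\rho_{A'}^{\,t})$ of the common $l^*(\cdot\mid\theta)$ uniformly (Theorem~\ref{thm:l-consensus}), and all of these distributions lie in a compact subset of $\ri\Delta(S)$ (uniform lower bound from Step 1), the geometric mean of the $l_{j,t}(s\mid\theta)$ is within $O(\rho_{A'}^{\,t})$ of $l^*(s\mid\theta)$. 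Summing over $s$ then gives $Z^\theta_{i,t}=1+O(\rho_{A'}^{\,t})$. Hence $\log Z^{\theta}_{i,t}=O(\rho_{A'}^{\,t})$ for both $\theta_v$ and $\theta_w$, and so $\|\bm\delta_t\|_\infty=O(\rho_{A'}^{\,t})$.

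\emph{Main obstacle.} The only delicate point is the uniform lower bound on $l_{j,t}(s\mid\theta)$ needed for the Lipschitz estimates. The closed form settles it: each $\log l_{j,t}(s\mid\theta)$ is a convex combination of the initial log-likelihoods minus $\log\Psi$, and $\Psi\le 1$ by the weighted AM--GM inequality. So $l_{j,t}(s\mid\theta)\ge\min_{k,s'}l_{k,0}(s'\mid\theta)>0$ for all $t$. A second point is to keep the $O(\cdot)$ convention of Section~\ref{subsec:notation}: every bound is a fixed constant times $\|A'^{\,t}-\one\pi^{A'\top}\|_\infty$, so it inherits the convention exactly.
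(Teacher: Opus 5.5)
Your proposal is correct. For $D_t$ and $\bm e_t$ it follows the paper's argument. The paper also obtains a rate for $\log\Psi^\theta_{j,t}\to\log\Psi^\theta_\infty$ from Steps 1--3 of the proof of Theorem~\ref{thm:l-consensus} together with a Lipschitz bound for $\log$. It rewrites $d^*_i=\sum_k\pi^{A'}_k[D_0]_{ik}+\log\Psi^{\theta_w}_\infty-\log\Psi^{\theta_v}_\infty$, subtracts this from \eqref{eq:d-closed}, and treats the drift as a convex combination.

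The one real difference is the normalizer gap. The paper does not perturb the geometric pool around $l^*$. Instead it uses the exact identity \eqref{eq:Z-Psi} from the proof of Proposition~\ref{prop:l-closed}, $Z^\theta_{i,t}=\Psi^\theta_{i,t+1}/\prod_j(\Psi^\theta_{j,t})^{a'_{ij}}$. With $\sum_j a'_{ij}=1$ this gives
\[
\log Z^\theta_{i,t}=\bigl(\log\Psi^\theta_{i,t+1}-\log\Psi^\theta_\infty\bigr)-\sum_j a'_{ij}\bigl(\log\Psi^\theta_{j,t}-\log\Psi^\theta_\infty\bigr),
\]
so $|\log Z^\theta_{i,t}|$ is at most twice the $\log\Psi$ error bound already established. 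No further Lipschitz estimate and no appeal to Theorem~\ref{thm:l-consensus} is needed.

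Your route also works. Each $l_{j,t}(s\mid\theta)$ is uniformly close to $l^*(s\mid\theta)$, and the weighted geometric mean is $1/\beta$-Lipschitz in the sup norm on $[\beta,1]^n$, so $Z^\theta_{i,t}\to\sum_s l^*(s\mid\theta)=1$ at the required rate. For this you need the lower bound for $l^*$ as well as for $l_{j,t}$; the same AM--GM argument with weights $\pi^{A'}$ supplies it. Your argument is more conceptual: the normalizer of any geometric pool of nearly identical full-support inputs is near $1$, regardless of the closed form. The cost is an extra constant of order $1/\beta$ and a dependence on the consensus theorem. The paper's identity is exact and reuses a single estimate for all three claims.

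Your bound $l_{j,t}(s\mid\theta)\ge\min_{k,s'}l_{k,0}(s'\mid\theta)$, via $\Psi\le1$, is sharper than the paper's $\beta=\underline l/|S|$. The paper notes the AM--GM inequality but only needs finiteness.
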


\section{Proofs}
\label{app:proofs}

Several proofs below use the Shannon entropy $H(p):=-\sum_{x}p(x)\log p(x)$ of $p\in\Delta(X)$, through the identity $\Dkl{p}{q}=-H(p)-\sum_{x}p(x)\log q(x)$, valid whenever $q$ has full support.

\subsection{Proofs of Section \ref{sec:likelihood}}

\begin{proof}[Proof of Lemma \ref{lem:logratio}]
Taking logarithms in \eqref{eq:S1} for $s$ and $s'$ and subtracting, the normalizer $Z^\theta_{i,t-1}$ (independent of the signal argument) cancels, leaving $\log l_{i,t}(s\mid\theta)-\log l_{i,t}(s'\mid\theta) =\sum_{j\in\Nip}a'_{ij}\bigl(\log l_{j,t-1}(s\mid\theta)-\log l_{j,t-1}(s'\mid\theta)\bigr)$, which is the $i$-th coordinate of $A'\bm r_{t-1}$. The limit follows from the Perron-Frobenius estimate for the primitive matrix $A'$ (Appendix~\ref{app:pf}).
\end{proof}

\begin{proof}[Proof of Proposition \ref{prop:l-closed}]
By induction on $t$. The base case $t=0$ holds since $[A'^{0}]_{ij}=\delta_{ij}$, giving $\prod_j l_{j,0}(s\mid\theta)^{\delta_{ij}}=l_{i,0}(s\mid\theta)$, and $\Psi^\theta_{i,0}=\sum_s l_{i,0}(s\mid\theta)=1$, so \eqref{eq:l-closed} reads $l_{i,0}(s\mid\theta)=l_{i,0}(s\mid\theta)$.

Assume \eqref{eq:l-closed} holds at time $t-1$ for every $i,\theta,s$. By (A2) the initial profiles have full support, so every $l_{j,t-1}(\cdot\mid\theta)\in\ri\Delta(S)$ and all expressions below are well defined. Setting $a'_{ij}=0$ for $j\notin\Nip$ lets us write $\prod_{j\in\Nip}=\prod_{j=1}^n$. Substituting the inductive hypothesis into \eqref{eq:S1},

\begin{align*}
    l_{i,t}(s\mid\theta)
    &= \frac{\displaystyle\prod_{j}l_{j,t-1}(s\mid\theta)^{a'_{ij}}}{Z^\theta_{i,t-1}} = \frac{1}{Z^\theta_{i,t-1}}\prod_{j} \Biggl(\frac{\prod_{k}l_{k,0}(s\mid\theta)^{[A'^{\,t-1}]_{jk}}} {\Psi^\theta_{j,t-1}}\Biggr)^{\!a'_{ij}}\\
    &= \frac{1}{Z^\theta_{i,t-1}}\cdot \frac{\displaystyle\prod_{k}l_{k,0}(s\mid\theta)^{\sum_j a'_{ij}[A'^{\,t-1}]_{jk}}}{\displaystyle\prod_{j}\bigl(\Psi^{\theta}_{j,t-1}\bigr)^{a'_{ij}}} = \frac{1}{Z^\theta_{i,t-1}}\cdot \frac{\displaystyle\prod_{k}l_{k,0}(s\mid\theta)^{[A'^{\,t}]_{ik}}} {\displaystyle\prod_{j}\bigl(\Psi^{\theta}_{j,t-1}\bigr)^{a'_{ij}}},
\end{align*}

where the third equality uses $\sum_j a'_{ij}[A'^{\,t-1}]_{jk}=[A'A'^{\,t-1}]_{ik}=[A'^{\,t}]_{ik}$.

Write $\Gamma_i:=\prod_{j}\bigl(\Psi^{\theta}_{j,t-1}\bigr)^{a'_{ij}}$, which does not depend on $s$. Then

\begin{equation*}
    l_{i,t}(s\mid\theta)=\frac{1}{Z^\theta_{i,t-1}\,\Gamma_i}\, \prod_{k}l_{k,0}(s\mid\theta)^{[A'^{\,t}]_{ik}}.
\end{equation*}

Both $Z^\theta_{i,t-1}$ and $\Gamma_i$ are constants independent of $s$, so their product is the unique constant that normalizes $s\mapsto\prod_{k}l_{k,0}(s\mid\theta)^{[A'^{\,t}]_{ik}}$ to a probability distribution on $S$. Summing the last display over $s$ and using $\sum_s l_{i,t}(s\mid\theta)=1$,

\begin{equation*}
    1=\frac{1}{Z^\theta_{i,t-1}\,\Gamma_i}\sum_{s\in S}\prod_{k}l_{k,0}(s\mid\theta)^{[A'^{\,t}]_{ik}} =\frac{\Psi^{\theta}_{i,t}}{Z^\theta_{i,t-1}\,\Gamma_i},
\end{equation*}

where the second equality is exactly the normalization term in Proposition \ref{prop:l-closed}. Hence $Z^\theta_{i,t-1}\,\Gamma_i=\Psi^{\theta}_{i,t}$, that is,

\begin{equation}
    Z^\theta_{i,t-1}=\frac{\Psi^{\theta}_{i,t}}{\prod_{j}\bigl(\Psi^{\theta}_{j,t-1}\bigr)^{a'_{ij}}}\qquad\text{for all }i\in N,\ \theta\in\Theta,\ t\ge1,
    \label{eq:Z-Psi}
\end{equation}

and substituting back gives

\begin{equation*}
    l_{i,t}(s\mid\theta)=\frac{\prod_{k}l_{k,0}(s\mid\theta)^{[A'^{\,t}]_{ik}}}{\Psi^{\theta}_{i,t}},
\end{equation*}

which is \eqref{eq:l-closed} at time $t$. By induction the closed form holds for all $t\ge 0$.
\end{proof}

\begin{proof}[Proof of Theorem \ref{thm:l-consensus}]
Fix $\theta\in\Theta$ and $\hat\rho\in(\rho_{A'},1)$. If $A'$ is diagonalizable, $\hat\rho=\rho_{A'}$ is also allowed. By Lemma~\ref{lem:pf-rate} applied to the primitive matrix $A'$ of (A1), there is a constant $C_0<\infty$ with

\begin{equation*}
    \bigl|[A'^{\,t}]_{ij}-\pi^{A'}_j\bigr|\le\bigl\|A'^{\,t}-\one\pi^{A'\top}\bigr\|_\infty\le C_0\,\hat\rho^{\,t} \qquad\text{for all }i,j\text{ and }t\ge 0.
\end{equation*}

By (A2) every $l_{j,0}(s\mid\theta)>0$, so the numbers $\log l_{j,0}(s\mid\theta)$ are finite. Set

\begin{equation*}
    L:=\max_{j\in N,\,s\in S}\bigl|\log l_{j,0}(s\mid\theta)\bigr|<\infty.
\end{equation*}

\emph{Step 1: convergence of the log-numerators.} For $i\in N$, $s\in S$, define the exponents

\begin{equation*}
    E_{i,t}(s):=\sum_{j=1}^n[A'^{\,t}]_{ij}\log l_{j,0}(s\mid\theta), \qquad E_\infty(s):=\sum_{j=1}^n\pi^{A'}_j\log l_{j,0}(s\mid\theta).
\end{equation*}

Then, using the Perron-Frobenius bound termwise,

\begin{equation*}
    \bigl|E_{i,t}(s)-E_\infty(s)\bigr| \le\sum_{j=1}^n\bigl|[A'^{\,t}]_{ij}-\pi^{A'}_j\bigr|\,\bigl|\log l_{j,0}(s\mid\theta)\bigr| \le n\,C_0 L\,\hat\rho^{\,t} =:C_1\hat\rho^{\,t},
\end{equation*}

uniformly in $i,s$.

\emph{Step 2: convergence of the numerators.} Write the unnormalized numerators

\begin{equation*}
    N_{i,t}(s):=\prod_{j=1}^n l_{j,0}(s\mid\theta)^{[A'^{\,t}]_{ij}}=e^{E_{i,t}(s)}, \qquad N_\infty(s):=e^{E_\infty(s)}.
\end{equation*}

Every exponent satisfies $|E_{i,t}(s)|\le L$ and $|E_\infty(s)|\le L$, since the weights $[A'^{\,t}]_{ij}$ and $\pi^{A'}_j$ are nonnegative and sum to $1$. On the interval $[-L,L]$ the exponential is Lipschitz with constant $e^{L}$, so by Step 1

\begin{equation*}
    \bigl|N_{i,t}(s)-N_\infty(s)\bigr| \le e^{L}\,\bigl|E_{i,t}(s)-E_\infty(s)\bigr| \le e^{L}C_1\,\hat\rho^{\,t} =:C_2\hat\rho^{\,t},
\end{equation*}

uniformly in $i,s$.

\emph{Step 3: convergence of the normalizers.} Summing over the finitely many $s\in S$,

\begin{equation*}
    \bigl|\Psi^\theta_{i,t}-\Psi^\theta_\infty\bigr| =\Bigl|\sum_{s\in S}N_{i,t}(s)-\sum_{s\in S}N_\infty(s)\Bigr| \le\sum_{s\in S}\bigl|N_{i,t}(s)-N_\infty(s)\bigr| \le|S|\,C_2\,\hat\rho^{\,t},
\end{equation*}

uniformly in $i$. Moreover $\Psi^\theta_\infty=\sum_{s}N_\infty(s)\ge|S|\,e^{-L}>0$, and each $\Psi^\theta_{i,t}\ge|S|\,e^{-L}>0$ as well, since every $N_{i,t}(s)\ge e^{-L}$.

\emph{Step 4: convergence of the quotient.} By \eqref{eq:l-closed} and \eqref{eq:lstar},

\begin{equation*}
    l_{i,t}(s\mid\theta)-l^*(s\mid\theta) =\frac{N_{i,t}(s)}{\Psi^\theta_{i,t}}-\frac{N_\infty(s)}{\Psi^\theta_\infty}.
\end{equation*}

Using the elementary identity

\begin{equation*}
    \frac{a_t}{b_t}-\frac{a}{b} =\frac{(a_t-a)\,b-a\,(b_t-b)}{b_t\,b}
\end{equation*}

with $a_t=N_{i,t}(s)$, $b_t=\Psi^\theta_{i,t}$, $a=N_\infty(s)$, $b=\Psi^\theta_\infty$, together with the bounds $N_\infty(s)\le e^{L}$, $b=\Psi^\theta_\infty\ge|S|e^{-L}$, and $b_t=\Psi^\theta_{i,t}\ge|S|e^{-L}$, we obtain

\begin{equation*}
    \bigl|l_{i,t}(s\mid\theta)-l^*(s\mid\theta)\bigr| \le\frac{\bigl|N_{i,t}(s)-N_\infty(s)\bigr|\,\Psi^\theta_\infty +N_\infty(s)\,\bigl|\Psi^\theta_{i,t}-\Psi^\theta_\infty\bigr|} {\Psi^\theta_{i,t}\,\Psi^\theta_\infty} \le\frac{C_2\hat\rho^{\,t}\,\Psi^\theta_\infty +e^{L}\,|S|C_2\hat\rho^{\,t}} {\bigl(|S|e^{-L}\bigr)\,\Psi^\theta_\infty}.
\end{equation*}

The right-hand side equals $C_3\,\hat\rho^{\,t}$ with $C_3:=\bigl(C_2\Psi^\theta_\infty+e^{L}|S|C_2\bigr)/\bigl(|S|e^{-L}\Psi^\theta_\infty\bigr)$, which does not depend on $i$, $s$ or $t$. Hence

\begin{equation*}
    \max_{i\in N,\,s\in S}\bigl|l_{i,t}(s\mid\theta)-l^*(s\mid\theta)\bigr| \le C_3\,\hat\rho^{\,t}\qquad\text{for all }t\ge0.
\end{equation*}

Since $\hat\rho\in(\rho_{A'},1)$ was arbitrary, and $\hat\rho=\rho_{A'}$ is admissible when $A'$ is diagonalizable, this is \eqref{eq:l-conv} in the sense of Section~\ref{subsec:notation}. Finally, each factor $l_{j,0}(s\mid\theta)^{\pi^{A'}_j}$ in \eqref{eq:lstar} is strictly positive by (A2) and $\Psi^\theta_\infty>0$, so $l^*(\cdot\mid\theta)\in\ri\Delta(S)$. Since $\theta\in\Theta$ was arbitrary, the claim holds for every $\theta$.
\end{proof}

\subsection{Proofs of Section \ref{sec:beliefs}}\label{app:proofs_beliefs}

This appendix first derives the linear recursion of the log-beliefs ratios (Appendix~\ref{app:recursion}) and computes the expectation and long-run of the drift (Appendix~\ref{app:drift}). It then proves Theorem~\ref{thm:consistency} and Corollary~\ref{cor:singleton} (Appendix~\ref{app:consistency-proof}). The recursion inherits its skeleton from \cite{nedic2017fast}. There, the drift is a static function of fixed models. Here, the likelihoods are the evolving pool $l_{j,t}$, so the drift is a function of a moving target and its limit must be carried through the Bayesian step explicitly. The rest of the argument combines a strong law for independent summands with network mixing.

\subsubsection{The log-belief-ratio recursion}\label{app:recursion}

Both social steps act on beliefs multiplicatively, so taking logarithms turns their products into sums and the whole update becomes additive. This is the same coordinate change that turned the geometric likelihood pool into linear averaging in Lemma~\ref{lem:logratio}. Applied to beliefs it produces a linear recursion, as we now show. Fix an ordered pair of hypotheses $(\theta_v,\theta_w)$, let $\phi_{i,t}=\phi_{i,t}(\theta_v,\theta_w):=\log\frac{\mu_{i,t}(\theta_v)}{\mu_{i,t}(\theta_w)},
  \qquad \theta_v,\theta_w\in\Theta,$ be the log-belief ratio, and stack the agents' ratios into

\begin{equation*}
    \bm\phi_t(\theta_v,\theta_w):=\bigl(\phi_{1,t},\dots,\phi_{n,t}\bigr)^{\!\top}\in\R^n .
\end{equation*}
 
By (A2) every belief has full support, so $\bm\phi_t$ is finite for all $t$. Take the log-ratio of the Bayesian step \eqref{eq:S3} and substitute the log-linear pool \eqref{eq:S2} and the pooled model \eqref{eq:S1}. The normalizers of the interim belief and of the geometric pool cancel in every ratio, and we obtain, for each $i$,
 
\begin{equation}
  \phi_{i,t}
  =\sum_{j=1}^n a_{ij}\,\phi_{j,t-1}+[\bm x_t]_i,
  \label{eq:onestep}
\end{equation}
 
where the scalar increment collects the fresh evidence seen at agent $i$,
 
\begin{equation}
  [\bm x_t]_i
  :=\sum_{j=1}^n a'_{ij}\,
       \log\frac{l_{j,t-1}(\omega_{it}\mid\theta_v)}
                {l_{j,t-1}(\omega_{it}\mid\theta_w)}
     \;+\;\log\frac{Z^{\theta_w}_{i,t-1}}{Z^{\theta_v}_{i,t-1}} .
  \label{eq:increment}
\end{equation}
 
The increment has one term from each network. The first is an $a'_{ij}$-weighted average, over the models agent $i$ attends to, of how strongly those models tell $\theta_v$ and $\theta_w$ apart at the realized signal $\omega_{it}$. The second is the residual left by the geometric-pool normalizers. All the randomness of the update is contained in $\bm x_t$. The deterministic social averaging is contained in the matrix $A$.
 
Stacking the increments and iterating \eqref{eq:onestep} gives the matrix form and its unrolling,
 
\begin{equation}
  \bm\phi_t(\theta_v,\theta_w)=A\,\bm\phi_{t-1}(\theta_v,\theta_w)+\bm x_t ,
  \qquad
  \bm\phi_t
  =A^{t}\bm\phi_0+\sum_{\tau=1}^{t}A^{\,t-\tau}\bm x_\tau .
  \label{eq:unrolled}
\end{equation}
 
The unrolled form is worth reading slowly, because it fixes what remains to be done. The state splits into two parts: an initial-condition term $A^{t}\bm\phi_0$, which decays as the network mixes ($A^{t}\to\one\pi^{A\top}$, see Section~\ref{subsec:notation}), and an accumulated-evidence term $\sum_\tau A^{\,t-\tau}\bm x_\tau$, which carries the long-run behavior. The first part forgets where the beliefs started. The second is where learning happens. Everything that follows is an analysis of the second part.
 
Two properties of the forcing $\bm x_t$ make that analysis possible. Because the models $l_{j,t-1}(\cdot\mid\theta)$ and the normalizers $Z^{\theta}_{i,t-1}$ are deterministic (Section~\ref{sec:likelihood}) and $\omega_{it}\sim f_i$ is drawn independently of the past $\F_{t-1}$, the vector $\bm x_t$ is $\F_t$-measurable, and the $\{\bm x_t\}_{t\ge1}$ are mutually independent: each is a deterministic function of the disjoint, time-indexed signal family $\{\omega_{it}\}_{i\in N}$. An independent forcing sequence can be split into its mean and a mean-zero fluctuation, and the two can be handled separately, which is what we do next.

\subsubsection{The drift of the increment}\label{app:drift}

Since the accumulated forcing governs the dynamics, we separate each increment into its conditional mean, the drift, and a mean-zero fluctuation. The fluctuation will later be shown to average out. The drift is where the direction of learning is decided, so we compute it here.
 
Because $l_{j,t-1}(\cdot\mid\theta)$ is $\F_{t-1}$-measurable and $\omega_{it}\sim f_i$ is independent of $\F_{t-1}$,
 
\begin{align}
    \E\!\left[\log\frac{l_{j,t-1}(\omega_{it}\mid\theta_v)}{l_{j,t-1}(\omega_{it}\mid\theta_w)}\,\Big|\,\F_{t-1}\right]
    &=\sum_{s\in S} f_i(s)\log\frac{l_{j,t-1}(s\mid\theta_v)}{l_{j,t-1}(s\mid\theta_w)} .
    \notag
\end{align}
 
Adding and subtracting $\sum_s f_i(s)\log f_i(s)$ rewrites this expectation as a difference of two KL divergences,
 
\begin{equation}
    \E\!\left[\log\frac{l_{j,t-1}(\omega_{it}\mid\theta_v)}{l_{j,t-1}(\omega_{it}\mid\theta_w)}\,\Big|\,\F_{t-1}\right]
    = \Dkl{f_i}{l_{j,t-1}(\cdot\mid\theta_w)}-\Dkl{f_i}{l_{j,t-1}(\cdot\mid\theta_v)},
    \label{eq:EL}
\end{equation}
 
and this form makes the meaning plain: in expectation, evidence pushes agent $i$'s belief toward whichever hypothesis its model explains better, and the strength of the push is exactly how much better $\theta_v$ fits agent $i$'s own truth $f_i$ than $\theta_w$ does, measured in KL divergence.
 
The divergences in \eqref{eq:EL} are evaluated at the moving models $l_{j,t-1}$, and this is where the work of Section~\ref{sec:likelihood} pays off. By Theorem~\ref{thm:l-consensus} the models converge geometrically to the common limit $l^{*}$, and KL divergence is continuous on $\ri\Delta(S)$ (the models stay in a compact subset of $\ri\Delta(S)$ by Lemma~\ref{lem:unifbound}), so $\Dkl{f_i}{l_{j,t-1}(\cdot\mid\theta)}\to\Dkl{f_i}{l^{*}(\cdot\mid\theta)}$ for every $j$. Because the weights satisfy $\sum_j a'_{ij}=1$ and the limit no longer depends on $j$, the $a'_{ij}$-weighting inside the increment averages a single number and thus disappears in the limit. Then, the $a'_{ij}$-weighted sum of \eqref{eq:EL} converges to
 
\begin{equation}
    [\bm d^*]_i := \Dkl{f_i}{l^*(\cdot\mid\theta_w)} - \Dkl{f_i}{l^*(\cdot\mid\theta_v)}.
    \label{eq:dstar}
\end{equation}
 
The normalizer term of \eqref{eq:increment} vanishes as well, because $Z^\theta_{i,t-1}\to\sum_s l^*(s\mid\theta)=1$. Hence the conditional mean of the increment converges to a deterministic vector,
 
\begin{equation}
    \bar{\bm x}_t := \E[\bm x_t\mid\F_{t-1}] = \E[\bm x_t] \longrightarrow \bm d^*,
    \qquad t\to\infty.
    \label{eq:meanconv}
\end{equation}

The limiting drift $\bm d^{*}$ has a clean reading: its $i$-th entry is how much better $\theta_v$ explains agent $i$'s truth than $\theta_w$ does, now judged through the single shared model $l^{*}$ instead of through agent $i$'s original one. The individual models have dropped out, only the consensus remains.
 
\begin{remark}[Heterogeneous truths before consensus]\label{rem:hadamard}
One feature of \eqref{eq:EL} is visible before consensus but not after.

The expectation depends on two indices at once: on the row $i$, because the signal $\omega_{it}$ is drawn from the agent-specific truth $f_i$, and on the column $j$, because the likelihood is read off agent $j$'s still-distinct model. The mean increment is therefore a row-sum of a Hadamard product plus the deterministic normalizer gap,
 
\begin{equation*}
    \E[\bm x_t] = (A'\circ D_{t-1}(\theta_v,\theta_w))\,\one + \bm\delta_{t-1},
    \qquad
    [D_t(\theta_v,\theta_w)]_{ij} = \Dkl{f_i}{l_{j,t}(\cdot\mid\theta_w)} - \Dkl{f_i}{l_{j,t}(\cdot\mid\theta_v)},
\end{equation*}
 
where $[\bm\delta_{t-1}]_i:=\log\bigl(Z^{\theta_w}_{i,t-1}/Z^{\theta_v}_{i,t-1}\bigr)$ is the normalizer term of \eqref{eq:increment}.

Notice this is not the simpler product $A'\bm d$ that would appear if all agents shared one truth $f_1=\dots=f_n$. The two indices separate only in the limit, once every column of $D_t$ becomes identical because $l_{j,t}(\cdot\mid\theta)\to l^*(\cdot\mid\theta)$. This collapse is what removes the index $j$ from $\bm d^{*}$ and lets it act as a single drift.
\end{remark}

\subsubsection{Proof of Theorem~\ref{thm:consistency} and Corollary~\ref{cor:singleton}}\label{app:consistency-proof}

The proof is the standard strong-law-plus-mixing scheme. It is specialized to the present setting only through the drift limit \eqref{eq:meanconv}, and it rests on three facts, one for each concern raised by the unrolled recursion \eqref{eq:unrolled}.

\begin{enumerate}
    \item The increments are uniformly bounded, because the pooled models never approach the boundary of the simplex (Lemma~\ref{lem:unifbound}). The mean-zero fluctuations of the forcing therefore obey a strong law, and their time-average vanishes, so the noise does not build up.
    \item The belief network mixes geometrically, $A^{t}\to\one\pi^{A\top}$ (Lemma~\ref{lem:mixing}). This erases the initial term $A^{t}\bm\phi_0$ and, in the accumulated-evidence sum, drives the weight on each agent's past forcing to its stationary value $\pi^{A}$. This is where $\pi^{A}$ becomes the weighting that matters.
    \item Combining the two with the drift limit \eqref{eq:meanconv} yields a Cesàro law (Lemma~\ref{lem:cesaro}): the time-average of the accumulated forcing converges almost surely to the network drift $\one\,\pi^{A\top}\bm d^{*}$, which gives \eqref{eq:phi-rate}.
\end{enumerate}

\begin{lemma}[Uniform bound on the increments]\label{lem:unifbound}
There exists $\beta\in(0,1)$ with $l_{j,t}(s\mid\theta)\ge\beta$ for all $t\ge0$, $j\in N$, $s\in S$, $\theta\in\Theta$. Consequently $\|\bm x_t\|_\infty\le B_x:=\log\tfrac1\beta+\log\tfrac{|S|}{\beta}$ a.s.\ for all $t\ge1$.
\end{lemma}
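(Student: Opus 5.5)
The plan is to work from the closed form of Proposition~\ref{prop:l-closed}, bounding the numerator and the normalizer separately. Set
\[
L:=\max_{j\in N,\,s\in S,\,\theta\in\Theta}\bigl|\log l_{j,0}(s\mid\theta)\bigr|,
\]
which is finite by (A2) because $\Theta$, $S$ and $N$ are finite. For each $t\ge0$ the row $([A'^{\,t}]_{ij})_j$ is a probability vector, since $A'^{\,t}$ is row-stochastic. Hence the log-numerator
\[
E_{i,t}(s)=\sum_j [A'^{\,t}]_{ij}\log l_{j,0}(s\mid\theta)
\]
is a convex combination of numbers in $[-L,L]$, and so $e^{-L}\le N_{i,t}(s)\le e^{L}$. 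The same argument gives $|S|e^{-L}\le\Psi^\theta_{i,t}\le|S|e^{L}$, as already noted in Step 3 of the proof of Theorem~\ref{thm:l-consensus}.

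Dividing the two bounds gives
\[
l_{j,t}(s\mid\theta)\ \ge\ \frac{e^{-L}}{|S|e^{L}}=\frac{e^{-2L}}{|S|}=:\beta .
\]
We have $\beta\in(0,1)$ because $L\ge0$ and $|S|\ge2$; the case $|S|=1$ is trivial, and there one can shrink $\beta$. The bound is uniform in $t,j,s,\theta$. The case $t=0$ is covered automatically, since $\Psi^\theta_{j,0}=1$.

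For the increment \eqref{eq:increment}, I bound its two terms separately.

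\emph{Log-ratio term.} Each likelihood satisfies $l_{j,t-1}(\omega\mid\theta)\in[\beta,1]$, so $\bigl|\log\bigl(l_{j,t-1}(\omega\mid\theta_v)/l_{j,t-1}(\omega\mid\theta_w)\bigr)\bigr|\le\log\frac1\beta$. Because $\sum_j a'_{ij}=1$, the $a'_{ij}$-weighted sum obeys the same bound $\log\frac1\beta$.

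\emph{Normalizer term.} The normalizer $Z^\theta_{i,t-1}$ is a sum over $s'$ of weighted geometric means of numbers in $[\beta,1]$. Each such geometric mean lies in $[\beta,1]$, so $Z^\theta_{i,t-1}\in[|S|\beta,\,|S|]$. Since $|S|\beta\ge\beta$, this gives
\[
\bigl|\log\bigl(Z^{\theta_w}_{i,t-1}/Z^{\theta_v}_{i,t-1}\bigr)\bigr|\le\log|S|-\log(|S|\beta)=\log\tfrac1\beta\le\log\tfrac{|S|}{\beta}.
\]

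Adding the two terms yields $\|\bm x_t\|_\infty\le\log\frac1\beta+\log\frac{|S|}{\beta}$ for every realization, which is stronger than almost surely.

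There is no serious obstacle here. The only points needing care are that row-stochasticity of $A'^{\,t}$ is what makes the bound uniform in $t$, and that the constants in the normalizer bound should be tracked loosely so that they match the stated $B_x$.
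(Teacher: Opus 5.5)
Your proposal is correct and follows essentially the same route as the paper: the closed form of Proposition~\ref{prop:l-closed}, a lower bound on the numerator, an upper bound on $\Psi^\theta_{i,t}$, and convex-combination bounds on both terms of \eqref{eq:increment}. The only differences are in the constants. Because you bound the numerator by $e^{L}$ instead of $1$, you get the smaller but still valid $\beta=\underline l^{\,2}/|S|$ (with $\underline l=\min_{j,s,\theta}l_{j,0}(s\mid\theta)=e^{-L}$) rather than the paper's $\underline l/|S|$, which is harmless since the lemma only asserts existence; your normalizer range $Z^\theta_{i,t-1}\in[|S|\beta,|S|]$ is slightly tighter than the paper's $[\beta,|S|]$.
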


\begin{proof}[Proof of Lemma \ref{lem:unifbound}]
Let $\underline l:=\min_{j\in N,\,s\in S,\,\theta\in\Theta}l_{j,0}(s\mid\theta)$, which is $>0$ by (A2) and the finiteness of $N,S,\Theta$, and set $\beta:=\underline l/|S|$. (The bounds below are not sharp: by the AM-GM inequality one even has $\Psi^\theta_{i,t}\le1$ and $Z^\theta_{i,t-1}\le1$; only finiteness matters.) Since $\underline l\le\sum_{s}l_{j,0}(s\mid\theta)=1$ and $|S|\ge 2$, we have $\beta\in(0,1)$.

\emph{Uniform lower bound.} Fix $i\in N$, $\theta\in\Theta$, $s\in S$, $t\ge 0$. In the closed form \eqref{eq:l-closed} the exponents $[A'^{\,t}]_{ij}$ are nonnegative and sum to $1$, because $A'^{\,t}$ is row-stochastic. Hence each weighted geometric mean obeys

\begin{equation*}
    \underline l=\underline l^{\sum_j[A'^{\,t}]_{ij}} \le\prod_{j=1}^n l_{j,0}(s\mid\theta)^{[A'^{\,t}]_{ij}} \le 1^{\sum_j[A'^{\,t}]_{ij}}=1,
\end{equation*}

using $\underline l\le l_{j,0}(s\mid\theta)\le 1$ for every $j$. Thus the numerator of \eqref{eq:l-closed} is $\ge\underline l$, and the denominator $\Psi^\theta_{i,t}=\sum_{s'}\prod_j l_{j,0}(s'\mid\theta)^{[A'^{\,t}]_{ij}}$ is a sum of $|S|$ such means, so $\Psi^\theta_{i,t}\le|S|$. Therefore

\begin{equation*}
    l_{i,t}(s\mid\theta)=\frac{\prod_j l_{j,0}(s\mid\theta)^{[A'^{\,t}]_{ij}}}{\Psi^\theta_{i,t}} \ge\frac{\underline l}{|S|}=\beta,
\end{equation*}

and $l_{i,t}(s\mid\theta)\le 1$ since it is a probability. Hence $\beta\le l_{i,t}(s\mid\theta)\le 1$ for all $i,\theta,s,t\ge 0$.

\emph{Bound on the increment.} By definition each coordinate of $\bm x_t$ is the sum of a signal-dependent log-likelihood-ratio term and a log-normalizer-ratio term (see \eqref{eq:increment}). For the first, every entry $\log l_{i,t}(s\mid\theta)$ lies in $[\log\beta,0]$ by the bound just proved, so a difference of two such logarithms lies in $[\log\beta,-\log\beta]=[-\log\tfrac1\beta,\log\tfrac1\beta]$. The $a'_{ij}$-weighted average of such differences (a convex combination) lies in the same interval, giving a contribution of absolute value $\le\log\tfrac1\beta$. For the second, each normalizer $Z^\theta_{i,t-1}=\sum_{s}\prod_j l_{j,t-1}(s\mid\theta)^{a'_{ij}}$ is a sum of $|S|$ weighted geometric means of numbers in $[\beta,1]$, hence $Z^\theta_{i,t-1}\in[\beta,|S|]$. Therefore

\begin{equation*}
    \Bigl|\log\frac{Z^{\theta_w}_{i,t-1}}{Z^{\theta_v}_{i,t-1}}\Bigr| \le\log\frac{|S|}{\beta}.
\end{equation*}

Adding the two contributions, $\|\bm x_t\|_\infty\le\log\tfrac1\beta+\log\tfrac{|S|}{\beta}=B_x$ almost surely for all $t\ge 1$.
\end{proof}

\begin{lemma}[Mixing of the aggregation matrix]\label{lem:mixing}
Let $\Lambda_t:=A^{t}-\one\pi^{A\top}$. Then $\|\Lambda_t\|_\infty=O(\rho_A^{\,t})$ and $S_A:=\sum_{t\ge0}\|\Lambda_t\|_\infty<\infty$.
\end{lemma}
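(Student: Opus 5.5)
This is a direct application of the Perron--Frobenius estimate of Appendix~\ref{app:pf} to the belief matrix $A$. I will verify the hypotheses of Lemma~\ref{lem:pf-rate} and then read off both claims.

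\emph{Hypotheses.} By (A1), $A$ is row-stochastic and primitive, so Lemma~\ref{lem:pf-rate} applies with $M=A$. It gives a unique stationary distribution, and by uniqueness this coincides with the $\pi^A$ of (A1). It also gives a second-largest eigenvalue modulus $\rho_A<1$.

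\emph{Rate.} Fix any $\hat\rho\in(\rho_A,1)$. Lemma~\ref{lem:pf-rate} provides a constant $C_{A,\hat\rho}<\infty$ with
\begin{equation*}
\|\Lambda_t\|_\infty=\|A^t-\one\pi^{A\top}\|_\infty\le C_{A,\hat\rho}\,\hat\rho^{\,t}\qquad\text{for all }t\ge0.
\end{equation*}
When $A$ is diagonalizable, the same lemma allows $\hat\rho=\rho_A$. This is exactly the statement $\|\Lambda_t\|_\infty=O(\rho_A^{\,t})$ in the convention of Section~\ref{subsec:notation}.

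\emph{Summability.} Fix any single $\hat\rho\in(\rho_A,1)$, for instance $\hat\rho=(1+\rho_A)/2$. Summing the geometric series gives
\begin{equation*}
S_A\le C_{A,\hat\rho}\sum_{t\ge0}\hat\rho^{\,t}=\frac{C_{A,\hat\rho}}{1-\hat\rho}<\infty.
\end{equation*}
Alternatively, this summability is already the last assertion of Lemma~\ref{lem:pf-rate}.

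There is no substantive obstacle. The one point to handle with care is the non-diagonalizable case. There the bound with $\hat\rho=\rho_A$ may fail because of polynomial Jordan factors, so the $O(\cdot)$ statement must be read in the paper's "every $\hat\rho>\rho_A$" sense. For summability it suffices to use one fixed $\hat\rho<1$.
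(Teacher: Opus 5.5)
Your proposal is correct and follows the paper's proof: both apply Lemma~\ref{lem:pf-rate} to the primitive row-stochastic $A$ to get $\|\Lambda_t\|_\infty\le C_{A,\hat\rho}\,\hat\rho^{\,t}$ for every $\hat\rho\in(\rho_A,1)$ (with $\hat\rho=\rho_A$ allowed when $A$ is diagonalizable), and then sum the geometric series for one fixed $\hat\rho<1$ to obtain $S_A<\infty$. Your caveat about the non-diagonalizable case is also how the paper reads the $O(\rho_A^{\,t})$ statement.
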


\begin{proof}[Proof of Lemma \ref{lem:mixing}]
Fix $\hat\rho\in(\rho_A,1)$; if $A$ is diagonalizable, $\hat\rho=\rho_A$ is also allowed. Since $A$ is primitive and row-stochastic, Lemma~\ref{lem:pf-rate} gives a constant $C_A<\infty$ with $\|\Lambda_t\|_\infty=\|A^{t}-\one\pi^{A\top}\|_\infty\le C_A\,\hat\rho^{\,t}$ for all $t\ge 0$. This is $\|\Lambda_t\|_\infty=O(\rho_A^{\,t})$ in the sense of Section~\ref{subsec:notation}. Since $\hat\rho<1$, the geometric series converges, so $S_A=\sum_{t\ge 0}\|\Lambda_t\|_\infty\le C_A\sum_{t\ge0}\hat\rho^{\,t}=\frac{C_A}{1-\hat\rho}<\infty$.
\end{proof}

\begin{lemma}[Ces\`aro limit of the driving term]\label{lem:cesaro}
With $\bm d^*$ as in \eqref{eq:dstar},

\begin{equation}
    \lim_{t\to\infty}\Bigl(\frac1t\sum_{\tau=1}^{t}A^{\,t-\tau}\bm x_\tau
    -\one\,\pi^{A\top}\bm d^*\Bigr)=\bm 0\qquad\text{a.s.}
    \label{eq:cesaro}
\end{equation}
\end{lemma}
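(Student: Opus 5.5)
We split each increment into its mean and a fluctuation, $\bm x_\tau=\bar{\bm x}_\tau+\bm\xi_\tau$ with $\bar{\bm x}_\tau=\E[\bm x_\tau]$ deterministic and $\bm\xi_\tau:=\bm x_\tau-\bar{\bm x}_\tau$. We also split each propagation matrix as $A^{t-\tau}=\one\pi^{A\top}+\Lambda_{t-\tau}$, using Lemma~\ref{lem:mixing}. This gives four terms:
\begin{equation*}
\frac1t\sum_{\tau=1}^t A^{t-\tau}\bm x_\tau-\one\pi^{A\top}\bm d^*
=\one\pi^{A\top}\Bigl(\frac1t\sum_{\tau=1}^t\bar{\bm x}_\tau-\bm d^*\Bigr)
+\one\pi^{A\top}\frac1t\sum_{\tau=1}^t\bm\xi_\tau
+\frac1t\sum_{\tau=1}^t\Lambda_{t-\tau}\bm x_\tau .
\end{equation*}
We show that each of the three terms on the right tends to zero almost surely.

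\emph{Mixing remainder.} By Lemma~\ref{lem:unifbound}, $\|\bm x_\tau\|_\infty\le B_x$ almost surely, so
\[
\Bigl\|\frac1t\sum_{\tau=1}^t\Lambda_{t-\tau}\bm x_\tau\Bigr\|_\infty\le \frac{B_x}{t}\sum_{k\ge0}\|\Lambda_k\|_\infty=\frac{B_x S_A}{t}\to0 .
\]
This holds surely, with no probability argument needed.

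\emph{Drift term.} By \eqref{eq:meanconv}, $\bar{\bm x}_\tau\to\bm d^*$ as a deterministic sequence, and Ces\`aro averages of a convergent sequence converge to the same limit. We could also cite Corollary~\ref{cor:d-conv} for an $O(\rho_{A'}^{\,\tau})$ rate, but plain convergence is enough. I would check that \eqref{eq:meanconv} is fully justified in three parts. First, the likelihood-ratio part converges because $l_{j,t}\to l^*$ by Theorem~\ref{thm:l-consensus}, and $\log$ is continuous on $[\beta,1]$ by Lemma~\ref{lem:unifbound}. Second, the weights $a'_{ij}$ sum to one, so the weighted average collapses to the common limit. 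Third, $Z^\theta_{i,t-1}\to\sum_s l^*(s\mid\theta)=1$, since each factor in the product converges to $l^*(s\mid\theta)$ and the exponents sum to one.

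\emph{Fluctuation term.} This is the only probabilistic step and the one needing care. Each coordinate $[\bm\xi_\tau]_i$ depends only on the signal vector $\{\omega_{i'\tau}\}_{i'}$ at time $\tau$, together with deterministic models and normalizers. Hence the $[\bm\xi_\tau]_i$ are independent across $\tau$, have mean zero, and satisfy $|[\bm\xi_\tau]_i|\le 2B_x$. They are not identically distributed, because the models move with $\tau$. So I would not use the i.i.d.\ SLLN. Instead I would apply Kolmogorov's strong law for independent summands with $\sum_\tau \mathrm{Var}([\bm\xi_\tau]_i)/\tau^2\le 4B_x^2\sum_\tau\tau^{-2}<\infty$. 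Alternatively, view $\sum_\tau[\bm\xi_\tau]_i/\tau$ as an $L^2$-bounded martingale and apply Kronecker's lemma. Either route gives $\frac1t\sum_\tau\bm\xi_\tau\to\bm 0$ almost surely coordinatewise, and with finitely many coordinates the exceptional null sets union to a null set. Multiplying by the fixed matrix $\one\pi^{A\top}$ preserves the limit.

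Combining the three terms yields \eqref{eq:cesaro}. The main point needing attention is that non-identical distribution rules out the basic i.i.d.\ strong law, so the uniform bound of Lemma~\ref{lem:unifbound} is exactly what makes the variance condition of Kolmogorov's criterion hold.
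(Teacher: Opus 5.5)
Your proposal is correct and follows essentially the same route as the paper. The paper also splits $A^{t-\tau}=\one\pi^{A\top}+\Lambda_{t-\tau}$ and bounds the mixing remainder by $B_xS_A/t$. It then handles $\frac1t\sum_\tau\bm x_\tau$ with a Ces\`aro argument on the deterministic means and Kolmogorov's strong law on the bounded, independent, non-identically distributed fluctuations.
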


\begin{proof}[Proof of Lemma \ref{lem:cesaro}]
Split the average using $\Lambda_t:=A^{t}-\one\pi^{A\top}$:

\begin{equation}
    \frac1t\sum_{\tau=1}^{t}A^{\,t-\tau}\bm x_\tau
    =\underbrace{\frac1t\sum_{\tau=1}^{t}\Lambda_{t-\tau}\bm x_\tau}_{(\mathrm I)}
    +\one\pi^{A\top}
    \underbrace{\frac1t\sum_{\tau=1}^{t}\bm x_\tau}_{(\mathrm{II})} ,
    \label{eq:split}
\end{equation}

where we used $A^{\,t-\tau}=\Lambda_{t-\tau}+\one\pi^{A\top}$ and pulled the constant matrix $\one\pi^{A\top}$ out of the sum.

\emph{Term $(\mathrm I)$ vanishes.} By Lemma~\ref{lem:unifbound}, $\|\bm x_\tau\|_\infty\le B_x$ a.s.\ for all $\tau\ge 1$, and by Lemma~\ref{lem:mixing}, $\sum_{s\ge0}\|\Lambda_s\|_\infty=S_A<\infty$. By the triangle inequality for $\|\cdot\|_\infty$, sub-multiplicativity of the operator $\infty$-norm (i.e.\ $\|Mv\|_\infty\le\|M\|_\infty\|v\|_\infty$, where $\|M\|_\infty=\max_i\sum_j|M_{ij}|$ is the matrix norm induced by the vector $\infty$-norm), and reindexing $s=t-\tau$:

\begin{equation*}
    \|(\mathrm I)\|_\infty
    =\Bigl\|\tfrac1t\sum_{\tau=1}^{t}\Lambda_{t-\tau}\bm x_\tau\Bigr\|_\infty
    \le\frac1t\sum_{\tau=1}^{t}\|\Lambda_{t-\tau}\bm x_\tau\|_\infty
    \le\frac1t\sum_{\tau=1}^{t}\|\Lambda_{t-\tau}\|_\infty\,\|\bm x_\tau\|_\infty
    \le\frac{B_x}{t}\sum_{s=0}^{t-1}\|\Lambda_{s}\|_\infty
    \le\frac{B_x\,S_A}{t}\xrightarrow[t\to\infty]{}0\qquad\text{a.s.}
\end{equation*}

Hence $(\mathrm I)\to\bm 0$ almost surely.

\emph{Term $(\mathrm{II})$ converges to $\bm d^*$.} Write $\bm x_\tau=\bar{\bm x}_\tau+\bm\xi_\tau$ with $\bar{\bm x}_\tau:=\E[\bm x_\tau]$ and $\bm\xi_\tau:=\bm x_\tau-\bar{\bm x}_\tau$.

\emph{(a) Fluctuations.} Fix a coordinate $i$. By Section~\ref{sec:likelihood} the models $l_{j,\tau-1}(\cdot\mid\theta)$ and the normalizers $Z^\theta_{i,\tau-1}$ are deterministic, so $[\bm x_\tau]_i$ is a fixed measurable function of the single signal $\omega_{i\tau}$ alone. Since the signals $\{\omega_{i\tau}\}_{\tau\ge1}$ are independent across $\tau$, the variables $\{[\bm\xi_\tau]_i\}_{\tau\ge1}$ are independent. Each has mean zero, since $\bar{\bm x}_\tau=\E[\bm x_\tau]$ implies $\E[[\bm\xi_\tau]_i]=0$. From $\|\bm x_\tau\|_\infty\le B_x$ we get $|[\bm\xi_\tau]_i|=|[\bm x_\tau]_i-\E[\bm x_\tau]_i|\le 2B_x$, hence $\operatorname{Var}([\bm\xi_\tau]_i)\le 4B_x^2$. Therefore

\begin{equation*}
    \sum_{\tau\ge1}\frac{\operatorname{Var}([\bm\xi_\tau]_i)}{\tau^2} \le 4B_x^2\sum_{\tau\ge1}\frac1{\tau^2}<\infty,
\end{equation*}

and Kolmogorov's criterion for the strong law of large numbers for independent (not necessarily identically distributed) summands yields $\frac1t\sum_{\tau=1}^t[\bm\xi_\tau]_i\to 0$ a.s. Intersecting the finitely many almost-sure events over $i$, $\frac1t\sum_{\tau=1}^t\bm\xi_\tau\to\bm0$ a.s.

\emph{(b) Means.} By \eqref{eq:meanconv} the deterministic means converge, $\bar{\bm x}_\tau\to\bm d^*$. The Ces\`aro average of a convergent sequence converges to the same limit, so $\frac1t\sum_{\tau=1}^t\bar{\bm x}_\tau\to\bm d^*$.

Adding (a) and (b), $(\mathrm{II})=\frac1t\sum_{\tau=1}^t\bm x_\tau\to\bm d^*$ a.s., then $\one\pi^{A\top}(\mathrm{II})\to\one\pi^{A\top}\bm d^*$. Combined with $(\mathrm I)\to\bm0$ in \eqref{eq:split}, this proves \eqref{eq:cesaro}.
\end{proof}

\begin{proof}[Proof of Theorem~\ref{thm:consistency}]
It suffices to show $\mu_{i,t}(\theta_v)\to0$ a.s.\ for an arbitrary fixed $\theta_v\notin\Theta^*$ and every $i\in N$. Since $\Theta$ is finite, summing over the finitely many such $\theta_v$ then forces the belief mass to concentrate on $\Theta^*$.

Fix $\theta_v\notin\Theta^*$ and any $\theta_w\in\Theta^*$, and consider the log-belief-ratio vector $\bm\phi_t=\bm\phi_t(\theta_v,\theta_w)$ with coordinates $\phi_{i,t}=\log\bigl(\mu_{i,t}(\theta_v)/\mu_{i,t}(\theta_w)\bigr)$. Dividing the unrolled recursion \eqref{eq:unrolled} by $t$,

\begin{equation*}
    \frac1t\,\bm\phi_t =\frac1t\,A^{t}\bm\phi_0 +\frac1t\sum_{\tau=1}^{t}A^{\,t-\tau}\bm x_\tau .
\end{equation*}

For the first term, $A$ row-stochastic gives $\|A^{t}\bm\phi_0\|_\infty\le\|\bm\phi_0\|_\infty<\infty$ (the initial beliefs lie in $\ri\Delta(\Theta)$ by (A2), so $\bm\phi_0$ is finite), then $\|\tfrac1t A^t\bm\phi_0\|_\infty\le\|\bm\phi_0\|_\infty/t\to0$ deterministically. By Lemma~\ref{lem:cesaro} the second term converges a.s.\ to $\one\pi^{A\top}\bm d^*$. Therefore

\begin{equation*}
    \frac1t\,\bm\phi_t\xrightarrow[t\to\infty]{\mathrm{a.s.}}\one\,\pi^{A\top}\bm d^* .
\end{equation*}

By \eqref{eq:dstar} and Definition~\ref{def:confidence},

\begin{equation*}
    \pi^{A\top}\bm d^* =\sum_{i=1}^n\pi^A_i\Bigl(\Dkl{f_i}{l^*(\cdot\mid\theta_w)} -\Dkl{f_i}{l^*(\cdot\mid\theta_v)}\Bigr) =C(\theta_v)-C(\theta_w)=C(\theta_v)-C^*,
\end{equation*}

using $\theta_w\in\Theta^*$, i.e.\ $C(\theta_w)=C^*$. Reading this coordinatewise gives \eqref{eq:phi-rate}: for every $i$,

\begin{equation*}
    \frac{\phi_{i,t}}{t}\xrightarrow[t\to\infty]{\mathrm{a.s.}}C(\theta_v)-C^* .
\end{equation*}

Since $\theta_v\notin\Theta^*$ and $\Theta^*$ is the set of maximizers of $C$, we have $C^*-C(\theta_v)>0$, so the a.s.\ limit $C(\theta_v)-C^*$ is strictly negative. Consequently $\phi_{i,t}\to-\infty$ a.s.\ for every $i$. In detail, for almost every sample path, there exist a $T$ with $\phi_{i,t}/t\le\tfrac12(C(\theta_v)-C^*)<0$ for all $t\ge T$, so $\phi_{i,t}\le\tfrac{t}{2}(C(\theta_v)-C^*)\to-\infty$.

Finally, beliefs remain in $\ri\Delta(\Theta)$, so $\mu_{i,t}(\theta_w)\in(0,1]$ and

\begin{equation*}
    0\le\mu_{i,t}(\theta_v)\le\frac{\mu_{i,t}(\theta_v)}{\mu_{i,t}(\theta_w)} =\exp(\phi_{i,t})\xrightarrow[t\to\infty]{\mathrm{a.s.}}0 .
\end{equation*}

Hence $\mu_{i,t}(\theta_v)\to0$ a.s. As $\theta_v\notin\Theta^*$ and $i\in N$ were arbitrary, and $\Theta\setminus\Theta^*$ is finite, $\sum_{\theta_v\notin\Theta^*}\mu_{i,t}(\theta_v)\to0$ a.s., i.e.\ the belief mass concentrates on $\Theta^*$ for every $i$.
\end{proof}

\begin{proof}[Proof of Corollary \ref{cor:singleton}]
By Theorem~\ref{thm:consistency}, $\mu_{i,t}(\theta)\to0$ a.s.\ for each of the finitely many $\theta\neq\theta^\star$. Since $\sum_\theta\mu_{i,t}(\theta)=1$, $\mu_{i,t}(\theta^\star)\to1$ a.s.
\end{proof}

\begin{proof}[Proof of Lemma~\ref{lem:twins}]
First, $l_{j,t}(\cdot\mid\theta)=l_{j,t}(\cdot\mid\theta')$ for all $j$ and $t$, by induction on $t$: it holds at $t=0$ by assumption, and \eqref{eq:S1} maps identical inputs to identical outputs. For the same reason $Z^{\theta}_{i,t}=Z^{\theta'}_{i,t}$ for all $i,t$.

The derivation of \eqref{eq:onestep}-\eqref{eq:increment} in Appendix~\ref{app:recursion} uses only \eqref{eq:S1}-\eqref{eq:S3} and the row-stochasticity of $A$ and $A'$, not their primitivity. It therefore applies here with $(\theta_v,\theta_w)=(\theta,\theta')$. Both terms of \eqref{eq:increment} vanish identically, so $\bm x_t=\bm 0$ and $\bm\phi_t=A^{t}\bm\phi_0$ for all $t$. Since $A^t$ has nonnegative entries and unit row sums, $|[A^t\bm\phi_0]_i|\le\|\bm\phi_0\|_\infty=c$, which is the first claim.

For the second, $\mu_{i,t}(\theta)=\mu_{i,t}(\theta')e^{\phi_{i,t}}$ and $\mu_{i,t}(\theta)+\mu_{i,t}(\theta')\le1$ give $\mu_{i,t}(\theta')\bigl(1+e^{\phi_{i,t}}\bigr)\le1$. Hence $\mu_{i,t}(\theta')\le(1+e^{\phi_{i,t}})^{-1}\le(1+e^{-c})^{-1}<1$.
\end{proof}

\subsection{Proofs of Section \ref{sec:characterizing_population}}

\begin{proof}[Proof of Proposition \ref{prop:effective_single_agent}]
Expanding $\Dkl{f_i}{l^*(\cdot\mid\theta)}=-H(f_i)-\sum_s f_i(s)\log l^*(s\mid\theta)$, substituting into \eqref{eq:confidence}, and using $\sum_i\pi^A_i f_i(s)=\bar f(s)$,

\begin{equation*}
    C(\theta) =\underbrace{\sum_i\pi^A_iH(f_i)-H(\bar f)}_{=:K,\ \theta\text{-independent}} -\Dkl{\bar f}{l^*(\cdot\mid\theta)} .
\end{equation*}

Differencing across $\theta_v,\theta_w$ cancels $K$ and gives \eqref{eq:confidence-gap-pooled}. Taking the argmax gives \eqref{eq:effective-argmin}. The divergence is finite since $\bar f$ has full support by (A3) and $\pi^A_i>0$, and $l^*(\cdot\mid\theta)\in\ri\Delta(S)$.
\end{proof}

\begin{proof}[Proof of Corollary \ref{cor:concrete}]
By \eqref{eq:effective-argmin}, $\Theta^*=\{\theta^\star\}$ holds if and only if $\Dkl{\bar f}{l^*(\cdot\mid\theta^\star)}<\Dkl{\bar f}{l^*(\cdot\mid\theta)}$ for all $\theta\neq\theta^\star$. Since the entropy $H(\bar f)$ cancels in the difference, this equals \eqref{eq:concrete-llr}. Inserting the identity $\log l^*(s\mid\theta)=\sum_j\pi^{A'}_j\log l_{j,0}(s\mid\theta)-\log\Psi^\theta_\infty$ from \eqref{eq:lstar}, and using $\sum_s\bar f(s)=1$, gives \eqref{eq:concrete-l0}. Full learning of $\theta^\star$ then follows from Corollary~\ref{cor:singleton}.
\end{proof}

\begin{proof}[Proof of Proposition~\ref{prop:geometry} and Remark~\ref{rem:cells}]
Throughout, $|\Theta|=m\ge2$. For $p\in\Delta(S)$ and $\theta\in\Theta$, write

\begin{equation*}
    \Lambda_\theta(p):=\sum_{s}p(s)\log l^*(s\mid\theta)=-H(p)-\Dkl{p}{l^*(\cdot\mid\theta)},
\end{equation*}

where $H(p)=-\sum_s p(s)\log p(s)$, with the convention $0\log0=0$. Both terms are finite because $l^*(\cdot\mid\theta)\in\ri\Delta(S)$ by \eqref{eq:lstar}. For any $\theta,\theta'\in\Theta$,

\begin{equation}
    g_{\theta,\theta'}(p):=\sum_{s}p(s)\,\nu_{\theta,\theta'}(s)=\Lambda_\theta(p)-\Lambda_{\theta'}(p)
    =\Dkl{p}{l^*(\cdot\mid\theta')}-\Dkl{p}{l^*(\cdot\mid\theta)}.
    \label{eq:geom-identity}
\end{equation}

Each $g_{\theta,\theta'}$ is the restriction to $\Delta(S)$ of a linear map on $\R^S$, hence continuous.

\emph{Learning in a cell.} If $\bar f\in\mathcal R(\theta)$, then $\theta$ is the unique minimizer of $\theta'\mapsto\Dkl{\bar f}{l^*(\cdot\mid\theta')}$ over $\Theta$. By \eqref{eq:effective-argmin}, $\Theta^*=\{\theta\}$, and Corollary~\ref{cor:singleton} gives $\mu_{i,t}(\theta)\to1$ a.s.\ for every $i\in N$.

\emph{Proposition, (i).} By \eqref{eq:geom-identity}, $\Dkl{p}{l^*(\cdot\mid\theta)}<\Dkl{p}{l^*(\cdot\mid\theta')}$ holds if and only if $g_{\theta,\theta'}(p)>0$. This gives the linear description of $\mathcal R(\theta)$. For $\theta'\neq\theta$, let $H_{\theta'}:=\{p\in\R^S:\sum_sp(s)\nu_{\theta,\theta'}(s)>0\}$. Each $H_{\theta'}$ is convex, as the strict superlevel set of a linear map, so

\begin{equation}
    \mathcal R(\theta)=\Delta(S)\cap\bigcap_{\theta'\neq\theta}H_{\theta'}
    \label{eq:cell-intersection}
\end{equation}

is convex, as a finite intersection of convex sets. The vectors $\nu_{\theta,\theta'}$ are functions of $l^*$ only, hence of $\{l_{j,0}\}$ and $\pi^{A'}$ only, by \eqref{eq:lstar}.

\emph{Proposition, (ii).} \emph{Disjointness.} If $p\in\mathcal R(\theta)\cap\mathcal R(\theta')$ with $\theta\neq\theta'$, then $g_{\theta,\theta'}(p)>0$ and $g_{\theta',\theta}(p)=-g_{\theta,\theta'}(p)>0$, a contradiction.

\emph{Points outside the walls.} Let $p\in\Delta(S)\setminus\mathcal W$. Then $g_{\theta,\theta'}(p)\neq0$ for all $\theta\neq\theta'$, so by \eqref{eq:geom-identity} the $m$ numbers $\Lambda_\theta(p)$ are pairwise distinct. They have a unique maximizer $\hat\theta$, which satisfies $g_{\hat\theta,\theta'}(p)>0$ for all $\theta'\neq\hat\theta$, so $p\in\mathcal R(\hat\theta)$. By disjointness, $p$ lies in no other cell.

\emph{The walls.} Assume $l^*(\cdot\mid\theta)\neq l^*(\cdot\mid\theta')$ for all $\theta\neq\theta'$, and fix such a pair, with $\nu:=\nu_{\theta,\theta'}$. Then $\nu$ is not constant on $S$: if $\nu(s)=c$ for all $s$, then $l^*(s\mid\theta)=e^{c}\,l^*(s\mid\theta')$ for all $s$, summing over $s$ gives $e^{c}=1$, so $c=0$ and the two interpretations coincide, a contradiction. Let $\mathcal H:=\{p\in\R^S:\sum_sp(s)=1\}$ be the affine hull of $\Delta(S)$, an affine space of dimension $|S|-1$ whose direction space is $\mathcal H_0:=\{v\in\R^S:\sum_sv(s)=0\}$. Choose $s_1,s_2\in S$ with $\nu(s_1)\neq\nu(s_2)$. The vector $v:=e_{s_1}-e_{s_2}$ lies in $\mathcal H_0$ and satisfies $\sum_sv(s)\nu(s)=\nu(s_1)-\nu(s_2)\neq0$. Hence the map $p\mapsto\sum_sp(s)\nu(s)$ is affine and nonconstant on $\mathcal H$, and its zero set in $\mathcal H$ is an affine hyperplane of $\mathcal H$, of dimension $|S|-2$. Such a set has zero $(|S|-1)$-dimensional Lebesgue measure in $\mathcal H$. The wall between $\theta$ and $\theta'$ is its intersection with $\Delta(S)$, a piece of a hyperplane (possibly empty), and also has zero measure. Since $\mathcal W$ is the finite union of these pieces over pairs $\theta\neq\theta'$, it has zero $(|S|-1)$-dimensional Lebesgue measure.

\emph{Proposition, (iii).} By (i), the cells depend only on $l^*$. By Proposition~\ref{prop:effective_single_agent}, the data $\{f_i\}$ and the weights $\pi^A$ affect the set of learned hypotheses only through $\bar f$.

\emph{Remark: relative openness and form.} Each $H_{\theta'}$ in \eqref{eq:cell-intersection} is open in $\R^S$, as the preimage of $(0,\infty)$ under a continuous map. Hence $\mathcal R(\theta)$ is the trace on $\Delta(S)$ of an open subset of $\R^S$, so it is relatively open in $\Delta(S)$. If $\nu_{\theta,\theta'}\neq0$ for every $\theta'\neq\theta$, each $H_{\theta'}$ is an open half-space, and \eqref{eq:cell-intersection} expresses $\mathcal R(\theta)$ as the intersection of $\Delta(S)$ with $m-1$ open half-spaces. If $\nu_{\theta,\theta'}=0$ for some $\theta'\neq\theta$, then $H_{\theta'}=\emptyset$ and $\mathcal R(\theta)=\emptyset$. In either case, the stated alternative holds.

\emph{Remark: shared interpretations give empty cells.} If $l^*(\cdot\mid\theta)=l^*(\cdot\mid\theta')$ for some $\theta\neq\theta'$, then $\nu_{\theta,\theta'}=0$, so $\mathcal R(\theta)=\emptyset$ by the previous step, and symmetrically $\mathcal R(\theta')=\emptyset$.

\emph{Remark: nonempty cells are not closed.} First, $\mathcal R(\theta)\neq\Delta(S)$. Indeed, fix any $\theta'\neq\theta$, which exists since $m\ge2$. For each vertex $e_s$ of $\Delta(S)$, $g_{\theta,\theta'}(e_s)=\log\bigl(l^*(s\mid\theta)/l^*(s\mid\theta')\bigr)$. If every vertex belonged to $\mathcal R(\theta)$, we would have $l^*(s\mid\theta)>l^*(s\mid\theta')$ for all $s\in S$, and summing over $s$ would give $1>1$, a contradiction. Next, $\Delta(S)$ is convex, hence connected, so its only subsets that are both relatively open and relatively closed are $\emptyset$ and $\Delta(S)$. Since $\mathcal R(\theta)$ is relatively open and different from $\Delta(S)$, it is not relatively closed in $\Delta(S)$ unless it is empty. Because $\Delta(S)$ is closed in $\R^S$, a subset of $\Delta(S)$ is relatively closed if and only if it is closed in $\R^S$. Hence every nonempty cell is not closed. A polytope, the convex hull of finitely many points, is compact and therefore closed, so a nonempty cell is not a polytope.

\emph{Remark: cells may contain points on other walls.} We give an example. Any family of full-support distributions $\{q_\theta\}_{\theta\in\Theta}$ can arise as the consensus interpretation: if $l_{j,0}(\cdot\mid\theta)=q_\theta$ for every $j\in N$, then \eqref{eq:lstar} gives $l^*(\cdot\mid\theta)=q_\theta$, since the weights $\pi^{A'}_j$ sum to one. Take $S=\{s_1,s_2\}$, $\Theta=\{\theta_1,\theta_2,\theta_3\}$, and

\begin{equation*}
    l^*(\cdot\mid\theta_1)=(0.5,0.5),\qquad
    l^*(\cdot\mid\theta_2)=(0.9,0.1),\qquad
    l^*(\cdot\mid\theta_3)=(0.1,0.9),
\end{equation*}

and let $p=(0.5,0.5)$. Then $\Lambda_{\theta_1}(p)=\log0.5$ and $\Lambda_{\theta_2}(p)=\Lambda_{\theta_3}(p)=\tfrac12\log0.09=\log0.3$. Hence $g_{\theta_1,\theta_2}(p)=g_{\theta_1,\theta_3}(p)=\log(5/3)>0$, so $p\in\mathcal R(\theta_1)$, while $g_{\theta_2,\theta_3}(p)=0$, so $p$ lies on the wall between $\theta_2$ and $\theta_3$, both of which fit $p$ worse than $\theta_1$.
\end{proof}

\subsection{Proofs of Section \ref{sec:comparison}}

\subsubsection{Proofs of Subsection \ref{subsec:misspecified_benchmark}}

\begin{proof}[Proof of Remark~\ref{rem:fixed-profile}]\label{app:fixed-profile}
With $A'=I$ we have $l_{i,t}=l_{i,0}$ and $Z^\theta_{i,t-1}=1$, so \eqref{eq:increment} reduces to $[\bm x_t]_i=\log\bigl(l_{i,0}(\omega_{it}\mid\theta_v)/l_{i,0}(\omega_{it}\mid\theta_w)\bigr)$. These increments are i.i.d.\ in $t$ with constant mean $[\bm d^{NS}]_i:=\Dkl{f_i}{l_{i,0}(\cdot\mid\theta_w)}-\Dkl{f_i}{l_{i,0}(\cdot\mid\theta_v)}$. Lemma~\ref{lem:unifbound} holds unchanged, since its proof only uses that $A'^{\,t}$ is row-stochastic. Lemma~\ref{lem:mixing} involves only $A$. In Lemma~\ref{lem:cesaro}, step (b) is trivial because the means are constant. Hence $\frac1t\bm\phi_t\to\one\pi^{A\top}\bm d^{NS}$ a.s., with $\pi^{A\top}\bm d^{NS}=C^{NS}(\theta_v)-C^{NS}(\theta_w)$, and the proof of Theorem~\ref{thm:consistency} applies with $C^{NS}$ in place of $C$.
\end{proof}

\begin{proof}[Proof of Theorem \ref{thm:reversal}]
\emph{Construction.} Let $n=2$, $S=\{a,b\}$, $\Theta=\{\theta_1,\theta_2\}$, and

\begin{equation}
    A=\begin{pmatrix}3/4&1/4\\[2pt]3/4&1/4\end{pmatrix},
    \qquad
    A'=\begin{pmatrix}1/4&3/4\\[2pt]1/4&3/4\end{pmatrix}.
\end{equation}

Both are row-stochastic with strictly positive entries, hence primitive, so (A1) holds (with $\rho_A=\rho_{A'}=0$, since each matrix has rank one). Since all rows of $A$ (resp.\ $A'$) coincide, each stationary distribution equals the common row: $\pi^A=(3/4,1/4)$ and $\pi^{A'}=(1/4,3/4)$. Let both agents share the truth $f_1=f_2=f:=(0.9,0.1)\in\ri\Delta(S)$, and let the initial profiles be the swapped full-support models

\begin{equation}
    \begin{aligned}
    l_{1,0}(\cdot\mid\theta_1)&=(0.9,0.1), & l_{1,0}(\cdot\mid\theta_2)&=(0.1,0.9),\\
    l_{2,0}(\cdot\mid\theta_1)&=(0.1,0.9), & l_{2,0}(\cdot\mid\theta_2)&=(0.9,0.1),
    \end{aligned}
\end{equation}

with any $\mu_{i,0}\in\ri\Delta(\Theta)$, so (A2)-(A3) hold. Thus agent $1$'s interpretation of $\theta_1$ (and agent $2$'s interpretation of $\theta_2$) matches the truth exactly, while the other two interpretations are maximally swapped.

\emph{Benchmark ranking.} Let $\kappa:=\Dkl{(0.9,0.1)}{(0.1,0.9)}=0.8\log 9=1.6\log 3$. The four benchmark divergences follow directly from the swapped construction: each agent's own model fits the truth perfectly, $\Dkl{f}{l_{1,0}(\cdot\mid\theta_1)}=\Dkl{f}{l_{2,0}(\cdot\mid\theta_2)}=0$, while the cross terms cost $\Dkl{f}{l_{1,0}(\cdot\mid\theta_2)}=\Dkl{f}{l_{2,0}(\cdot\mid\theta_1)}=\kappa$. By \eqref{eq:CNS},

\begin{equation}
    C^{NS}(\theta_1)=-\tfrac14\kappa,\quad
    C^{NS}(\theta_2)=-\tfrac34\kappa,\quad
    C^{NS}(\theta_1)-C^{NS}(\theta_2)=\tfrac12\kappa=0.8\log3>0 .
    \label{eq:CNS-ineq}
\end{equation}

The benchmark favors $\theta_1$: the agent whose interpretation of $\theta_1$ is correct carries three quarters of the belief-network influence.

\emph{Pooled ranking.} By \eqref{eq:lstar} with weights $\pi^{A'}=(1/4,3/4)$, the odds of $l^*(\cdot\mid\theta_1)$ are $l^*(a\mid\theta_1):l^*(b\mid\theta_1)=0.9^{1/4}\,0.1^{3/4}:0.1^{1/4}\,0.9^{3/4}=9^{1/4-3/4}=9^{-1/2}=1:3$, so after normalization $l^*(\cdot\mid\theta_1)=(1/4,3/4)$ and, by the symmetry $a\leftrightarrow b,\ \theta_1\leftrightarrow\theta_2$, $l^*(\cdot\mid\theta_2)=(3/4,1/4)$. Since $f_1=f_2=f$, the pooled truth is $\bar f=f=(0.9,0.1)$, whatever $\pi^A$. By Proposition~\ref{prop:effective_single_agent} in log-likelihood-ratio form \eqref{eq:concrete-llr},

\begin{equation}
    C(\theta_2)-C(\theta_1)
    =\sum_s\bar f(s)\log\frac{l^*(s\mid\theta_2)}{l^*(s\mid\theta_1)}
    =0.9\log\frac{3/4}{1/4}+0.1\log\frac{1/4}{3/4}
    =0.8\log3>0 .
      \label{eq:C-ineq}
\end{equation}

(Numerically, $\Dkl{\bar f}{l^*(\cdot\mid\theta_1)}\approx0.9514$ and $\Dkl{\bar f}{l^*(\cdot\mid\theta_2)}\approx0.0725$, difference $0.8\log3\approx0.8789$.)

Inequalities \eqref{eq:CNS-ineq} and \eqref{eq:C-ineq} give \eqref{eq:reversal}. With $|\Theta|=2$ they give $\Theta^{NS}=\{\theta_1\}$ and $\Theta^*=\{\theta_2\}$, and Corollary~\ref{cor:singleton} yields the stated a.s.\ concentration.

\emph{Characterization \eqref{eq:reversal-box}.} Fix any configuration satisfying Assumption~\ref{ass:standing}, and let $\mathfrak D(\theta)$ be the matrix with entries $\mathfrak D_{ij}(\theta)$. By \eqref{eq:CNS}, $C^{NS}(\theta_1)-C^{NS}(\theta_2)=\sum_i\pi^A_i\bigl(\mathfrak D_{ii}(\theta_2)-\mathfrak D_{ii}(\theta_1)\bigr)$, which is the left side of \eqref{eq:reversal-box}. 

Inserting $\log l^*(s\mid\theta)=\sum_j\pi^{A'}_j\log l_{j,0}(s\mid\theta)-\log\Psi^\theta_\infty$ from \eqref{eq:lstar} into the KL divergence, and using $\sum_j\pi^{A'}_j=1$, gives:

\begin{equation*}
    \Dkl{f_i}{l^*(\cdot\mid\theta)}=\sum_j\pi^{A'}_j\mathfrak D_{ij}(\theta)+\log\Psi^\theta_\infty.
\end{equation*}

Hence $C(\theta)=-\pi^{A\top}\mathfrak D(\theta)\pi^{A'}-\log\Psi^\theta_\infty$, and $C(\theta_1)<C(\theta_2)$ is equivalent to $\pi^{A\top}\bigl(\mathfrak D(\theta_2)-\mathfrak D(\theta_1)\bigr)\pi^{A'}+\log\Psi^{\theta_2}_\infty-\log\Psi^{\theta_1}_\infty<0$, the right side of \eqref{eq:reversal-box}.
\end{proof}

\subsubsection{Proofs of Subsection \ref{subsec:wellspecified}}

\begin{proof}[Proof of Proposition \ref{prop:ws-benchmark}]
By \eqref{eq:wellspec}, $\Dkl{f_i}{l_{i,0}(\cdot\mid\theta^\star)}=\Dkl{f_i}{f_i}=0$, so $C^{NS}(\theta^\star)=0$ in \eqref{eq:CNS}. Since every KL term is nonnegative and $\pi^A_i>0$, $C^{NS}(\theta)\le0$ for all $\theta$, so $\theta^\star$ is a maximizer. For $\theta\neq\theta^\star$, \eqref{eq:collid} gives some $i$ with $\Dkl{f_i}{l_{i,0}(\cdot\mid\theta)}>0$, whence $C^{NS}(\theta)<0$. Thus $\Theta^{NS}=\{\theta^\star\}$, and Remark~\ref{rem:fixed-profile} yields the a.s.\ limit.
\end{proof}

\begin{proof}[Proof of Proposition \ref{prop:ws-survival}]
By \eqref{eq:effective-argmin}, $\theta^\star\in\Theta^*$ if and only if $\Dkl{\bar f}{l^*(\cdot\mid\theta^\star)}\le\Dkl{\bar f}{l^*(\cdot\mid\theta)}$ for every $\theta$, and $\Theta^*=\{\theta^\star\}$ if and only if this is strict for every $\theta\neq\theta^\star$. The left-hand side is $\varepsilon^\star$ by \eqref{eq:eps-star}, with $l^*(\cdot\mid\theta^\star)$ given by \eqref{eq:lstar-truth} under \eqref{eq:wellspec}.
\end{proof}

\begin{proof}[Proof of Proposition \ref{prop:ws-homog}]
Homogeneity and \eqref{eq:wellspec} give $l_{j,0}(\cdot\mid\theta^\star)=f$ for every $j$, so the pool \eqref{eq:lstar-truth} equals $f$, also $\bar f=f$. Hence $\Dkl{\bar f}{l^*(\cdot\mid\theta^\star)}=0$ attains the global minimum in \eqref{eq:effective-argmin}, so $\theta^\star\in\Theta^*$, with a tie only if $\Dkl{\bar f}{l^*(\cdot\mid\theta)}=0$, i.e.\ $l^*(\cdot\mid\theta)=f$, for some $\theta\neq\theta^\star$.

This tie happens only in exceptional cases. Fix $f$, $\pi^{A'}$, a rival $\theta\neq\theta^\star$, and a reference signal $s_0\in S$. By \eqref{eq:lstar}, for every $s$,

\begin{equation*}
    \log\frac{l^*(s\mid\theta)}{l^*(s_0\mid\theta)} =\sum_{j=1}^n\pi^{A'}_j\log\frac{l_{j,0}(s\mid\theta)}{l_{j,0}(s_0\mid\theta)},
\end{equation*}

since the normalizer $\Psi^\theta_\infty$ cancels. So $l^*(\cdot\mid\theta)=f$ requires

\begin{equation*}
    \log\frac{l_{1,0}(s\mid\theta)}{l_{1,0}(s_0\mid\theta)} =\frac{1}{\pi^{A'}_1}\Bigl(\log\frac{f(s)}{f(s_0)}-\sum_{j\ge2}\pi^{A'}_j\log\frac{l_{j,0}(s\mid\theta)}{l_{j,0}(s_0\mid\theta)}\Bigr) \qquad\text{for all }s,
\end{equation*}

which is well defined because $\pi^{A'}_1>0$. Once the models of agents $j\ge2$ are fixed, this determines $l_{1,0}(\cdot\mid\theta)$ uniquely. The tie therefore occurs for a single choice of $l_{1,0}(\cdot\mid\theta)$, a set of Lebesgue measure zero in $\ri\Delta(S)$ (as $|S|\ge2$). By Fubini's theorem, the set of profiles producing a tie has measure zero, and so does its finite union over $\theta\neq\theta^\star$.
\end{proof}

\begin{proof}[Proof of Proposition \ref{prop:ws-norescue}]
We prove the contrapositive. Suppose belief-only learning does not give $\mu^{NS}_{i,t}(\theta^\star)\to1$ a.s.\ for every $i$. By Proposition~\ref{prop:ws-benchmark}, condition \eqref{eq:collid} must then fail: there is $\theta\neq\theta^\star$ with $l_{i,0}(\cdot\mid\theta)=f_i$ for every $i\in N$. By \eqref{eq:wellspec}, $l_{i,0}(\cdot\mid\theta)=f_i=l_{i,0}(\cdot\mid\theta^\star)$ for every $i$, so $\theta$ and $\theta^\star$ are twins in the sense of Lemma~\ref{lem:twins}. Applying that lemma to the sharing dynamics with $(A,A')$ gives $\mu_{i,t}(\theta^\star)\le(1+e^{-c})^{-1}<1$ for every $i$, every $t$ and every realization of the signals. Hence $\mu_{i,t}(\theta^\star)$ converges to $1$ for no $i$ and on no sample path, so sharing does not learn $\theta^\star$ either.
\end{proof}

\begin{proof}[Proof of Proposition \ref{prop:ws-hurt}]
Take $n=2$, $S=\{a,b\}$, $\Theta=\{\theta^\star,\theta_1\}$, and \emph{equal} influence in both channels, $\pi^A=\pi^{A'}=(\tfrac12,\tfrac12)$ (e.g.\ $A=A'$ with identical rows). Let the truths be heterogeneous,

\begin{equation*}
    f_1=(0.9,0.1),\qquad f_2=(0.6,0.4),
\end{equation*}

and set $l_{i,0}(\cdot\mid\theta^\star)=f_i$ (so \eqref{eq:wellspec} holds) and $l_{1,0}(\cdot\mid\theta_1)=l_{2,0}(\cdot\mid\theta_1)=(0.75,0.25)$. Then, all models have full support and $\mu_{i,0}\in\ri\Delta(\Theta)$ is arbitrary, so (A1)-(A3) hold.

\emph{Effective objects.} The pooled truth is $\bar f=\tfrac12 f_1+\tfrac12 f_2=(0.75,0.25)$. The consensus model at $\theta^\star$ is the equal-weight geometric pool of $f_1,f_2$,

\begin{equation*}
    l^*(\cdot\mid\theta^\star) \propto\bigl(\sqrt{0.9\cdot0.6},\,\sqrt{0.1\cdot0.4}\bigr) =(\sqrt{0.54},\,0.2)\ \Longrightarrow\ l^*(\cdot\mid\theta^\star)\approx(0.7861,0.2139),
\end{equation*}

while $l^*(\cdot\mid\theta_1)=(0.75,0.25)=\bar f$ (identical pooled inputs).

\emph{Sharing learns the falsehood.} By \eqref{eq:effective-argmin},

\begin{equation*}
    \Dkl{\bar f}{l^*(\cdot\mid\theta_1)}=\Dkl{\bar f}{\bar f}=0
    \;<\;
    \Dkl{\bar f}{l^*(\cdot\mid\theta^\star)}\approx0.0037,
\end{equation*}

so $\Theta^*=\{\theta_1\}$ and Corollary~\ref{cor:singleton} gives $\mu_{i,t}(\theta_1)\to1$ a.s.

\emph{Belief-only sharing learns the truth.} By \eqref{eq:CNS}, $C^{NS}(\theta^\star)=0$ (from \eqref{eq:wellspec}), while

\begin{equation*}
    C^{NS}(\theta_1)
    =-\tfrac12\Dkl{f_1}{(0.75,0.25)}-\tfrac12\Dkl{f_2}{(0.75,0.25)}
    \approx-\tfrac12(0.0725+0.0541)\approx-0.0633<0,
\end{equation*}

so $\Theta^{NS}=\{\theta^\star\}$ and Proposition~\ref{prop:ws-benchmark} gives $\mu_{i,t}(\theta^\star)\to1$ a.s.

Both strict inequalities are exact, not only numerical: $(0.75,0.25)\neq f_i$ gives $C^{NS}(\theta_1)<0$ by Gibbs' inequality; and $l^*(\cdot\mid\theta^\star)=\bar f$ would require $\sqrt{0.54}=0.75(\sqrt{0.54}+0.2)$, i.e.\ $0.54=0.36$, so $\Dkl{\bar f}{l^*(\cdot\mid\theta^\star)}>0$.

\emph{Openness.} Regard configurations with $n=2$, $|S|=2$, $\Theta=\{\theta^\star,\theta_1\}$ as points $(f_1,f_2,l_{1,0}(\cdot\mid\theta_1),l_{2,0}(\cdot\mid\theta_1),A,A')$, with $l_{i,0}(\cdot\mid\theta^\star)=f_i$ imposed by \eqref{eq:wellspec}, where the distributions lie in $\ri\Delta(S)$ and $A,A'$ are $2\times2$ row-stochastic matrices with positive entries (hence satisfy (A1)). For such matrices $\pi^A=(a_{21},a_{12})/(a_{12}+a_{21})$, which is continuous in $A$, and likewise for $A'$. The quantities $C^{NS}(\theta^\star)-C^{NS}(\theta_1)$ and $C(\theta^\star)-C(\theta_1)$ are continuous in the configuration, since they are built from KL divergences, $\bar f$ and $l^*$ via \eqref{eq:lstar}. The set of configurations with $f_1\neq f_2$, $C^{NS}(\theta^\star)>C^{NS}(\theta_1)$ and $C(\theta^\star)<C(\theta_1)$ is therefore open. It contains the configuration above, so it is nonempty.
\end{proof}

\begin{proof}[Proof of Proposition \ref{thm:d-closed}]
Using $\Dkl{f_i}{l_{j,t}(\cdot\mid\theta)}=-H(f_i)-\sum_s f_i(s)\log l_{j,t}(s\mid\theta)$, the entropy term cancels in \eqref{eq:Ddef}, leaving $[D_t]_{ij}=-\sum_s f_i(s)\bigl(\log l_{j,t}(s\mid\theta_w)-\log l_{j,t}(s\mid\theta_v)\bigr)$. By Proposition~\ref{prop:l-closed}, $\log l_{j,t}(s\mid\theta)=\sum_k[A'^{\,t}]_{jk}\log l_{k,0}(s\mid\theta)-\log\Psi^\theta_{j,t}$, so

\begin{equation*}
    \log l_{j,t}(s\mid\theta_w)-\log l_{j,t}(s\mid\theta_v)
    =\sum_k[A'^{\,t}]_{jk}\bigl(\log l_{k,0}(s\mid\theta_w)-\log l_{k,0}(s\mid\theta_v)\bigr)
    -\bigl(\log\Psi^{\theta_w}_{j,t}-\log\Psi^{\theta_v}_{j,t}\bigr).
\end{equation*}

Multiplying by $-f_i(s)$, summing over $s$, and using $\sum_k[A'^{\,t}]_{jk}=1$ identifies the first term with $\sum_k[A'^{\,t}]_{jk}[D_0]_{ik}$ and the second with the rank-one correction, giving \eqref{eq:d-closed}.
\end{proof}

\begin{proof}[Proof of Corollary \ref{cor:d-conv}]
Fix the pair $(\theta_v,\theta_w)$ and $\hat\rho\in(\rho_{A'},1)$; if $A'$ is diagonalizable, $\hat\rho=\rho_{A'}$ is also allowed. Apply Steps~1-3 of the proof of Theorem~\ref{thm:l-consensus} to $\theta_v$ and to $\theta_w$, and let $L$ and $C_\Psi$ be the larger of the two values of $L$ and $|S|C_2$ obtained there. Then, for $\theta\in\{\theta_v,\theta_w\}$, all $j\in N$ and all $t\ge0$,
\begin{equation*}
    \bigl|\Psi^\theta_{j,t}-\Psi^\theta_\infty\bigr|\le C_\Psi\,\hat\rho^{\,t},\qquad \Psi^\theta_{j,t}\ge|S|e^{-L},\qquad \Psi^\theta_\infty\ge|S|e^{-L}.
\end{equation*}
Since $\log$ is $(e^{L}/|S|)$-Lipschitz on $[|S|e^{-L},\infty)$,
\begin{equation}
    \bigl|\log\Psi^\theta_{j,t}-\log\Psi^\theta_\infty\bigr|\le C_\ell\,\hat\rho^{\,t},\qquad C_\ell:=e^{L}C_\Psi/|S| .
    \label{eq:logpsi-rate}
\end{equation}

\emph{Limit of $D_t$.} Inserting $\log l^*(s\mid\theta)=\sum_k\pi^{A'}_k\log l_{k,0}(s\mid\theta)-\log\Psi^\theta_\infty$ from \eqref{eq:lstar} into the KL divergence and using $\sum_k\pi^{A'}_k=1$ gives $\Dkl{f_i}{l^*(\cdot\mid\theta)}=\sum_k\pi^{A'}_k\Dkl{f_i}{l_{k,0}(\cdot\mid\theta)}+\log\Psi^\theta_\infty$. Hence
\begin{equation*}
    d^*_i=\sum_k\pi^{A'}_k[D_0]_{ik}+\log\Psi^{\theta_w}_\infty-\log\Psi^{\theta_v}_\infty .
\end{equation*}
Subtracting this from the $(i,j)$ entry of \eqref{eq:d-closed},
\begin{equation*}
    [D_t]_{ij}-d^*_i=\sum_k\bigl([A'^{\,t}]_{jk}-\pi^{A'}_k\bigr)[D_0]_{ik}
    +\bigl(\log\Psi^{\theta_w}_{j,t}-\log\Psi^{\theta_w}_\infty\bigr)-\bigl(\log\Psi^{\theta_v}_{j,t}-\log\Psi^{\theta_v}_\infty\bigr).
\end{equation*}
By Lemma~\ref{lem:pf-rate} and \eqref{eq:logpsi-rate},
\begin{equation*}
    \max_{i,j}\bigl|[D_t]_{ij}-d^*_i\bigr|\le\max_{i,k}\bigl|[D_0]_{ik}\bigr|\,\bigl\|A'^{\,t}-\one\pi^{A'\top}\bigr\|_\infty+2C_\ell\hat\rho^{\,t}
    \le\Bigl(C_{A',\hat\rho}\max_{i,k}\bigl|[D_0]_{ik}\bigr|+2C_\ell\Bigr)\hat\rho^{\,t}.
\end{equation*}
In particular $D_t\to\bm d^*\one^{\top}$.

\emph{Drift.} Since $\sum_ja'_{ij}=1$, $[\bm e_t]_i-d^*_i=\sum_ja'_{ij}\bigl([D_t]_{ij}-d^*_i\bigr)$ is a convex combination. Hence $\|\bm e_t-\bm d^*\|_\infty\le\max_{i,j}|[D_t]_{ij}-d^*_i|$, and the previous bound applies.

\emph{Normalizer gap.} Identity \eqref{eq:Z-Psi} at time $t+1$ reads $Z^\theta_{i,t}=\Psi^\theta_{i,t+1}/\prod_j(\Psi^\theta_{j,t})^{a'_{ij}}$. Using $\sum_ja'_{ij}=1$,
\begin{equation*}
    \log Z^\theta_{i,t}=\bigl(\log\Psi^\theta_{i,t+1}-\log\Psi^\theta_\infty\bigr)-\sum_ja'_{ij}\bigl(\log\Psi^\theta_{j,t}-\log\Psi^\theta_\infty\bigr),
\end{equation*}
so by \eqref{eq:logpsi-rate} $|\log Z^\theta_{i,t}|\le C_\ell\hat\rho^{\,t+1}+C_\ell\hat\rho^{\,t}\le2C_\ell\hat\rho^{\,t}$. Therefore $\|\bm\delta_t\|_\infty\le4C_\ell\hat\rho^{\,t}$.

Since $\hat\rho$ was arbitrary, with $\hat\rho=\rho_{A'}$ admissible when $A'$ is diagonalizable, all three claims hold in the sense of Section~\ref{subsec:notation}.
\end{proof}

\subsubsection{Proofs for Subsection~\ref{subsec:coherent_sharing_good_or_bad}}

\begin{lemma}[Properties of $\Delta$]\label{lem:Delta}
Fix $\theta\neq\theta^\star$ and $j\in N$.

\begin{enumerate}
  \item[(i)] The map $p\mapsto\Delta_j(\theta;p)$ is finite and linear on $\Delta(S)$. In particular, $\Delta_j\bigl(\theta;\sum_m w_mp_m\bigr)=\sum_m w_m\,\Delta_j(\theta;p_m)$ for any weights $w\in\Delta(N)$.
  
  \item[(ii)] $\Delta_j(\theta;p)=\Dkl{p}{l_{j,0}(\cdot\mid\theta)}-\Dkl{p}{l_{j,0}(\cdot\mid\theta^\star)}$ for every $p\in\Delta(S)$.
  
  \item[(iii)] The belief-only and sharing gaps $G^{NS}_\theta:=C^{NS}(\theta^\star)-C^{NS}(\theta)$ and $G_\theta:=C(\theta^\star)-C(\theta)$ satisfy
  
  \begin{align}
    G^{NS}_\theta&=\sum_k\pi^A_k\,\Delta_k(\theta;f_k), \label{eq:GNS}\\
    G_\theta&=\sum_k\sum_m\pi^{A'}_k\,\pi^A_m\,\Delta_k(\theta;f_m)+\log\Psi^\theta_\infty-\log\Psi^{\theta^\star}_\infty . \label{eq:Gshare}
  \end{align}
\end{enumerate}
\end{lemma}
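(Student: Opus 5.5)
The three parts are direct computations from the definitions, and I will take them in order, using only Assumption~\ref{ass:standing} and the closed form \eqref{eq:lstar}.

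For (i), full support of $l_{j,0}(\cdot\mid\theta)$ and $l_{j,0}(\cdot\mid\theta^\star)$ (A2) makes every coefficient $\log\bigl(l_{j,0}(s\mid\theta^\star)/l_{j,0}(s\mid\theta)\bigr)$ finite. Hence $\Delta_j(\theta;p)=\sum_s p(s)c_j(s)$ is a finite linear functional of $p$, and the convex-combination identity follows by exchanging the two finite sums.

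For (ii), I will expand both divergences with the entropy identity $\Dkl{p}{q}=-H(p)-\sum_s p(s)\log q(s)$. This is valid for any $p\in\Delta(S)$ with the convention $0\log0=0$, since each $q$ has full support. The $H(p)$ terms cancel, and the remainder is exactly the defining sum of $\Delta_j(\theta;p)$.

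For \eqref{eq:GNS}, I will substitute Definition~\ref{def:CNS}, which gives $G^{NS}_\theta=\sum_k\pi^A_k\bigl(\Dkl{f_k}{l_{k,0}(\cdot\mid\theta)}-\Dkl{f_k}{l_{k,0}(\cdot\mid\theta^\star)}\bigr)$, and then apply (ii) with $p=f_k$ and $j=k$.

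For \eqref{eq:Gshare}, the starting point is the identity already derived in the proof of Theorem~\ref{thm:reversal}:
\[
\Dkl{f_m}{l^*(\cdot\mid\theta)}=\sum_k\pi^{A'}_k\Dkl{f_m}{l_{k,0}(\cdot\mid\theta)}+\log\Psi^\theta_\infty .
\]
It comes from inserting $\log l^*(s\mid\theta)=\sum_k\pi^{A'}_k\log l_{k,0}(s\mid\theta)-\log\Psi^\theta_\infty$ and using $\sum_k\pi^{A'}_k=1$. I will write this identity for both $\theta$ and $\theta^\star$, multiply by $\pi^A_m$, sum over $m$, and take the difference $C(\theta^\star)-C(\theta)$. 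The result is
\[
\sum_{m,k}\pi^A_m\pi^{A'}_k\bigl(\Dkl{f_m}{l_{k,0}(\cdot\mid\theta)}-\Dkl{f_m}{l_{k,0}(\cdot\mid\theta^\star)}\bigr)+\log\Psi^\theta_\infty-\log\Psi^{\theta^\star}_\infty .
\]
Applying (ii) to each bracket then gives \eqref{eq:Gshare}.

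The only step that needs care is the sign bookkeeping of the normalizer terms. $C(\theta)$ contains $-\log\Psi^\theta_\infty$, so $G_\theta$ picks up $+\log\Psi^\theta_\infty-\log\Psi^{\theta^\star}_\infty$, which matches \eqref{eq:Gshare}. I will also check that $\sum_m\pi^A_m=1$ is used wherever a constant has to be pulled out of a weighted sum. Alternatively, the double sum can be obtained by linearity (i) applied to $\bar f$, via Proposition~\ref{prop:effective_single_agent}.
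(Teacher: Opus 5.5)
Your proof is correct. For (i), (ii) and \eqref{eq:GNS} it is the paper's argument. For \eqref{eq:Gshare} only the order of operations differs. The paper first collapses to the pooled data via Proposition~\ref{prop:effective_single_agent}, obtaining $G_\theta=\sum_k\pi^{A'}_k\Delta_k(\theta;\bar f)+\log\Psi^\theta_\infty-\log\Psi^{\theta^\star}_\infty$, and then distributes $\bar f=\sum_m\pi^A_mf_m$ using the linearity in (i); this is the alternative you mention at the end. You instead split each $\Dkl{f_m}{l^*(\cdot\mid\theta)}$ agent by agent, as in the proof of Theorem~\ref{thm:reversal}, and apply (ii) termwise. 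This is equally short and makes the bilinear form $C(\theta)=-\pi^{A\top}\mathfrak D(\theta)\pi^{A'}-\log\Psi^\theta_\infty$ explicit.
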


\begin{proof}[Proof of Lemma~\ref{lem:Delta}]
(i) $\Delta_j(\theta;p)=\sum_sp(s)v_j(s)$ with the fixed vector $v_j(s):=\log\bigl(l_{j,0}(s\mid\theta^\star)/l_{j,0}(s\mid\theta)\bigr)$, which is finite by (A2).

(ii) Write both divergences with the entropy identity of Appendix~\ref{app:proofs}. The entropy $H(p)$ cancels in the difference, leaving $\sum_sp(s)v_j(s)$.

(iii) By (ii) with $p=f_k$, $\Dkl{f_k}{l_{k,0}(\cdot\mid\theta)}-\Dkl{f_k}{l_{k,0}(\cdot\mid\theta^\star)}=\Delta_k(\theta;f_k)$, and \eqref{eq:CNS} gives \eqref{eq:GNS}. For the sharing gap, \eqref{eq:confidence-gap-pooled} and the same entropy cancellation give $G_\theta=\sum_s\bar f(s)\log\bigl(l^*(s\mid\theta^\star)/l^*(s\mid\theta)\bigr)$. Substitute $\log l^*(s\mid\theta)=\sum_k\pi^{A'}_k\log l_{k,0}(s\mid\theta)-\log\Psi^\theta_\infty$ from \eqref{eq:lstar} and use $\sum_s\bar f(s)=1$. This gives $G_\theta=\sum_k\pi^{A'}_k\Delta_k(\theta;\bar f)+\log\Psi^\theta_\infty-\log\Psi^{\theta^\star}_\infty$. Finally, $\bar f=\sum_m\pi^A_mf_m$ and (i) yield \eqref{eq:Gshare}.
\end{proof}

\begin{proof}[Proof of Lemma~\ref{lem:gt-unique}]
Suppose $\theta_1\neq\theta_2$ are both ground truths, and fix any $i\in N$. Since $\theta_1$ is a ground truth, $f_i\notin\bigcup_{\theta\neq\theta_1}\mathcal T(\theta)$. This union contains $\mathcal T(\theta_2)$, so $f_i\notin\mathcal T(\theta_2)$. Since $\theta_2$ is a ground truth, $f_i\in\mathcal T(\theta_2)$. This is a contradiction.
\end{proof}

\begin{proof}[Proof of Theorem~\ref{thm:rescue-corrupt}] 
Belief-only learning is the dynamics \eqref{eq:S1}-\eqref{eq:S3} on $(A,I)$, and interpretation sharing is the same dynamics on $(A,A')$, both from the same initial conditions satisfying (A2). Let $i$ and $j\neq i$ be the two agents in the hypothesis, so $n\ge2$.

\emph{Step 1 (the gaps in terms of $\Delta$).} For $\theta\neq\theta^\star$, the gaps $G^{NS}_\theta$ and $G_\theta$ are given by \eqref{eq:GNS} and \eqref{eq:Gshare} of Lemma~\ref{lem:Delta}(iii).

\emph{Step 2 (from the signs of the gaps to beliefs).} Fix any initial beliefs satisfying (A2) and any $k\in N$.

\emph{Sharing.} If $G_\theta>0$ for every $\theta\neq\theta^\star$, then $\Theta^*=\{\theta^\star\}$. Corollary~\ref{cor:singleton} gives $\mu_{k,t}(\theta^\star)\to1$ a.s. If $G_\theta<0$ for some $\theta$, then $\theta^\star\notin\Theta^*$, and Theorem~\ref{thm:consistency} gives $\mu_{k,t}(\theta^\star)\to0$ a.s.

\emph{Belief-only.} By Remark~\ref{rem:fixed-profile}, $\mu^{NS}_{k,t}(\theta)\to0$ a.s.\ for every $\theta\notin\Theta^{NS}$.
If $G^{NS}_\theta>0$ for every $\theta\neq\theta^\star$, then $\Theta^{NS}=\{\theta^\star\}$. The finitely many beliefs $\mu^{NS}_{k,t}(\theta)$, $\theta\neq\theta^\star$, tend to $0$ a.s. Since beliefs sum to one, $\mu^{NS}_{k,t}(\theta^\star)\to1$ a.s. If $G^{NS}_\theta<0$ for some $\theta$, then $\theta^\star\notin\Theta^{NS}$ and $\mu^{NS}_{k,t}(\theta^\star)\to0$ a.s.

\emph{It therefore suffices to find networks $A,A'$ satisfying (A1) such that:}
\begin{itemize}
  \item in case (i), $G^{NS}_\theta<0$ for some $\theta\neq\theta^\star$ and $G_\theta>0$ for all $\theta\neq\theta^\star$;
  \item in case (ii), $G^{NS}_\theta>0$ for all $\theta\neq\theta^\star$ and $G_\theta<0$ for some $\theta\neq\theta^\star$.
\end{itemize}

\emph{Step 3 (a family of networks).}
For $\eta\in(0,1)$ let $u:=\one/n$ and

\begin{equation*}
    \pi^A(\eta):=(1-\eta)e_i+\eta u,\qquad \pi^{A'}(\eta):=(1-\eta)e_j+\eta u,
\end{equation*}

with $e_i,e_j$ the unit vectors. Every entry is at least $\eta/n>0$, so both lie in $\ri\Delta(N)$. Define

\begin{equation*}
    A_\eta:=\tfrac12I+\tfrac12\one\,\pi^A(\eta)^\top,\qquad
    A'_\eta:=\tfrac12I+\tfrac12\one\,\pi^{A'}(\eta)^\top .
\end{equation*}

We check (A1) for $M=\tfrac12I+\tfrac12\one\pi^\top$ with $\pi\in\ri\Delta(N)$:

\begin{itemize}
  \item every entry is positive and each row sums to $\tfrac12+\tfrac12\sum_l\pi_l=1$, so $M$ is row-stochastic and primitive;
  \item $\pi^\top M=\tfrac12\pi^\top+\tfrac12(\pi^\top\one)\pi^\top=\pi^\top$, so $\pi$ is the (unique) stationary distribution;
  \item $M\one=\one$ and $Mv=\tfrac12v$ for every $v$ with $\pi^\top v=0$, so $\rho_M=\tfrac12\in(0,1)$.
\end{itemize}

Hence $(A_\eta,A'_\eta)$ satisfy (A1) with stationary distributions $\pi^A(\eta)$ and $\pi^{A'}(\eta)$. Denote by $G^{NS}_\theta(\eta)$ and $G_\theta(\eta)$ the gaps \eqref{eq:GNS} and \eqref{eq:Gshare} for these networks.

\emph{Step 4 (the limit $\eta\to0$).}
As $\eta\to0$, $\pi^A_m(\eta)\to\mathbb 1\{m=i\}$ and $\pi^{A'}_k(\eta)\to\mathbb 1\{k=j\}$. Hence, by \eqref{eq:GNS},

\begin{equation*}
    G^{NS}_\theta(\eta)\to\Delta_i(\theta;f_i),
\end{equation*}

and the double sum in \eqref{eq:Gshare} tends to $\Delta_j(\theta;f_i)$. For the normalizers,

\begin{equation*}
    \Psi^\theta_\infty(\eta)=\sum_s\prod_kl_{k,0}(s\mid\theta)^{\pi^{A'}_k(\eta)} .
\end{equation*}

Each factor $x^{a}$ with $x=l_{k,0}(s\mid\theta)>0$ is continuous in the exponent $a$. So $\Psi^\theta_\infty(\eta)\to\sum_sl_{j,0}(s\mid\theta)=1$, and $\log\Psi^\theta_\infty(\eta)\to0$ for every $\theta$. Therefore

\begin{equation*}
    G^{NS}_\theta(\eta)\to\Delta_i(\theta;f_i),\qquad G_\theta(\eta)\to\Delta_j(\theta;f_i),\qquad\theta\neq\theta^\star .
\end{equation*}

\emph{Step 5 (conclusion).}
\begin{itemize}
  \item In case (i), agent $i$ misreads $f_i$: $\Delta_i(\theta_1;f_i)<0$ for some $\theta_1\neq\theta^\star$. Agent $j$ reads $f_i$ correctly: $\Delta_j(\theta;f_i)>0$ for all $\theta\neq\theta^\star$.
  \item In case (ii), $\Delta_i(\theta;f_i)>0$ for all $\theta\neq\theta^\star$ and $\Delta_j(\theta_1;f_i)<0$ for some $\theta_1\neq\theta^\star$.
\end{itemize}

By Step~4, each of these finitely many strict inequalities also holds for $G^{NS}_\theta(\eta)$ and $G_\theta(\eta)$ once $\eta$ is small enough. Taking the smallest of the finitely many thresholds, all of them hold simultaneously for some $\eta\in(0,1)$. The networks $(A_\eta,A'_\eta)$ then satisfy the conditions at the end of Step~2, which proves (i) and (ii).
\end{proof}

\begin{remark}\label{rem:rc-scope}
The proof uses only the signs of $\Delta_i(\theta;f_i)$ and $\Delta_j(\theta;f_i)$. Coherence and the ground-truth property (Definition~\ref{def:coherent}) serve only to identify $\theta^\star$ as the true hypothesis (unique by Lemma~\ref{lem:gt-unique}). The construction takes $\pi^A,\pi^{A'}$ near unit vectors, which suffices for existence. For arbitrary networks, Step~2 with \eqref{eq:GNS} and \eqref{eq:Gshare} shows that sharing rescues whenever $\min_{\theta\neq\theta^\star}G^{NS}_\theta<0<\min_{\theta\neq\theta^\star}G_\theta$, and corrupts whenever $\min_{\theta\neq\theta^\star}G_\theta<0<\min_{\theta\neq\theta^\star}G^{NS}_\theta$.
\end{remark}

\begin{proof}[Proof of Corollary~\ref{cor:rc-common}]
Let $f_1=\dots=f_n=:f$, let $a$ be accurate and $b$ misreading.

Then $a\neq b$: accuracy requires $\Delta_a(\theta;f)>0$ for all $\theta\neq\theta^\star$, which excludes $\Delta_a(\theta;f)<0$ for some $\theta$.

Since $f_a=f_b=f$:

\begin{itemize}
  \item agent $b$ misreads its own data and agent $a$ reads $f_b=f_a$ correctly, so Theorem~\ref{thm:rescue-corrupt}(i) applies with $i=b$, $j=a$;
  \item agent $a$ reads its own data correctly and agent $b$ misreads $f_a=f_b$, so Theorem~\ref{thm:rescue-corrupt}(ii) applies with $i=a$, $j=b$.
\end{itemize}
\end{proof}

\bibliographystyle{abbrv}
\bibliography{bibliography}

\end{document}